\documentclass[12pt,draftclsnofoot,onecolumn]{IEEEtran}
\usepackage{amsthm}
\newtheorem{definition}{Definition}[section]
\newtheorem{proposition}{Proposition}[section]
\newtheorem{theorem}{Theorem}[section]
\newtheorem{lemma}[theorem]{Lemma}

\newtheorem{example}[theorem]{Example}

\usepackage{amssymb}
\usepackage{amsmath}
\usepackage{cancel}
\usepackage{algorithm}
\usepackage{algpseudocode}
\usepackage{graphicx}
\usepackage{subcaption}
\usepackage{float} 
\usepackage{enumitem}
\usepackage{tikz-cd}
\usepackage{booktabs}
\usepackage{comment}

\ifCLASSINFOpdf
  % \usepackage[pdftex]{graphicx}
  % declare the path(s) where your graphic files are
  % \graphicspath{{../pdf/}{../jpeg/}}
  % and their extensions so you won't have to specify these with
  % every instance of \includegraphics
  % \DeclareGraphicsExtensions{.pdf,.jpeg,.png}
\else
  % or other class option (dvipsone, dvipdf, if not using dvips). graphicx
  % will default to the driver specified in the system graphics.cfg if no
  % driver is specified.
  % \usepackage[dvips]{graphicx}
  % declare the path(s) where your graphic files are
  % \graphicspath{{../eps/}}
  % and their extensions so you won't have to specify these with
  % every instance of \includegraphics
  % \DeclareGraphicsExtensions{.eps}
\fi
\hyphenation{op-tical net-works semi-conduc-tor}

\begin{document}
%
% paper title
% Titles are generally capitalized except for words such as a, an, and, as,
% at, but, by, for, in, nor, of, on, or, the, to and up, which are usually
% not capitalized unless they are the first or last word of the title.
% Linebreaks \\ can be used within to get better formatting as desired.
% Do not put math or special symbols in the title.
\title{Dual Riemannian Newton Method\\ on Statistical Manifolds}

% author names and affiliations
% transmag papers use the long conference author name format.

\author{\IEEEauthorblockN{Derun Zhou\IEEEauthorrefmark{1}\IEEEauthorrefmark{3},\;
Keisuke Yano\IEEEauthorrefmark{2}\IEEEauthorrefmark{3},\;
~\IEEEmembership{Member,~IEEE},
Mahito Sugiyama\IEEEauthorrefmark{1}\IEEEauthorrefmark{3},\;
~\IEEEmembership{Member,~IEEE}},\\
\IEEEauthorblockA{\IEEEauthorrefmark{1} National Institute of Informatics, Tokyo, 101-8430, Japan}\\
\IEEEauthorblockA{\IEEEauthorrefmark{2}The Institute of Statistical Mathematics, Tokyo, 190-8562, Japan}\\
\IEEEauthorblockA{\IEEEauthorrefmark{3} The Graduate University for Advanced Studies, SOKENDAI, Kanagawa 240-0193, Japan}

%\thanks{Manuscript received December 1, 2012; revised August 26, 2015. 
%Corresponding author: Derun Zhou (email:zhouderun@nii.ac.jp).}}

Corresponding author: Derun Zhou (email:zhouderun@nii.ac.jp).}

% The paper headers

%\markboth{Journal of \LaTeX\ Class Files,~Vol.~14, No.~8, August~2015}%

%{Shell \MakeLowercase{\textit{et al.}}: Bare Demo of IEEEtran.cls for IEEE Transactions on Magnetics Journals}

% The only time the second header will appear is for the odd numbered pages
% after the title page when using the twoside option.
% 
% *** Note that you probably will NOT want to include the author's ***
% *** name in the headers of peer review papers.                   ***
% You can use \ifCLASSOPTIONpeerreview for conditional compilation here if
% you desire.

% If you want to put a publisher's ID mark on the page you can do it like
% this:
%\IEEEpubid{0000--0000/00\$00.00~\copyright~2015 IEEE}
% Remember, if you use this you must call \IEEEpubidadjcol in the second
% column for its text to clear the IEEEpubid mark.

% use for special paper notices
%\IEEEspecialpapernotice{(Invited Paper)}

% for Transactions on Magnetics papers, we must declare the abstract and
% index terms PRIOR to the title within the \IEEEtitleabstractindextext
% IEEEtran command as these need to go into the title area created by
% \maketitle.
% As a general rule, do not put math, special symbols or citations
% in the abstract or keywords.
\IEEEtitleabstractindextext{%
\begin{abstract}
In probabilistic modeling, parameter estimation is commonly formulated 
as a minimization problem on a parameter manifold. Optimization in such spaces requires \emph{geometry-aware methods} 
that respect the underlying information structure. While the \emph{natural gradient} leverages the \emph{Fisher information metric} as a form of Riemannian gradient descent, it remains a \emph{first-order method} and often exhibits slow convergence near optimal solutions. Existing second-order manifold algorithms typically rely on the Levi-Civita connection, thus overlooking the dual-connection structure that is central to information geometry. We propose the \emph{dual Riemannian Newton method}, a Newton-type optimization algorithm on manifolds endowed with a metric and a pair of dual affine connections. The dual Riemannian Newton method explicates how duality shapes second-order updates: when the retraction (a local surrogate of the exponential map) is defined by one connection, the associated Newton equation is posed with its dual. We establish local quadratic convergence and validate the theory with experiments on representative statistical models. Thus, the dual Riemannian Newton method thus delivers second-order efficiency while remaining compatible with the dual structures that underlie modern information-geometric learning and inference.
\end{abstract}

% Note that keywords are not normally used for peerreview papers.
\begin{IEEEkeywords}
Differential geometry; information geometry; projection problem; Riemannian optimization; second-order methods; statistical inference
\end{IEEEkeywords}}

% make the title area
\maketitle

% To allow for easy dual compilation without having to reenter the
% abstract/keywords data, the \IEEEtitleabstractindextext text will
% not be used in maketitle, but will appear (i.e., to be "transported")
% here as \IEEEdisplaynontitleabstractindextext when the compsoc 
% or transmag modes are not selected <OR> if conference mode is selected 
% - because all conference papers position the abstract like regular
% papers do.
\IEEEdisplaynontitleabstractindextext
% \IEEEdisplaynontitleabstractindextext has no effect when using
% compsoc or transmag under a non-conference mode.

% For peer review papers, you can put extra information on the cover
% page as needed:
% \ifCLASSOPTIONpeerreview
% \begin{center} \bfseries EDICS Category: 3-BBND \end{center}
% \fi
%
% For peerreview papers, this IEEEtran command inserts a page break and
% creates the second title. It will be ignored for other modes.
\IEEEpeerreviewmaketitle

\section{Introduction}
In classical minimization over Euclidean spaces, the negative gradient provides the steepest descent direction, which means that first-order methods such as gradient descent are ubiquitous. To accelerate convergence, higher-order schemes with superlinear or even quadratic rates, most notably Newton's method and trust-region methods, leverage second-order information to refine both directions and step sizes~\cite{sun2006}.

However, in many applications, the ambient space of the minimization is not Euclidean, but instead a manifold endowed with a nontrivial geometry. Typical examples include orthogonality/subspace constraints such as the Stiefel/Grassmann manifolds~\cite{Edelman1998}, and the cone of symmetric positive definite (SPD) matrices~\cite{Pennec2006}. In these settings, the steepest descent direction is not the Euclidean negative gradient; it is the \emph{Riemannian negative gradient}, defined by the underlying Riemannian metric. This viewpoint leads to Riemannian gradient descent~\cite{Boumal2023}. In statistical manifolds, choosing the Fisher information matrix as the metric specializes Riemannian gradient descent to the well-known \emph{natural gradient} method~\cite{Amari1998Natural} also known as Fisher's scoring \cite{Nicholas1987Biometrika}. While Riemannian gradient descent respects the geometry, it remains a first-order scheme and may converge slowly on ill-conditioned problems (c.f.,~\cite{Absil2008}). To obtain superlinear convergence on manifolds, higher-order methods such as Riemannian Newton and Riemannian trust region methods have been developed, primarily under the Levi--Civita connection~\cite{Boumal2023}, and have seen success in minimization tasks like fixed-rank matrix completion~\cite{Vandereycken2013} and independent component analysis~\cite{sato2017riemannian}.

Parameter estimation in probabilistic modeling is typically formulated as a minimization problem over a probabilistic model; for instance, minimizing the negative log-likelihood or a divergence between the model and data distributions. Higher-order manifold optimization methods have been widely used for parameter estimation of probabilistic models, 
although they have mainly been restricted to distributions with specific structural properties, such as Gaussian (mixture) models with special covariance structures, to reduce the number of interaction steps during optimization. These problems can be reformulated as Riemannian optimization tasks on the SPD~\cite{Dryden2009}, Stiefel, or Grassmann manifolds equipped with the Levi--Civita connection~\cite{Zhou2021,sembach2022riemannian}. Furthermore, second-order Newton methods have also been developed for the \emph{exponential family} of distributions by equipping the statistical manifold with a pair of dual affine connections, the $\mathrm{e}$-connection and the $\mathrm{m}$-connection, under which the manifold becomes \emph{dually flat}~\cite{malago2015second}.

%In this paper, we further generalize this line of research by introducing a Newton-type optimization framework on Riemannian manifolds endowed with arbitrary dual affine connections. Unlike the Levi--Civita setting or the $\mathrm{e}$/$\mathrm{m}$-connection setting (restricted to the exponential family), the proposed method is applicable to minimization problems involving general probability distributions. 
%Concretely, we consider the quadruple
%$(\mathcal{M},\, g,\, \nabla,\, \nabla^{*}),$
%where \( \mathcal{M} \) is the parameter manifold, \( g \) is the chosen metric (e.g., Fisher information for statistical models), and \( \nabla \) and \( \nabla^* \) are an arbitrary pair of dual affine connections (e.g., $\alpha$-connections). 

 From the view of information geometry, in the minimization problem over a probabilistic model, the \emph{natural gradient} method arises naturally as a first-order optimization method. However, when extending to second-order methods, it is necessary to compute the Hessian of the objective function. Unlike the Hessian defined via the Levi--Civita connection on standard manifolds such as the Stiefel or Grassmann manifold, the notion of the Hessian is not straightforward on a general statistical manifold. This difficulty motivates the use of dual affine connections, which provide a natural way of defining second-order structures in probabilistic modeling. 

 In this paper, we further generalize this line of research by introducing a Newton-type optimization framework on Riemannian manifolds endowed with \emph{arbitrary dual affine connections}. Unlike the Levi--Civita or the $\mathrm{e}$/$\mathrm{m}$-connection setting (restricted to the exponential family), the proposed method, called the \emph{dual Riemannian Newton method}, is applicable to minimization problems involving \emph{general probability distributions}. Specifically, we consider the quadruple
\((\mathcal{M},\, g,\, \nabla,\, \nabla^{*})\),
where \( \mathcal{M} \) is the parameter manifold, \( g \) is the chosen Riemannian metric (for example, the Fisher information metric for statistical models), and \( \nabla \) and \( \nabla^* \) are an arbitrary pair of dual affine connections (for example, $\alpha$-connections). This formulation provides a unified geometric framework that allows Newton-type methods to be applied beyond the exponential family, enabling efficient second-order optimization directly on statistical manifolds.

We summarize our contribution in the following:
\begin{enumerate}
 \item We propose a Newton-type optimization framework that can be applied to general minimization problems arising in parameter estimation for arbitrary probability distributions in Section~\ref{dual_newtown}. 
 \item We establish the following theoretical properties of the proposed method: 
 (i) Under a $\nabla$-based second-order retraction, the Newton equation must be solved using the dual connection $\nabla^{*}$ rather than within the $\nabla$ connection in Algorithm~\ref{alg: dual riemannian newton};
 (ii) On dually flat manifolds and in affine coordinates, each step of the proposed dual Riemannian Newton method reduces to the Euclidean Newton step in Theorem~\ref{euc_hession};
 % --- the update corresponds to the standard Hessian solve using only second derivatives of the objective function;
 (iii) For classical $\nabla$-projection problems in information geometry, the dual Riemannian Newton method steps coincide with the natural-gradient updates in affine coordinates in Theorem~\ref{natural_equal_newton};
 (iv) The dual Riemannian Newton method achieves local quadratic convergence in Theorem~\ref{convergence}.
 \item We empirically demonstrate that the dual Riemannian Newton method achieves local quadratic convergence and show its superiority compared to the natural gradient and Adam methods~\cite {Kingma2014} in Section~\ref{chap6}.
\end{enumerate}

The rest of the paper is summarized as follows.
In Section~\ref{chap2}, we review the necessary mathematical background. Section~\ref{dual_newtown} introduces the dual Riemannian Newton method and its practical implementation. Section~\ref{chap4} develops its theoretical properties on dually flat manifolds in affine coordinates. Section~\ref{chap5} proves local quadratic convergence of the dual Riemannian Newton method. Section~\ref{chap6} presents experiments verifying this convergence and demonstrating advantages over the natural gradient and the Adam methods.

\section{Preliminaries}
\label{chap2}
We introduce the notations and provide a summary of the mathematical background relevant to this paper.
We begin by introducing the dual affine connections, along with the Riemannian gradient and retraction.  
An affine connection is formally defined in Definition~\ref{def:affine-connection} in the Appendix. 

\begin{definition}[Dual affine Connection and dualistic Structure]\label{dual_affine}
Let \( (\mathcal{M}, g) \) be a Riemannian manifold, 
\( \langle \cdot, \cdot \rangle \) the Riemannian inner product induced by a Riemannian metric \( g \), 
and \( \nabla \) an affine connection on \( \mathcal{M} \).
The \emph{dual affine connection} \( \nabla^* \) with respect to \( g \) is the unique affine connection that satisfies
\begin{equation}
X \langle Y, Z \rangle
= \langle \nabla_X Y,\; Z \rangle
+ \langle Y,\; \nabla^*_X Z \rangle
\label{eq:dual-connection}
\end{equation}
for all vector fields \( X, Y, Z \in \mathfrak{X}(\mathcal{M}) \).
If \( \nabla \) and \( \nabla^* \) satisfy the above duality condition \eqref{eq:dual-connection}, then 
\((g, \nabla, \nabla^*)\) is called a \emph{dualistic structure} (or dual affine structure) on \( \mathcal{M} \).
Moreover, the dual connection determined by this condition satisfies
\begin{equation}
(\nabla^*)^* = \nabla.
\label{eq:dual-of-dual}
\end{equation}
\end{definition}

Throughout this paper, we assume that both \( \nabla \) and \( \nabla^* \) are \emph{torsion-free}, an assumption that is usually satisfied in information geometry \cite{AmariNagaoka2000}: that is,
\begin{equation}
\nabla_X Y - \nabla_Y X = [X, Y],
\qquad
\nabla^*_X Y - \nabla^*_Y X = [X, Y],
\end{equation}
for any vector fields \( X, Y \in \mathfrak{X}(\mathcal{M}) \).

A concrete example illustrating the dual affine connection on a probabilistic model is $\alpha$-connection presented in the following.
Other examples include $q$-divergence \cite{amariohara2011} and the logarithmic divergence \cite{wong2018}.
\begin{example}[$\alpha$-geometry on a statistical manifold]
\label{alpha_geometry}
Let \( p(x; \xi) \) be the \emph{probability density function (PDF)} of a random variable \( x \), parameterized by \( \xi =(\xi_{1},\ldots,\xi_{n})^{\top}\). We denote the log-likelihood function by \( \ell(\xi) = \ln p(x; \xi) \). The Riemannian metric on the parameter space, induced by the Fisher information, is given by
\[
g_{ij}(\xi) = \mathbb{E}\bigl[\partial_i \ell(\xi)\,\partial_j \ell(\xi)\bigr],\quad 1\le i,j\le n,
\]
where \( \partial_i = \frac{\partial}{\partial \xi_i} \).

Within the framework of $\alpha$-geometry, the inner product of vector fields satisfies the following duality relation:
\[
X\langle Y, Z\rangle
= \langle \nabla^{(\alpha)}_X Y, Z \rangle 
+ \langle Y, \nabla^{(-\alpha)}_X Z \rangle,
\]
where \( \nabla^{(\alpha)} \) and \( \nabla^{(-\alpha)} \) form a dual pair of affine connections.
The Christoffel symbols of the $\alpha$-connection in local coordinates are given by~\cite{AmariNagaoka2000}
\[
\langle \nabla^{(\alpha)}_{\partial_i} \partial_j, \partial_k \rangle 
= \Gamma^{(\alpha)}_{ij,k}(\xi)
:= \mathbb{E}\Bigl[\bigl( \partial_i \partial_j \ell 
+ \tfrac{1-\alpha}{2}\,\partial_i \ell\, \partial_j \ell \bigr)\, \partial_k \ell \Bigr].
\]
These coefficients are symmetric in the lower indices,
\(
\Gamma^{(\alpha)}_{ij,k}(\xi) = \Gamma^{(\alpha)}_{ji,k}(\xi),
\)
which implies that the $\alpha$-connection is \emph{torsion-free}.
Moreover, raising an index using \( g^{ks}(\xi) \) yields
\[
\Gamma^{(\alpha)s}_{ij}(\xi) 
= \sum_{k=1}^{n} \Gamma^{(\alpha)}_{ij,k}(\xi)\, g^{ks}(\xi).
\]

The pair \( (\nabla^{(\alpha)}, \nabla^{(-\alpha)}) \) is \emph{dual} with respect to the Riemannian metric \( g \). Notably, when \( \alpha = 0 \), the $\alpha$-connection coincides with the \emph{Levi-Civita connection}, the unique torsion-free and metric-compatible connection in Riemannian geometry. In information geometry, the special cases \( \alpha = 1 \) and \( \alpha = -1 \) correspond to the \emph{e-connection} and \emph{m-connection}, respectively, which play central roles in the geometry of exponential and mixture families.
\end{example}

%Example~\ref{alpha_geometry} in the Appendix.

Subsequently, we introduce the definition of the Riemannian gradient.

\begin{definition}[Riemannian Gradient]\label{def:riemannian-gradient}
Let $(\mathcal{M}, g)$ be a Riemannian manifold and let $f: \mathcal{M} \to \mathbb{R}$ be a smooth real-valued function.  
For any point $p \in \mathcal{M}$, the \emph{Riemannian gradient} of $f$ at $p$, denoted by $\mathrm{grad}\,f(p)$, is the unique vector in the tangent space $T_p\mathcal{M}$ at $p$ that satisfies
\[
Df(p)[X_p] = \left\langle \mathrm{grad}\,f(p),\; X_p \right\rangle
\quad \text{for all } X_p \in T_p\mathcal{M},
\]
where \( Df(p) \) is the differential of \( f \) at \( p \).
\end{definition}

Here, the differential \( Df(p) \) of a smooth function \( f: \mathcal{M} \to \mathbb{R} \) at a point \( p \in \mathcal{M} \) is defined as
\[
Df(p)[X_p] = X_p(f)
\]
for all \( X_p \in T_p\mathcal{M} \), where \( X_p(f) \) denotes the directional derivative of \( f \) at \( p \) in the direction \( X_p \).

We next define the notion of a retraction on a manifold.
\begin{definition}[Retraction on a manifold]\label{def:retraction}
Let \( \mathcal{M} \) be a smooth manifold.  
A smooth mapping \( R: T\mathcal{M} \to \mathcal{M} \), where \( T\mathcal{M}:=\bigsqcup_{p\in\mathcal{M}}T_{p}\mathcal{M} \) denotes the tangent bundle of \( \mathcal{M} \), is called a \emph{retraction} if for any \( p \in \mathcal{M} \), the restriction  
\[
R_p := R|_{T_p\mathcal{M}} : T_p\mathcal{M} \to \mathcal{M}
\]
satisfies the following two conditions:
\begin{enumerate}
    \item \( R_p(0_p) = p \) with the zero vector \( 0_p \in T_p\mathcal{M} \) at \( p \), and 
    \item the differential of \( R_p \) at \( 0_p \), denoted by \( DR_p(0_p) \), satisfies
\[
DR_p(0_p)[X_p] = X_p \quad \text{for all } X_p \in T_p\mathcal{M},
\]
that is, \( DR_p(0_p) = \mathrm{id}_{T_p\mathcal{M}} \).
\end{enumerate}
\end{definition}
For the definition of the differential of a smooth map between manifolds, please refer to Definition~\ref{def:manifold-map-differential} in the Appendix.

\section{Dual Riemannian Newton method}
\label{dual_newtown}
In this section,
we introduce our proposed method, called the \emph{dual Riemannian Newton method}, which is a second-order optimization technique defined on a manifold equipped with dual affine connections. 

\subsection{Dual Riemannian Hessian}
We begin by introducing an important ingredient of the proposed method,
 the \emph{dual Riemannian Hessian} defined on a dualistic structure $(g, \nabla, \nabla^*)$ as well as its symmetry. 

\begin{definition}[Dual Riemannian Hessian]\label{def:hessian}
Let $f: \mathcal{M} \to \mathbb{R}$ be a smooth function on a Riemannian manifold $\mathcal{M}$, and let $\nabla$ and $\nabla^*$ be a pair of dual affine connections with respect to the metric $g$.
The \emph{Hessian} of $f$ with respect to $\nabla$ at a point $p \in \mathcal{M}$, denoted by $\mathrm{Hess}\,f(p)$, is a linear map $\mathrm{Hess}\,f(p): T_p\mathcal{M} \to T_p\mathcal{M}$ given as
\[
\mathrm{Hess}\, f(p)[X_p] = \nabla_{X_p} \mathrm{grad}\, f
\]
for any $X_p \in T_p\mathcal{M}$, where $\mathrm{grad}\, f$ is the Riemannian gradient of $f$ with respect to the metric $g$.
Similarly, the \emph{dual Hessian} $\mathrm{Hess}^* f(p): T_p\mathcal{M} \to T_p\mathcal{M}$ of $f$ with respect to $\nabla^*$ at the point $p$ is given as
\[
\mathrm{Hess}^* f(p)[X_p] = \nabla^*_{X_p} \mathrm{grad}\, f.
\]
\end{definition}

\begin{proposition}[Symmetry of the Hessian]\label{prop:hessian-symmetry}
Let \( f : \mathcal{M} \to \mathbb{R} \) be a smooth  function defined on a Riemannian manifold 
\(\bigl( \mathcal{M}, g, \nabla, \nabla^* \bigr)\), 
where \( \nabla \) and \( \nabla^* \) are dual affine connections. Then both linear maps $\mathrm{Hess}\,f(p)$ and $\mathrm{Hess}^* f(p)$ are symmetric at any point $p \in \mathcal{M}$; that is, for any $X_p, Y_p \in T_p \mathcal{M}$, the following equations hold.
\begin{align*}
\langle \mathrm{Hess}\,f(p)[X_p],\; Y_p \rangle = \langle X_p,\; \mathrm{Hess}\,f(p)[Y_p] \rangle, \quad
\langle \mathrm{Hess}^* f(p)[X_p],\; Y_p \rangle = \langle X_p,\; \mathrm{Hess}^* f(p)[Y_p] \rangle.
\end{align*}
\end{proposition}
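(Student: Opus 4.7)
The plan is to exploit the duality relation \eqref{eq:dual-connection} together with the defining property of the Riemannian gradient and the torsion-freeness of both connections. The trick is that although $\mathrm{Hess}\,f$ is built from $\nabla$, its symmetry will naturally invoke the \emph{dual} connection $\nabla^*$, and symmetrically for $\mathrm{Hess}^*\!f$. Since both quantities in the proposition are tensorial in $X_p,Y_p$, I would extend $X_p,Y_p$ to arbitrary smooth vector fields $X,Y$ in a neighborhood of $p$ and prove the identities pointwise at $p$.

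For the first identity, I would start from
\[
\langle \mathrm{Hess}\,f(p)[X_p],Y_p\rangle = \langle \nabla_X\mathrm{grad}\,f, Y\rangle\big|_p,
\]
apply the duality relation to $X\langle \mathrm{grad}\,f, Y\rangle$ so that
\[
\langle \nabla_X\mathrm{grad}\,f, Y\rangle = X\langle \mathrm{grad}\,f, Y\rangle - \langle \mathrm{grad}\,f, \nabla^*_X Y\rangle,
\]
and then use Definition~\ref{def:riemannian-gradient} to replace $\langle \mathrm{grad}\,f, Y\rangle$ by $Y(f)$ and $\langle \mathrm{grad}\,f, \nabla^*_X Y\rangle$ by $(\nabla^*_X Y)(f)$. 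This yields the clean coordinate-free expression
\[
\langle \mathrm{Hess}\,f(p)[X_p],Y_p\rangle = X(Yf)\big|_p - (\nabla^*_X Y)(f)\big|_p.
\]
Subtracting the analogous expression with $X$ and $Y$ swapped gives
\[
X(Yf) - Y(Xf) - \bigl(\nabla^*_X Y - \nabla^*_Y X\bigr)(f) = [X,Y](f) - [X,Y](f) = 0,
\]
where the last step uses the torsion-freeness of $\nabla^*$. The same argument, with the roles of $\nabla$ and $\nabla^*$ interchanged and invoking the torsion-freeness of $\nabla$, gives the symmetry of $\mathrm{Hess}^*\!f(p)$.

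There is no real obstacle: the only subtle point is conceptual, namely that the symmetry of a $\nabla$-Hessian requires torsion-freeness of the \emph{dual} connection $\nabla^*$ (not of $\nabla$ itself), which is precisely why the blanket assumption that both $\nabla$ and $\nabla^*$ are torsion-free matters. I would make this observation explicit in the write-up, since it foreshadows the paper's main message that $\nabla$-based retractions must be paired with $\nabla^*$-based Newton equations. Finally, I would note that the identity is tensorial (no dependence on the chosen extensions of $X_p, Y_p$), which is evident from the final expression since both $X(Yf)-(\nabla^*_XY)(f)$ and its symmetric counterpart depend only on the pointwise values of the vector fields at $p$ after the difference is taken.
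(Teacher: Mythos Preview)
Your proposal is correct and follows essentially the same route as the paper: apply the duality relation to rewrite $\langle \nabla_X\mathrm{grad}\,f,Y\rangle$ as $X(Yf)-(\nabla^*_XY)(f)$, subtract the version with $X,Y$ swapped, and cancel using torsion-freeness of $\nabla^*$. Your added remarks---extending $X_p,Y_p$ to vector fields, noting tensoriality, and emphasizing that symmetry of the $\nabla$-Hessian rests on torsion-freeness of the \emph{dual} connection---are all sound and in fact make the argument slightly more explicit than the paper's own write-up.
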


\begin{proof}
For smooth vector fields \( X, Y \in \mathfrak{X}(\mathcal{M}) \),
since both \( \nabla \) and \( \nabla^* \) are \emph{torsion-free}; that is,
\begin{equation}
\nabla_X Y - \nabla_Y X = [X, Y],
\qquad
\nabla^*_X Y - \nabla^*_Y X = [X, Y],
\end{equation}
these vector fields satisfy $$X(p) = X_p, \quad Y(p) = Y_p$$
for all $p \in \mathcal{M}$.
According to Defination~\ref{dual_affine} in the preliminaries, we obtain
\begin{align*}
&\left\langle \nabla_X \operatorname{grad} f,\; Y \right\rangle 
- \left\langle \nabla_Y \operatorname{grad} f,\; X \right\rangle \\
=\; &X \left\langle \operatorname{grad} f,\; Y \right\rangle 
- \left\langle \operatorname{grad} f,\; \nabla^*_X Y \right\rangle 
- Y \left\langle \operatorname{grad} f,\; X \right\rangle 
+ \left\langle \operatorname{grad} f,\; \nabla^*_Y X \right\rangle \\
=\; &X(Yf) - Y(Xf) 
- \left\langle \operatorname{grad} f,\; \nabla^*_X Y - \nabla^*_Y X \right\rangle \\
=\; &[X, Y]f - [X, Y]f = 0.
\end{align*}
Similarly, for the dual connection $\nabla^*$ we have
\[
\left\langle \nabla^*_X \operatorname{grad} f,\; Y \right\rangle 
- \left\langle \nabla^*_Y \operatorname{grad} f,\; X \right\rangle = 0,
\]
which completes the proof.
\end{proof}

\subsection{Dual Riemannian Newton method}
%The optimization problem on a manifold $\mathcal{M}$ with %$(g, \nabla, \nabla^*)$ can be defined as:
%\begin{equation}
%    \min_{p \in \mathcal{M}} f(p),
%    \label{optimation problem}
%\end{equation}
As mentioned in the introduction, parameter estimation in probabilistic modeling is often formulated as an optimization problem.  
In particular, the parameter space can be modeled as a Riemannian manifold, and the optimization problem takes the form:
\begin{equation}
    \min_{p \in \mathcal{M}} f(p),
    \label{optimization_problem}
\end{equation}
where \( f: \mathcal{M} \to \mathbb{R} \) is a smooth function defined on a smooth Riemannian manifold \( \mathcal{M} \).  
Typical examples include the negative log-likelihood function in maximum likelihood estimation (MLE) or divergence-based loss functions such as the Kullback--Leibler divergence, when estimating parameters of a probabilistic model. In the following, we introduce the dual Riemannian Newton method for solving this optimization problem.

First, we recall the following lemma concerning the covariant derivative; please refer to Definition~\ref{def:covariant-derivative} in the Appendix for its definition.
\begin{lemma}[\cite{AmariNagaoka2000}]\label{prop:covariant-inner}
Consider a manifold $\mathcal{M}$ with $(g, \nabla, \nabla^*)$.
For any smooth curve $\gamma(t)$ and vector fields $Y(t), Z(t)$ along the curve, the following identity holds:
\[
\frac{d}{dt}\left\langle Y,\; Z \right\rangle 
= \left\langle \frac{D}{dt} Y,\; Z \right\rangle 
+ \left\langle Y,\; \frac{D^{*}}{dt} Z \right\rangle.
\]
Here, $\frac{D}{dt} Y$ (resp.\ $\frac{D^*}{dt} Y$) denotes the covariant derivative of $Y(t)$ induced by the affine connection $\nabla$ (resp.\ $\nabla^*$) along the curve $\gamma(t)$.
\end{lemma}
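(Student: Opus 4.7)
The plan is to reduce the identity to a coordinate-level calculation that uses only the pointwise duality relation in Definition~\ref{dual_affine}. Both sides of the claimed equality depend on $Y(t), Z(t), \dot Y(t), \dot Z(t)$ and $\dot\gamma(t)$ at a single value of $t$, so I may work in an arbitrary chart $(\xi^1,\dots,\xi^n)$ around $\gamma(t)$ and expand $Y(t)=\sum_i Y^i(t)\,\partial_i|_{\gamma(t)}$ and $Z(t)=\sum_j Z^j(t)\,\partial_j|_{\gamma(t)}$.

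The first step is to recast the duality condition \eqref{eq:dual-connection} as a Christoffel-symbol identity. Substituting the coordinate vector fields $X=\partial_k$, $Y=\partial_i$, $Z=\partial_j$ into \eqref{eq:dual-connection} gives
\[
\partial_k g_{ij} \;=\; \Gamma_{ki,j} + \Gamma^{*}_{kj,i},
\]
where $\Gamma_{ki,j}=\langle\nabla_{\partial_k}\partial_i,\partial_j\rangle$ and $\Gamma^{*}_{kj,i}=\langle\partial_i,\nabla^{*}_{\partial_k}\partial_j\rangle$. This is the only structural fact from Section~\ref{chap2} that the argument consumes.

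The second step is to expand both sides and match. On the left, the product and chain rules give
\[
\frac{d}{dt}\langle Y,Z\rangle = (\partial_k g_{ij})\,\dot\gamma^k Y^i Z^j + g_{ij}\dot Y^i Z^j + g_{ij} Y^i \dot Z^j.
\]
On the right, using the coordinate formula for covariant derivatives along a curve from Definition~\ref{def:covariant-derivative}, I would write $\frac{DY}{dt}=\dot Y^l\partial_l+\Gamma^{l}_{ki}\dot\gamma^k Y^i\partial_l$ and $\frac{D^{*}Z}{dt}=\dot Z^l\partial_l+\Gamma^{*l}_{kj}\dot\gamma^k Z^j\partial_l$, take inner products with $Z$ and $Y$ respectively, and lower indices via $g$ to obtain
\[
\Bigl\langle \tfrac{DY}{dt},Z\Bigr\rangle + \Bigl\langle Y,\tfrac{D^{*}Z}{dt}\Bigr\rangle = g_{ij}\dot Y^i Z^j + g_{ij} Y^i \dot Z^j + \bigl(\Gamma_{ki,j}+\Gamma^{*}_{kj,i}\bigr)\,\dot\gamma^k Y^i Z^j.
\]
The Christoffel identity from the first step collapses the bracket to $\partial_k g_{ij}$, matching the left-hand side term by term. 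Since the chart was arbitrary and both sides are coordinate-independent scalar quantities, the identity holds along all of $\gamma$.

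I do not anticipate a real obstacle; the sole point of care is the asymmetric index convention forced by \eqref{eq:dual-connection}, namely that the lower pair of $\Gamma$ pairs the curve-velocity slot with the $Y$-slot, while that of $\Gamma^{*}$ pairs it with the $Z$-slot. Once this bookkeeping is correctly aligned, the whole identity is essentially a coordinate-wise chain rule.
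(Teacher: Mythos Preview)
Your proof is correct. The paper does not supply its own argument for this lemma; it simply cites it from Amari--Nagaoka, so there is nothing to compare against beyond noting that your coordinate computation is the standard one. The only subtlety you already flagged---that duality forces $\partial_k g_{ij}=\Gamma_{ki,j}+\Gamma^{*}_{kj,i}$ with the asymmetric placement of the curve-velocity index---is handled correctly, and the term-by-term match goes through as you wrote it.
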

Based on the above lemma, we now derive our proposed dual Riemannian Newton method on a manifold \( (\mathcal{M}, g, \nabla, \nabla^*) \).
We recall a second-order retraction~\cite{Boumal2023} associated with the affine connection \( \nabla \), defined as $R_p(t X_p),$
where \( p \in \mathcal{M} \) and \( X_p \in T_p \mathcal{M} \).  
This retraction satisfies the standard conditions:
\[R_p(0_p) = p \quad\text{and}\quad \mathrm{D}R_p(0_p)[X_p] = X_p.\]
By denoting \( R_p(t X_p) \) := \( \gamma_{p,X_p}(t) \), 
these conditions are written as
\begin{equation}
\gamma_{p,X_p}(0)=p \quad\text{and}\quad \dot{\gamma}_{p,X_p}(t)\Big|_{t=0} =X_p .
\label{conditionforretraction1}
\end{equation}    
Since \( \gamma_{p,X_p}(t) \) is a second-order retraction, it also satisfies the condition:
\begin{equation}
\nabla_{\dot{\gamma}_{p,X_p}(t)} \dot{\gamma}_{p,X_p}(t)\Big|_{t=0} = 0.
\label{conditionforretraction2}
\end{equation}

As the first step, we derive the second-order approximation of $f(\gamma_p(t X_p))$.
By applying Taylor's expansion to approximate the function $f(\gamma_p(t X_p))$, we have
\begin{align}
& f(\gamma_p(t X_p)) \nonumber\\ 
&\approx f(\gamma_p(0_p))
+ t \left. \frac{d}{dt} f(\gamma_p(t X_p)) \right|_{t=0}
+ \frac{1}{2} t^2 \left. \frac{d^2}{dt^2} f(\gamma_p(t X_p)) \right|_{t=0} \notag \\
&= f(p)
+ t \left. \langle \operatorname{grad} f(\gamma_{p,X_p}(t)), \dot{\gamma}_{p,X_p}(t) \rangle \right|_{t=0}
+ \frac{1}{2} t^2 \left. \frac{d}{dt} \langle \dot{\gamma}_{p,X_p}(t), \operatorname{grad} f(\gamma_{p,X_p}(t)) \rangle \right|_{t=0} \notag ,
\end{align}
where the last equality follows from (\ref{conditionforretraction1}).
Together with (\ref{conditionforretraction1}) and (\ref{conditionforretraction2}), we get
\begin{align}
&f(p)
+ t \left. \langle \operatorname{grad} f(\gamma_{p,X_p}(t)), \dot{\gamma}_{p,X_p}(t) \rangle \right|_{t=0}
+ \frac{1}{2} t^2 \left. \frac{d}{dt} \langle \dot{\gamma}_{p,X_p}(t), \operatorname{grad} f(\gamma_{p,X_p}(t)) \rangle \right|_{t=0}\nonumber\\ 
&= f(p)
+ t \langle \operatorname{grad} f(p), X_p \rangle \notag \\
&\quad
+ \frac{1}{2} t^2 \left. \left(
\left\langle \frac{\mathrm{D}}{dt} \dot{\gamma}_{p,X_p}(t), \operatorname{grad} f(\gamma_{p,X_p}(t)) \right\rangle
+ \left\langle \dot{\gamma}_{p,X_p}(t), \frac{\mathrm{D}^*}{dt} \left( \operatorname{grad} f \left(\gamma_{p,X_p}(t)\right) \right) \right\rangle
\right) \right|_{t=0} \notag \\
&= f(p)
+ t \langle \operatorname{grad} f(p), X_p \rangle \notag \\
&\quad
+ \frac{1}{2} t^2 \left. \left(
\left\langle \nabla_{\dot{\gamma}_{p,X_p}(t)} \dot{\gamma}_{p,X_p}(t), \operatorname{grad} f(\gamma_{p,X_p}(t)) \right\rangle
+ \left\langle \dot{\gamma}_{p,X_p}(t), \frac{\mathrm{D}^*}{dt} \left( \operatorname{grad} f\left(\gamma_{p,X_p}(t)\right) \right) \right\rangle
\right) \right|_{t=0} \notag \\
&= f(p)
+ t \langle \operatorname{grad} f(p), X_p \rangle
+ \frac{1}{2} t^2 \langle \mathrm{Hess}^* f(p)[X_p], X_p \rangle.
\label{eq:taylor-expansion}
\end{align}
Therefore, the function \( f(\gamma_p(X_p)) \) can be locally approximated as
\begin{equation}
f(\gamma_p(X_p))
\approx f(p)
+ \left\langle \operatorname{grad} f(p),\, X_p \right\rangle
+ \frac{1}{2} 
  \left\langle \mathrm{Hess}^* f(p)[X_p],\, X_p \right\rangle.
\label{eq:local_approx}
\end{equation}
Subsequently, we have
\[
f(\gamma_p(X_p + tY_p)) \approx  f(p)
+ \left\langle \operatorname{grad} f(p),\, X_p + tY_p \right\rangle
+ \frac{1}{2}
  \left\langle 
      \mathrm{Hess}^* f(p)[X_p + tY_p],\,
      X_p + tY_p
  \right\rangle.
\]

As the second step, we derive Newton's equation with a second-order retraction related to $\nabla$ from the above approximation.
For simplicity, we define
\begin{equation}
\bar{f}(t)
:= f(p)
+ \left\langle \operatorname{grad} f(p),\, X_p + tY_p \right\rangle
+ \frac{1}{2}
  \left\langle 
      \mathrm{Hess}^* f(p)[X_p + tY_p],\,
      X_p + tY_p
  \right\rangle.
\label{eq:fbar_def}
\end{equation}
%and require that it satisfies the stationary condition
%\[
%\left.\frac{d\bar{f}(t)}{dt}\right|_{t=0} = 0,
%\qquad 
%\forall\, Y_p \in T_p\mathcal{M}.
%\]
By applying the self-adjoint property of the dual Hessian stated in Proposition~\ref{prop:hessian-symmetry}
\[
\left\langle \mathrm{Hess}^* f(p)[X_p],\, Y_p \right\rangle 
= 
\left\langle X_p,\, \mathrm{Hess}^* f(p)[Y_p] \right\rangle,
\]
we obtain
\begin{equation}
\begin{aligned}
\left.\frac{d\bar{f}(t)}{dt}\right|_{t=0}
&= 
\left\langle \operatorname{grad} f(p),\, Y_p \right\rangle
+ \left\langle \mathrm{Hess}^* f(p)[X_p],\, Y_p \right\rangle
+ t \left\langle \mathrm{Hess}^* f(p)[Y_p],\, Y_p \right\rangle \Big|_{t=0} \\[4pt]
&=
\left\langle \operatorname{grad} f(p),\, Y_p \right\rangle
+ \left\langle \mathrm{Hess}^* f(p)[X_p],\, Y_p \right\rangle.
\end{aligned}
\end{equation}
By considering the stationary condition
\[\left.\frac{d\bar{f}(t)}{dt}\right|_{t=0} = 0 \quad\text{for all}\quad Y_p \in T_p\mathcal{M},
\]
we arrive at
\begin{equation}
\mathrm{Hess}^* f(p)[X_p] = -\operatorname{grad} f(p),
\label{duality show}
\end{equation}
which corresponds to Newton's equation with a second-order retraction $ R_p(X_p)=\gamma_{p, X_p}(1)$ related to \( \nabla \).
%on the dualistic structure \( (\mathcal{M}, g, %\nabla, \nabla^*) \).
In other words, Equation~\eqref{duality show} implies that, for each point \( p \in \mathcal{M} \),  
the tangent vector \( X_p \in T_p\mathcal{M} \) yielding the next iterate  
\( p \gets R_p(X_p) \)  
can be obtained by solving this equation.

Equation~\eqref{duality show} motivates the Dual Riemannian Newton method presented in Algorithm~\ref{alg: dual riemannian newton}.
Surprisingly, this illustrates the \emph{duality} of Newton's method formulated on \( (\mathcal{M}, g, \nabla, \nabla^*) \).  
Specifically, if we choose the second-order retraction \( \gamma_{p, X_p}(t) \) satisfying 
\(\nabla_{\dot{\gamma}_{p,X_p}(t)} \dot{\gamma}_{p,X_p}(t)\big|_{t=0} = 0,\)
the Newton step must be computed using its dual connection \( \nabla^* \) through Equation~\eqref{duality show}. 
%We summarize the algorithm of the dual Riemannian Newton method in Algorithm~\ref{alg: dual riemannian newton}.

\begin{algorithm}
\caption{Dual Riemannian Newton Method}
\label{alg: dual riemannian newton}
\begin{algorithmic}[1]
\Require \( (\mathcal{M}, g, \nabla, \nabla^*) \), initial point $p \in \mathcal{M}$ and retraction $R_{p}(\cdot)$.
\Ensure Approximate local minimizer of $f : \mathcal{M} \to \mathbb{R}$

% \For{$k = 0, 1, 2, \dots$}
%     \State Compute the Newton direction $X_{p_k} \in T_{p_k}\mathcal{M}$ by solving:
%     \[
%     \mathrm{Hess^*} f(p_k)[X_{p_k}] = -\operatorname{grad} f(p_k)
%     \]
%     \State Update: $p_{k+1} :=R_{p_k}(X_{p_k})$
% \EndFor

\Repeat
    \State Compute the Newton direction $X_{p} \in T_{p}\mathcal{M}$ by solving:
    \[
    \mathrm{Hess^*} f(p)[X_{p}] = -\operatorname{grad} f(p)
    \]
    \State Update: $p \gets R_{p}(X_{p})$
\Until{convergence}

\end{algorithmic}
\end{algorithm}

%In particular, when the manifold is endowed with an $\alpha$-geometry \(
%(\mathcal{M}, g, \nabla^{(\alpha)}, \nabla^{(-\alpha)})
%\) in Example~\ref{alpha_geometry}, choosing a second-order retraction \( \gamma_{p, X_p}(t) \) that satisfies
\(
%\nabla^{(\alpha)}_{\dot{\gamma}_{p,X_p}(t)} \dot{\gamma}_{p,X_p}(t)\big|_{t=0} = 0
\)
%implies that the Newton step must be computed using its dual connection \( \nabla^{(-\alpha)} \) through $\mathrm{Hess}^{(-\alpha)} f(p)[X_p] = -\operatorname{grad} f(p)$.

For the \emph{Riemannian gradient descent method} defined on \( (\mathcal{M}, g, \nabla, \nabla^*) \), it is not necessary to solve the Newton equation~\eqref{duality show}. Instead, Step~2 in Algorithm~\ref{alg: dual riemannian newton} is replaced with
\[
X_{p} = -s\, \operatorname{grad} f(p),
\]
where \( s \) is a learning rate.
Moreover, the \emph{natural gradient method} can be viewed as a special case of the Riemannian gradient descent method when the Riemannian metric is chosen as the Fisher information matrix, while the \emph{mirror descent method}~\cite{RaskuttiMukherjee2015} can be interpreted as the natural gradient method expressed in affine coordinates on a dually flat manifold.
We will elaborate on this relationship in Section \ref{subsec: relationship with projection problem}.

\subsection{Computation of Dual Riemannian Newton method}
We provide an explicit computation of the dual Riemannian Newton method under a local coordinate chart:
\[
\begin{tikzcd}
\mathcal{M} \arrow[r, "f"] \arrow[d, "\varphi"'] & \mathbb{R} \\
\xi=\left( \xi_1, \dots, \xi_n \right) \arrow[ur, "f \circ \varphi^{-1}"'] &
\end{tikzcd}
\]
Here, 
\( \mathcal{M} \) is a smooth manifold of dimension \( n \). Moreover, \( \varphi: U \subset \mathcal{M} \to \mathbb{R}^n \) is a coordinate chart, and for any point \( p \in U \), we denote the local coordinates by 
$\xi^p = (\xi_1^p, \dots, \xi_n^p)^\top := \varphi(p).$
Given a smooth function \( f : \mathcal{M} \to \mathbb{R} \), its coordinate expression under \( \varphi \) is 
$f \circ \varphi^{-1} : \varphi(U) \subset \mathbb{R}^n \to \mathbb{R}.$

Firstly, we describe the Newton equation in~\eqref{duality show} using the coordinate coefficients of a tangent vector.
Observe that the Riemannian gradient $\operatorname{grad} f(p)$ in a local coordinate chart $\varphi$ is expressed as:  
\[
\operatorname{grad} f(p) = \sum_{i,j=1}^n g^{ij}(\xi^p) \, \frac{\partial \left(f \circ \varphi^{-1}\right)}{\partial \xi_j}\biggr|_{\xi^p}  \frac{\partial}{\partial \xi_i} \biggr|_{\xi^p},
\]
where \( g^{ij}(\xi^p) \) is the $(i,j)$-component of the inverse Riemannian metric \( g^{-1} \) at the point \( p \in \mathcal{M} \) under the local coordinates $\xi$, in detail, 
\[
[g^{ij}(\xi^p)] = [g_{ij}(\xi^p)]^{-1}
\quad \text{with} \quad
g_{ij}(\xi^p) := \left\langle \frac{\partial}{\partial \xi_i}, \frac{\partial}{\partial \xi_j} \right\rangle\biggr|_{\xi^p}.
\]
Here, \( \left\{ \frac{\partial}{\partial \xi_1}\big|_{\xi^p}, \dots, \frac{\partial}{\partial \xi_n}\big|_{\xi^p} \right\} \) is the coordinate basis of the tangent space \( T_p \mathcal{M} \).  
For simplicity, we write
\[
\operatorname{grad} f(p) = \sum_{k=1}^{n} a_k(\xi^p) \, \frac{\partial}{\partial \xi_k}\biggr|_{\xi^p}\quad \text{with} \quad a_k(\xi^p) := \sum_{j=1}^{n} g^{kj}(\xi^p) \, \frac{\partial \left(f \circ \varphi^{-1}\right)}{\partial \xi_j}\biggr|_{\xi^p}.
\]
Take an arbitrary tangent vector
\[X_p=\sum_{k=1}^{n}\beta_k(\xi^p)\frac{\partial}{\partial \xi_k}\bigg|_{\xi^p}
\]
under local coordinate $\xi$.
Consequently, $\mathrm{Hess}^{*} f(p)[X_p]$ can be calculated as:
\begin{equation}
\begin{aligned}
&\mathrm{Hess}^{*} f(p)[X_p]\\
=\ &\nabla^{*}_{X_p} \operatorname{grad} f \\
=\ & 
\begin{pmatrix}
\beta_1 & \cdots & \beta_n
\end{pmatrix}
\left[
\begin{pmatrix}
\frac{\partial a_1}{\partial \xi_1} & \cdots & \frac{\partial a_n}{\partial \xi_1} \\
\vdots & \ddots & \vdots \\
\frac{\partial a_1}{\partial \xi_n} & \cdots & \frac{\partial a_n}{\partial \xi_n}
\end{pmatrix}
+
\begin{pmatrix}
\sum_{j=1}^n a_j \Gamma^{*}_{1 j}{}^1 & \cdots & \sum_{j=1}^n a_j \Gamma^{*}_{1 j}{}^n \\
\vdots & \ddots & \vdots \\
\sum_{j=1}^n a_j \Gamma^{*}_{n j}{}^1 & \cdots & \sum_{j=1}^n a_j \Gamma^{*}_{n j}{}^n
\end{pmatrix}
\right]
\left.
\begin{pmatrix}
\frac{\partial}{\partial \xi_1} \\
\vdots \\
\frac{\partial}{\partial \xi_n}
\end{pmatrix}
\right\rvert_{\xi^p},
\end{aligned}
\end{equation}
where \( \Gamma^{*}_{ij}{}^k \) denotes the Christoffel symbol of the dual affine connection \( \nabla^* \) in the coordinate chart \(\xi= (\xi_1, \dots, \xi_n) \), defined by
\[
\nabla^*_{\frac{\partial}{\partial \xi_i}} \left( \frac{\partial}{\partial \xi_j} \right)\biggr|_{\xi^p}
= \sum_{k=1}^n \Gamma^{*}_{ij}{}^k(\xi^p) \frac{\partial}{\partial \xi_k}\biggr|_{\xi^p}.
\]
%In the context of $\alpha$-geometry on a statistical manifold, this corresponds to 
%\(\Gamma^{*}_{ij}{}^k = \Gamma^{(-\alpha)}_{ij}{}^k\), and a concrete example of its computation method is provided in Example~\ref{alpha_geometry}.
For convenience, we define the matrix $\mathbf{H}^{*}(\xi) \in \mathbb{R}^{n \times n}  $ whose $(i,j)$-element satisfies 
\begin{equation}
\mathbf{H}^{*}_{ij}(\xi^p) 
= \frac{\partial a_j(\xi)}{\partial \xi_i} \biggr|_{\xi^p}
+ \sum_{k=1}^n a_k(\xi^p) \Gamma^{*}_{i k}{}^j(\xi^p).
\label{hession_matrix}
\end{equation}
%Because for any 
%$X_p=\sum_{k=1}^{n}\beta_k(\xi^p)\frac{\pa%rtial}{\partial \xi_k}\big|_{\xi^p}$,
%there exists a natural isomorphism between the tangent space \( T_p \mathcal{M} \) and the Euclidean space \( \mathbb{R}^n \), 
%given by identifying the tangent vector \( X_p \) with its coordinate representation$(\beta_1(\xi^p), \dots, \beta_n(\xi^p))^{\top} \in \mathbb{R}^n$
%. 
%Moreover, if we fix a coordinate chart $(U,\xi)$ around $p$, $\big\{\left.\frac{\partial}{\partial \xi_k}\right|_{p}\big\}_{k=1}^n$ forms a basis of $T_p\mathcal{M}$, hence any $X_p\in T_p\mathcal{M}$ can be written as
%\[
%  X_p=\sum_{k=1}^{n}\beta_k(\xi^p)\,\left.\frac{\partial}{\partial \xi_k}\right|_{p}.
%\]
The coordinate coefficients define a (chart-dependent) linear isomorphism
\[
  \Phi_p:\;T_p\mathcal{M}\longrightarrow \mathbb{R}^n,\qquad
  \Phi_p(X_p)=\big(\beta_1(\xi^p),\dots,\beta_n(\xi^p)\big)^{\top},
\]
which identify $T_p\mathcal{M}$ with $\mathbb{R}^n$
and 
%thus we can equivalently perform computations on the coordinate vector \( (\beta_1(\xi), \dots, \beta_n(\xi))^{\top} \).
%This is the coordinate-induced identification of $T_p\mathcal{M}$ with $\mathbb{R}^n$.
%For convenience, in the following derivations and optimization procedures, 
%we can equivalently perform computations on the coordinate vector \( (\beta_1(\xi), \dots, \beta_n(\xi))^{\top} \), 
%which simplifies the analysis and allows us to apply standard numerical linear algebra techniques. 
%More precisely, 
thus the Newton equation in~\eqref{duality show} can be written as:
%\begin{equation}
%\label{eq:newton-eq-coord}
%\mathbf{H}^{*\top}(\xi)
%\begin{pmatrix}
%\beta_1 \\
%\vdots \\
%\beta_n
%\end{pmatrix}
%=
%-\,\mathbf{G}^{-1}(\xi)
%\begin{pmatrix}
%\displaystyle\frac{\partial \left(f \circ \varphi^{-1}\right)}{\partial \xi_1} \\
%\vdots \\
%\displaystyle\frac{\partial \left(f \circ \varphi^{-1}\right)}{\partial \xi_n}
%\end{pmatrix}
%\end{equation}
\begin{equation}
\label{eq:newton-eq-coord-compact}
\mathbf{H}^{*\top}(\xi^p) \, \boldsymbol{\beta}(\xi^p) = -\, \mathbf{G}^{-1}(\xi^p) \, \boldsymbol{\nabla} f(\xi^p),
\end{equation}
%Here, \( \mathbf{G}(\xi) = \left[ g_{ij}(\xi) \right] \) 
%and \( \mathbf{G}^{-1}(\xi) = \left[ g^{ij}(\xi) \right] \).
where \( \boldsymbol{\beta}(\xi^p) = (\beta_1(\xi^p), \dots, \beta_n(\xi^p))^{\top} \in \mathbb{R}^n \) is the Newton direction in local coordinates $\xi$, \( \boldsymbol{\nabla} f(\xi^p) = \left( \frac{\partial (f \circ \varphi^{-1})}{\partial \xi_1}, \dots, \frac{\partial (f \circ \varphi^{-1})}{\partial \xi_n} \right)^{\top} \big|_{\xi^p} \in \mathbb{R}^n \) is the local coordinate expression of the gradient, \( \mathbf{G}(\xi^p) = [g_{ij}(\xi^p)] \) is the Riemannian metric matrix, and \( \mathbf{G}^{-1}(\xi^p) = [g^{ij}(\xi^p)] \) is its inverse.

Next, we will approximate the update based on a second-order retraction under a local coordinate chart.
Let $\gamma_{p,X_p}(t):=R_{p}\left(t X_p\right)$ be a second-order retraction on the manifold.
From its local chart $\left(\varphi,\xi \right)$, 
we have \(\varphi(\gamma_{p,X_p}(t))=\left(\xi_1(t),\ldots,\xi_n(t) \right)^{\top}.\)
Moreover, we obtain \( \dot{\gamma}_{p,X_p}(0) = X_p \), and
\(
\varphi\!\left(\gamma_{p,X_p}(0)\right)
= (\xi_1(0), \ldots, \xi_n(0))^{\top}
= \varphi(p)
= (\xi_1^p, \ldots, \xi_n^p)^{\top}.
\)
The tangent vector \( X_p \) can be expressed as
\(
X_p = \sum_{k=1}^{n} \beta_k(\xi^p)\,\frac{\partial}{\partial \xi_k}\Big|_{\xi^p}.
\)
Furthermore, the curve $\bar{\gamma}_{p,X_p}$ is defined as the geodesic~\cite{lee2018introduction} emanating from $p$ in the direction $X_p$:
\[
\bar{\gamma}_{p,X_p}(0)=p,\qquad \dot{\bar{\gamma}}_{p,X_p}(0)=X_p,
\qquad \text{and}\qquad
\nabla_{\dot{\bar{\gamma}}_{p,X_p}(t)} \dot{\bar{\gamma}}_{p,X_p}(t)=0, \;\; \forall t \in [0,\epsilon].
\]
Let $\varphi\!\left(\bar{\gamma}_{p,X_p}(t)\right)=\big(\bar{\xi}_1(t),\ldots,\bar{\xi}_n(t)\big)^{\top}$. 
The local coordinate form of the geodesic equation is 
\[
\ddot{\bar{\xi}}_{\,i}(t)
+\sum_{j,k=1}^{n}\Gamma^{i}_{jk}\!\big(\bar{\xi}(t)\big)\,
\dot{\bar{\xi}}_{j}(t)\,\dot{\bar{\xi}}_{k}(t)=0,
\qquad
\bar{\xi}_{i}(0)=\xi_i^p,\quad
\dot{\bar{\xi}}_{i}(0)=\beta_i(\xi^p);
\]
therefore at $t=0$, we have
\[
\ddot{\bar{\xi}}_{i}(0)
= -\sum_{j,k=1}^{n}\Gamma^{i}_{jk}\!\left(\xi^p\right)\,
\beta_j(\xi^p)\,\beta_k(\xi^p).
\]
Since a second-order retraction agrees with the geodesic up to second order (, in other words, we require $\nabla_{\dot{\gamma}_{p,X_p}(t)} \dot{\gamma}_{p,X_p}(t)\big|_{t=0}=0$), we can take a specific second order retraction,
a local-coordinate
second-order retraction $\varphi\!\left(\gamma_{p,X_p}(t)\right)$, of the form
\begin{align}
\xi_i(t)
= \xi^p_i
+ t\,\beta_i(\xi^p)
- \frac{t^{2}}{2}\sum_{j,k=1}^{n}
\Gamma^{i}_{jk}\!\left(\xi^p\right)\,
\beta_j(\xi^p)\,\beta_k(\xi^p),
\qquad i=1,\ldots,n,
\label{approximate retraction}
\end{align}
where the Christoffel symbols of the affine connection $\nabla$ in the chart 
$\xi=(\xi_1,\ldots,\xi_n)^{\top}$ are defined by
\[
\nabla_{\frac{\partial}{\partial \xi_i}}
\!\left(\frac{\partial}{\partial \xi_j}\right)\Big|_{\xi^p}
= \sum_{k=1}^{n}\Gamma^{k}_{ij}\!\left(\xi^p\right)\,
\frac{\partial}{\partial \xi_k}\Big|_{\xi^p}.
\]
%In the context of $\alpha$-geometry on a statistical manifold, this corresponds to 
%\(\Gamma^{*}_{ij}{}^k = \Gamma^{(\alpha)}_{ij}{}^k\), and a concrete example of its computation method is provided in Example~\ref{alpha_geometry}.

From (\ref{eq:newton-eq-coord-compact}) and (\ref{approximate retraction}),
we arrive at the explicit computational procedure of Algorithm~\ref{alg: dual riemannian newton} described in Algorithm~\ref{alg:Computation dual riemannian newton}. 
%\begin{algorithm}
%\caption{Computation Dual Riemannian Newton Method}
%\label{alg:Computation dual riemannian newton}
%\begin{algorithmic}[1]
%\Require \( (\mathcal{M}, g, \nabla, \nabla^*) \), local chart: $\left(\varphi,\xi \right)$, initial point $ \varphi(p_0)=(\xi^{p_0}_1,\ldots,\xi^{p_0}_n) \in U \subset \mathcal{M}$
%\Ensure Approximate local minimizer of $f : \mathcal{M} \to \mathbb{R}$

%\For{$k = 0, 1, 2, \dots$}
%    \State Compute the Newton direction  \( \boldsymbol{\beta}(\xi^{p_k}) = (\beta_1(\xi^{p_k}), \dots, \beta_n(\xi^{p_k}))^{\top} \), by solving:
  
%     $\mathbf{H}^{*\top}(\xi^{p_k}) \, \boldsymbol{\beta}(\xi^{p_k}) = -\, \mathbf{G}^{-1}(\xi^{p_k}) \, \boldsymbol{\nabla} f(\xi^{p_k})$

%    \State Update: $
%\xi_{ i}^{p_{k+1}}=\xi^{p_{k}}_i+ \beta_i(\xi^{p_{k}})-\frac{1}{2}  \sum^{n}_{j,k=1} \Gamma_{j k}^{i}(\xi^{p_{k}}) \beta_{ j}(\xi^{p_{k}}) \beta_{k}(\xi^{p_{k}}), \quad i=1,\dots,n
%$
%\EndFor
%\end{algorithmic}
%\end{algorithm}
\begin{algorithm}
\caption{Coordinate-based Dual Riemannian Newton Method}
\label{alg:Computation dual riemannian newton}
\begin{algorithmic}[1]
\Require \( (\mathcal{M}, g, \nabla, \nabla^*) \), local chart \((\varphi, \xi)\), initial coordinates \( \xi^{p} = \varphi(p) \in \mathbb{R}^n \)
\Ensure Approximate local minimizer of \( f : \mathcal{M} \to \mathbb{R} \)

\Repeat
    \State Compute the Newton direction \( \boldsymbol{\beta}(\xi^{p}) = (\beta_1(\xi^{p}), \dots, \beta_n(\xi^{p}))^{\top} \) by solving:
    \[
    \mathbf{H}^{*\top}(\xi^{p}) \, \boldsymbol{\beta}(\xi^{p}) = -\, \mathbf{G}^{-1}(\xi^{p}) \, \boldsymbol{\nabla} f(\xi^{p})
    \]
    \State Update coordinates using the second-order retraction approximation:
    \[
    \xi^p_i \gets \xi^p_i + \beta_i(\xi^p) - \frac{1}{2} \sum_{j,k=1}^n \Gamma^i_{jk}(\xi^p) \, \beta_j(\xi^p) \, \beta_k(\xi^p), \quad i = 1, \ldots, n
    \]
\Until{convergence}
\end{algorithmic}

\end{algorithm}

For comparison, we present the corresponding computation of the natural gradient descent method under a local coordinate chart.
The natural gradient descent method corresponds to a Riemannian gradient descent using a first-order retraction, which is defined by
\[
\xi_i(t) = \xi^p_i + t\, \beta_i(\xi^p),
\]
where \( \beta_i(\xi^p) \) denotes the \( i \)-th component of the Riemannian gradient at the point \( \xi^p \).
We write the computation procedure of the natural gradient descent method in Algorithm~\ref{alg:Computation Natural gradient descent Method} for comparison with the dual Riemannian Newton method.
\begin{algorithm}
\caption{Coordinate-based Natural Gradient Descent Method}
\label{alg:coordinate-natural-gradient-descent}
\begin{algorithmic}[1]
\Require Riemannian manifold \( (\mathcal{M}, g, \nabla, \nabla^*) \), local chart \( (\varphi, \xi) \), initial point \( p \in \mathcal{M} \) with coordinates \( \xi^p = \varphi(p) \in \mathbb{R}^n \); step size \( s > 0 \)
\Ensure Approximate local minimizer of \( f : \mathcal{M} \to \mathbb{R} \)

\Repeat
    \State Compute the natural gradient direction in coordinates:
    \[
    \boldsymbol{\beta}(\xi^p) = -s\, \mathbf{G}^{-1}(\xi^p)\, \boldsymbol{\nabla} f(\xi^p)
    \]
    \State Update local coordinates:
    \[
    \xi^p_i \gets \xi^p_i + \beta_i(\xi^p), \quad i = 1, \dots, n
    \]
\Until{convergence}
\end{algorithmic}
\label{alg:Computation Natural gradient descent Method}
\end{algorithm}
Algorithm~\ref{alg:Computation Natural gradient descent Method} further illustrates that, contrast to the dual Riemannian Newton Method, the
natural gradient descent method does not require computing the Christoffel symbols
\( \Gamma_{ij}{}^{k} \) and \( \Gamma^{*}_{ij}{}^{k} \) associated with the affine connections
\( \nabla \) and \( \nabla^{*} \).

\section{Theortical properties}
\label{chap4}
Although the introduction of the dual Riemannian Newton method does not rely on any specific structure of \( (\mathcal{M}, g, \nabla, \nabla^*) \),
for a dually flat manifold in the context of information geometry the dual Riemannian Newton method enjoys additional theoretical properties in specific coordinates, which will be introduced in this section.
%when \( \mathcal{M} \) is regarded as a dually flat manifold in the context of information geometry, affine coordinate systems naturally exist, and the \emph{Dual Riemannian Newton method} enjoys additional theoretical properties in such coordinates, which will be introduced in this section.

\subsection{Dual Riemannian Newton Method on dually flat manifold}
First, we prove that each step of the dual Riemannian Newton method reduces to the Euclidean Newton step when applied to a dually flat manifold under affine coordinates.
% The definition of the dually flat manifold will be introduced in the following:

Let \( \nabla \) be an affine connection on a  manifold \( \mathcal{M} \), and let \( [\cdot,\cdot] \) denote the Lie bracket of vector fields defined by
\(
[X,Y] := XY - YX
\)
for all \( X, Y \in \mathfrak{X}(\mathcal{M}) \).
The \emph{torsion tensor} \( T \) is the \( \mathcal{C}^\infty(\mathcal{M}) \)-bilinear map defined by
\begin{equation}
T(X, Y) = \nabla_X Y - \nabla_Y X - [X, Y],
\label{eq:torsion}
\end{equation}
and the \emph{Riemannian curvature tensor} \( R \) is the \( \mathcal{C}^\infty(\mathcal{M}) \)-trilinear map defined by
\begin{equation}
R(X, Y)Z = \nabla_X \nabla_Y Z - \nabla_Y \nabla_X Z - \nabla_{[X, Y]} Z,
\label{eq:curvature}
\end{equation}
for all vector fields \( X, Y, Z \in \mathfrak{X}(\mathcal{M}) \).
Using such tensors, we can introduce the dually flat manifold as follows:
\begin{definition}[Dually Flat Manifold]\label{def:dually-flat}
Let \( (\mathcal{M}, g, \nabla, \nabla^*) \) be a Riemannian manifold endowed with dual affine connections \( \nabla \) and \( \nabla^* \).  
We say that \( \mathcal{M} \) is a \emph{dually flat manifold} if both the torsion and curvature tensors of \( \nabla \) and \( \nabla^* \) vanish:
\begin{equation}
T = T^* = 0,
\qquad
R = R^* = 0.
\label{eq:dually-flat}
\end{equation}
\end{definition}
In the dually flat manifold, the following two affine coordinates naturally exist~\cite{AmariNagaoka2000}:
% \begin{definition}
Let \( \mathcal{M} \) be a dually flat manifold equipped with $(g, \nabla, \nabla^*)$. A local chart $\left(\varphi,\xi \right)$ of \( U \subset\mathcal{M} \) along which all the connection coefficients \(\{ \Gamma_{ij}^k \}\) vanish identically is called a \(\nabla\)\emph{-affine coordinate neighborhood}, and the associated local coordinate system \( (\xi_i) \) is called a \(\nabla\)\emph{-affine coordinate system}. Similarly, we can also define
\(\nabla^*\)\emph{-affine coordinate system}.
% \end{definition}
For example, a probability distribution belongs to the exponential family if its density can be written as
\[
p(x; \theta) = h(x) \exp\left( \sum_{i=1}^n \theta_i F_i(x) - \psi(\boldsymbol{\theta}) \right),
\]
where \( \theta = (\theta_1, \dots, \theta_n)^{\top} \in \Theta \subset \mathbb{R}^n \) is the natural parameter, \( F(x) = (F_1(x), \dots, F_n(x))^{\top} \) is the vector of sufficient statistics, \( h(x) \) is the base measure, and \( \psi(\theta) \) is the log-partition function that normalizes the distribution:
\begin{equation}
\psi(\theta) = \log \int h(x) \exp\left( \sum_{i=1}^n \theta_i F_i(x) \right) dx.
\label{psi}
\end{equation}
The expectation parameter is defined by the gradient \( \eta=(\eta_1,\dots,\eta_n)^{\top} := \nabla_{\theta} \psi(\theta) \). In the framework of information geometry, \( \theta \) defines an $\mathrm{e}$-affine coordinate system under the exponential connection \( \nabla^{(\mathrm{e})} \), while \( \eta \) defines an $\mathrm{m}$-affine coordinate system under the mixture connection \( \nabla^{(\mathrm{m})} \)~\cite{AmariNagaoka2000}. The manifold \( (\mathcal{M}, g, \nabla^{(\mathrm{e})}, \nabla^{(m)}) \) is dually flat. Moreover, the explicit computation of the exponential and mixture connections \( \nabla^{(\mathrm{e})} \) and \( \nabla^{(\mathrm{m})} \) on a statistical manifold is provided in Example~\ref{alpha_geometry}.

In the subsequent discussion, for notational convenience, we define \(\theta \) as the affine coordinate system with respect to the connection \( \nabla \), and \( \eta \) as that with respect to the dual connection \( \nabla^* \) on a dually flat manifold \( (\mathcal{M}, g, \nabla, \nabla^*) \).

Recall the Newton equation~\eqref{eq:newton-eq-coord-compact} in a local coordinate $\xi$:
\begin{equation}
\label{eq:newton-eq-coord-compact-gleft}
\mathbf{G}(\xi^p) \, \mathbf{H}^{*\top}(\xi^p) \, \boldsymbol{\beta}(\xi^p) = -\, \boldsymbol{\nabla} f(\xi^p).
\end{equation}
Under affine coordinate systems,
\( \mathbf{G}(\xi^p) \, \mathbf{H}^{*\top}(\xi^p) \) admits the following simplified form.
\begin{theorem}
\label{euc_hession}
Let \( (\mathcal{M}, g, \nabla, \nabla^{*}) \) be a dually flat manifold. Suppose that \( \theta = (\theta_1,\ldots,\theta_n)^{\top} \) and \( \eta = (\eta_1,\ldots,\eta_n)^{\top} \) are affine coordinate systems associated with the affine connections \( \nabla \) and \( \nabla^{*} \), respectively. Let the corresponding local charts be \( (\varphi_\theta, \theta) \) and \( (\varphi_\eta, \eta) \). For all $p\in \mathcal{M}$, the matrix \( \mathbf{G}(\theta^p) \, \mathbf{H}^{*\top}(\theta^p) \) satisfies
\[\left( \mathbf{G}(\theta^p) \, \mathbf{H}^{*\top}(\theta^p) \right)_{ij} = \frac{\partial^2 (f \circ \varphi_\theta^{-1})}{\partial \theta_i \, \partial \theta_j}\bigg|_{\theta^p}.\]
\end{theorem}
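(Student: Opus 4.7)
The plan is to compute the entries of $\mathbf{G}(\theta^{p})\,\mathbf{H}^{*\top}(\theta^{p})$ directly from the definition of $\mathbf{H}^{*}$ in \eqref{hession_matrix}, and then exploit two structural facts available under an affine chart for $\nabla$ on a dually flat manifold: that the coefficients $a_{k}$ of the Riemannian gradient contract with $g_{ik}$ back to ordinary partials of $f\circ\varphi_{\theta}^{-1}$, and that the dual Christoffel symbols fully encode the partial derivatives of $g$ in these coordinates.

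To begin, I would write $\tilde{f}:=f\circ\varphi_{\theta}^{-1}$ and expand
\[
\bigl(\mathbf{G}(\theta^{p})\,\mathbf{H}^{*\top}(\theta^{p})\bigr)_{ij}
=\sum_{k}g_{ik}(\theta^{p})\,\mathbf{H}^{*}_{jk}(\theta^{p})
=\sum_{k}g_{ik}\,\partial_{j}a_{k}\;+\;\sum_{l}a_{l}\,\Gamma^{*}_{jl,i},
\]
where $\Gamma^{*}_{jl,i}:=\sum_{k}g_{ik}\Gamma^{*}_{jl}{}^{k}$. The key identity $\sum_{k}g_{ik}a_{k}=\partial_{i}\tilde{f}$, immediate from the definition $a_{k}=\sum_{m}g^{km}\partial_{m}\tilde{f}$, lets me rewrite the first sum as
\[
\sum_{k}g_{ik}\,\partial_{j}a_{k}
=\partial_{j}\partial_{i}\tilde{f}\;-\;\sum_{k}(\partial_{j}g_{ik})\,a_{k}.
\]

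The next step is the geometric input. The duality relation \eqref{eq:dual-connection}, written for the basis fields $\partial_{i},\partial_{j},\partial_{k}$, reads $\partial_{j}g_{ik}=\Gamma_{ji,k}+\Gamma^{*}_{jk,i}$. Because $\theta$ is a $\nabla$-affine coordinate system on the dually flat manifold, all $\Gamma_{ji,k}$ vanish identically on the chart, so $\partial_{j}g_{ik}=\Gamma^{*}_{jk,i}$. Substituting this in and renaming the summation index gives
\[
\sum_{k}(\partial_{j}g_{ik})\,a_{k}
=\sum_{l}a_{l}\,\Gamma^{*}_{jl,i},
\]
which exactly cancels the remaining Christoffel term, leaving $\bigl(\mathbf{G}\mathbf{H}^{*\top}\bigr)_{ij}=\partial_{i}\partial_{j}\tilde{f}$, as claimed.

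The work here is almost entirely index bookkeeping; there is no deep obstacle. The only point that deserves care is the placement of indices when transposing $\mathbf{H}^{*}$ and when passing between $\Gamma^{*}_{jl}{}^{k}$ and $\Gamma^{*}_{jl,i}$, together with the verification that the vanishing of $\Gamma_{ji,k}$ is legitimately used (it is, since the statement restricts to the $\nabla$-affine chart $\theta$). No appeal to the $\eta$-chart, the potential $\psi$, or torsion-freeness of $\nabla^{*}$ is needed in this computation, although the symmetry $\Gamma^{*}_{jl,i}=\Gamma^{*}_{lj,i}$ could be used to double-check the cancellation.
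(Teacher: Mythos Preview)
Your argument is correct and is genuinely different from the paper's. The paper does not stay in the $\theta$-chart: it passes to the dual $\eta$-chart, uses the change-of-basis relation $\boldsymbol{\beta}(\eta^{p})=\mathbf{G}(\theta^{p})\boldsymbol{\beta}(\theta^{p})$ together with the Newton equation in both charts to obtain the matrix identity $\mathbf{G}(\theta^{p})\mathbf{H}^{*\top}(\theta^{p})=\mathbf{H}^{*\top}(\eta^{p})\mathbf{G}(\theta^{p})$, and then exploits the vanishing of $\Gamma^{*}$ in the $\eta$-chart (rather than of $\Gamma$ in the $\theta$-chart) to simplify $\mathbf{H}^{*\top}(\eta^{p})$; a chain-rule computation showing $a_{k}(\eta^{p})=\partial_{\theta_{k}}\tilde f$ then finishes the argument. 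Your route is more self-contained: you never leave the $\theta$-chart, and the entire dually flat structure is used only through the single identity $\partial_{j}g_{ik}=\Gamma_{ji,k}+\Gamma^{*}_{jk,i}$ with $\Gamma_{ji,k}=0$. This is shorter and makes clear that neither the Legendre transform nor the explicit $\eta$-coordinates are needed for the statement. The paper's approach, on the other hand, foregrounds how the Hessian operator transforms between dual charts, which feeds directly into the surrounding discussion of the method in $\eta$-coordinates and the reduction to the Euclidean Newton step there.
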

\begin{proof}
Observe that
from~\cite{AmariNagaoka2000}, 
in a dually flat manifold, we have
\[
\left( \mathbf{G}(\theta^p)\right)_{ij} = \frac{\partial \eta_i}{\partial \theta_j} \Big|_{\theta^p} = \frac{\partial \eta_j}{\partial \theta_i} \Big|_{\theta^p}, \quad 
\left( \mathbf{G}(\eta^p)\right)_{ij} = \frac{\partial \theta_i}{\partial \eta_j} \Big|_{\eta^p} = \frac{\partial \theta_j}{\partial \eta_i} \Big|_{\eta^p},
\]
thereby implying
\[
\mathbf{G}(\theta^p) = \left( \mathbf{G}(\eta^p) \right)^{-1}.
\]
Now consider Newton’s equation under the \( \eta \)-coordinates:
\begin{equation}
\label{eq:newton-eq-coord-compact-eta}
\begin{aligned}
\mathbf{H}^{*\top}(\eta^p) \, \boldsymbol{\beta}(\eta^p) 
&= -\, \mathbf{G}^{-1}(\eta^p) \, \boldsymbol{\nabla} f(\eta^p) \\
&= -\, \boldsymbol{\nabla} f(\theta^p).
\end{aligned}
\end{equation}
Together with the relationship between directional vectors in two coordinate systems given by
\begin{equation}
\boldsymbol{\beta}(\eta^p) = \mathbf{G}(\theta^p) \, \boldsymbol{\beta}(\theta^p),
\end{equation}
Equation (\ref{eq:newton-eq-coord-compact-eta}) gives
\begin{equation}
\mathbf{H}^{*\top}(\eta^p) \, \mathbf{G}(\theta^p) \, \boldsymbol{\beta}(\theta^p) = -\, \boldsymbol{\nabla} f(\theta^p).
\label{eq: HG}
\end{equation}
On the other hand, Newton’s equation under the \( \theta \)-coordinates is given as
\begin{equation}
\mathbf{G}(\theta^p) \, \mathbf{H}^{*\top}(\theta^p) \, \boldsymbol{\beta}(\theta^p) = -\, \boldsymbol{\nabla} f(\theta^p).
\label{eq: GH}
\end{equation}
Combining (\ref{eq: HG}) and (\ref{eq: GH}) gives
\begin{equation}
\mathbf{G}(\theta^p) \, \mathbf{H}^{*\top}(\theta^p) = \mathbf{H}^{*\top}(\eta^p) \, \mathbf{G}(\theta^p).
\end{equation}
Since we have \( \Gamma^{*}_{ij}{}^{k}(\eta) = 0 \) under \( \eta \)-coordinates, the dual Riemannian Hessian simplifies as
\[
\left( \mathbf{H}^{*\top}(\eta^p) \right)_{ij} = \frac{\partial a_i(\eta)}{\partial \eta_j} \Big|_{\eta^p},
\]
where
\begin{equation}
\begin{aligned}
a_k({\eta(\theta)}^p) &:= \sum_{j=1}^{n} g^{kj}(\eta^p) \, \frac{\partial \left(f \circ \varphi_{\eta}^{-1}\right)}{\partial \eta_j}\biggr|_{\eta^p}  \\
&= \sum_{j=1}^{n} \frac{\partial \eta_j}{\partial \theta_k} \Big|_{\theta^p} \, \frac{\partial \left(f \circ \varphi_{\eta}^{-1}\right)}{\partial \eta_j}\biggr|_{\eta^p}  \\
&= \frac{\partial \left(f \circ \varphi_{\theta}^{-1}\right)}{\partial \theta_k}\biggr|_{\theta^p}.
\end{aligned}
\end{equation}
Finally, the product becomes
\begin{equation}
\begin{aligned}
\left( \mathbf{G}(\theta^p) \, \mathbf{H}^{*\top}(\theta^p) \right)_{ij} 
&= \left( \mathbf{H}^{*\top}(\eta^p) \, \mathbf{G}(\theta^p) \right)_{ij} \\
&= \sum_{k=1}^n \frac{\partial a_i(\eta)}{\partial \eta_k} \Big|_{\eta^p} \, \frac{\partial \eta_k}{\partial \theta_j} \Big|_{\theta^p} \\
&= \frac{\partial a_i(\eta(\theta))}{\partial \theta_j} \Big|_{\theta^p} \\
&= \frac{\partial^2 (f \circ \varphi_\theta^{-1})}{\partial \theta_i \, \partial \theta_j} \Big|_{\theta^p},
\end{aligned}
\end{equation}
which completes the proof.
\end{proof}

From this simplified form,
we reduce the dual Riemannian Newton method 
 in affine coordinates to the classical Newton method in Euclidean space as follows:
Since \( \theta = (\theta_1,\ldots,\theta_n)^{\top} \) are affine coordinates associated with the affine connection \( \nabla \), we have \( \Gamma_{ij}{}^{k}(\theta) = 0 \).  
Consequently, \textbf{Line 2} of Algorithm~\ref{alg:Computation dual riemannian newton} simplifies to
\[
\boldsymbol{\beta}(\theta^{p}) = -\, [\boldsymbol{\nabla}^{2}f(\theta^{p})]^{-1} \, \boldsymbol{\nabla} f(\theta^{p}),
\]
where 
$
(\boldsymbol{\nabla}^{2}f(\theta^{p}))_{ij} = \frac{\partial^2 (f \circ \varphi_\theta^{-1})}{\partial \theta_i \, \partial \theta_j} \Big|_{\theta^{p}}.
$
Furthermore, \textbf{Line 3} becomes
\[
\theta_i^{p} \gets \theta_i^{p} + \beta_i(\theta^{p}), \quad i = 1, \dots, n,
\]
which implies that the dual Riemannian Newton method in $\theta$-coordinates matches the classical Newton method in Euclidean space.
Similarly, if we choose retraction \( R^{*}_p(t X_p)  := \gamma^{*}_{p,X_p}(t) \) satisfying
$\nabla^{*}_{\dot{\gamma^{*}}_{p,X_p}(t)} \dot{\gamma^{*}}_{p,X_p}(t)\big|_{t=0} = 0,$
we have 
$\left( \mathbf{G}(\eta^p) \, \mathbf{H}^{\top}(\eta^p) \right)_{ij} = (\boldsymbol{\nabla}^{2}f(\eta^{p}))_{ij}=\frac{\partial^2 (f \circ \varphi_\eta^{-1})}{\partial \eta_i \, \partial \eta_j}\big|_{\eta^p}$, where $\mathbf{H}_{ij}(\eta^p) 
= \frac{\partial a_j(\eta)}{\partial \eta_i} \biggr|_{\eta^p}
+ \sum_{k=1}^n a_k(\eta^p) \Gamma_{i k}^j(\eta^p).$ Analogously, in the \(\eta\)-coordinate system where the dual connection coefficients vanish -- that is, \(\Gamma^{*}_{ij}{}^{k}(\eta) = 0\) --  
\textbf{Line 2} of Algorithm~\ref{alg:Computation dual riemannian newton} simplifies to
\[
\boldsymbol{\beta}(\eta^{p}) = -\, [\boldsymbol{\nabla}^{2}f(\eta^{p})]^{-1} \, \boldsymbol{\nabla} f(\eta^{p}),
\]
and \textbf{Line 3} becomes the standard coordinate update:
\[
\eta_i^{p} \gets \eta_i^{p} + \beta_i(\eta^{p}), \quad i = 1, \dots, n.
\]
which implies that the dual Riemannian Newton method in $\eta$-coordinates also matches the classical Newton method in Euclidean space.

\subsection{Relationship with Natural Gradient and Mirror Descent under Projection Problems}
\label{subsec: relationship with projection problem}

In projection problems,
we show that the dual Riemannian Newton method reduces to both the natural gradient and the mirror descent methods when restricted to auto-parallel submanifolds equipped with affine coordinates.
However, we also show that when the objective function of the optimization has additional regularization terms, the method no longer coincides with either the natural gradient or the mirror descent methods.

In information geometry, the \emph{projection problem} refers to finding the closest point on a submanifold to a given point in the ambient space, with respect to a specific divergence function, typically, the \(\nabla\)-divergence (also known as the Bregman divergence~\cite{amari2016information}). Under the framework of dually flat manifolds, such projections can be performed using either the \( \nabla \)-connection or the \( \nabla^* \)-connection, leading to the concepts of \( \nabla \)-projection and \( \nabla^* \)-projection, respectively. 

First, we introduce the potential functions that are smooth and strictly convex, whose existence is theoretically supported by the following lemma.
% and the \( \nabla \)-divergence.
\begin{lemma}[\cite{AmariNagaoka2000}]
\label{dual_potiential}
Let \( (\mathcal{M}, g, \nabla, \nabla^*) \) be a dually flat manifold.
Suppose \( \theta = (\theta_1, \ldots, \theta_n)^\top \) and \( \eta = (\eta_1, \ldots, \eta_n)^\top \)
are affine coordinate systems with respect to the affine connections \( \nabla \) and \( \nabla^* \), respectively.
Then there exist smooth and strictly convex functions  
\(\psi(\theta) \in C^\infty\) and \(\varphi(\eta) \in C^\infty\),  
called the \emph{potential functions}, such that the following relations hold:
\begin{equation}
\begin{aligned}
\eta_i &= \frac{\partial \psi}{\partial \theta^i}, 
\qquad 
\theta^i = \frac{\partial \varphi}{\partial \eta_i}, 
\\
g_{ij}(\theta) &= g^{ij}(\eta) =\frac{\partial^2 \psi}{\partial \theta^i \partial \theta^j}, 
\qquad 
g^{ij}(\theta) =g_{ij}(\eta) = \frac{\partial^2 \varphi}{\partial \eta_i \partial \eta_j},
\end{aligned}
\label{eq:legendre-coord-metric}
\end{equation}
and we have
\begin{equation}
\psi(\theta) + \varphi(\eta) = \sum_{i=1}^n \theta_i \eta_i .
\label{eq:legendre-relation}
\end{equation}
\end{lemma}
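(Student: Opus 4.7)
The plan is to exploit the vanishing of the Christoffel symbols in both affine charts, combined with duality and torsion-freeness, to extract an integrability condition on the metric; from this I will build $\psi$ and $\varphi$ via Poincar\'e's lemma and Legendre duality.

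First I work in the $\theta$-chart, where $\Gamma^{k}_{ij}(\theta)=0$ by $\nabla$-affineness. The duality identity in Definition~\ref{dual_affine}, applied to the coordinate frame, collapses to
\[
\partial_i\,g_{jk}(\theta)=\bigl\langle \partial_j,\;\nabla^{*}_{\partial_i}\partial_k\bigr\rangle=\sum_{l}\Gamma^{*l}_{ik}(\theta)\,g_{lj}(\theta).
\]
Because $\nabla^{*}$ is torsion-free, $\Gamma^{*l}_{ik}$ is symmetric in $(i,k)$, so $\partial_i g_{jk}=\partial_k g_{ji}$. Combined with $g_{jk}=g_{kj}$, this shows that for each fixed $j$ the 1-form $\omega_j:=\sum_i g_{ij}\,d\theta^i$ is closed on the (contractible) coordinate domain. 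Poincar\'e's lemma then supplies $\tilde\eta_j(\theta)$ with $g_{ij}=\partial\tilde\eta_j/\partial\theta^i$; symmetry of $g$ gives $\partial\tilde\eta_j/\partial\theta^i=\partial\tilde\eta_i/\partial\theta^j$, so $\tilde\eta_i\,d\theta^i$ is also closed, yielding a $\psi(\theta)$ with $\tilde\eta_i=\partial\psi/\partial\theta^i$ and hence $g_{ij}(\theta)=\partial^{2}\psi/\partial\theta^i\partial\theta^j$. Strict convexity of $\psi$ then follows from positive-definiteness of $g$.

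The main obstacle is reconciling the constructed $\tilde\eta$ with the $\nabla^{*}$-affine coordinate $\eta$ given by hypothesis. I plan to verify, via the inhomogeneous transformation law for Christoffel symbols, that the chart $\theta\mapsto\tilde\eta$ is itself $\nabla^{*}$-affine: using $\partial\tilde\eta_j/\partial\theta^i=g_{ij}$ and $\partial\theta^i/\partial\tilde\eta_j=g^{ij}$, the identity $\partial_i g_{jk}=\sum_l\Gamma^{*l}_{ik}g_{lj}$ derived above is exactly what is needed to cancel the inhomogeneous term, forcing $\Gamma^{*l}_{ik}(\tilde\eta)=0$. Since $\nabla^{*}$-affine coordinates are unique up to affine transformation, I may (after an affine renormalization absorbed into $\psi$) identify $\eta=\tilde\eta$, so that $\eta_i=\partial\psi/\partial\theta^i$ and the metric identity $g_{ij}(\theta)=g^{ij}(\eta)$ follows from the chain rule applied to the Jacobian $\partial\eta_i/\partial\theta^j=g_{ij}$.

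Finally I construct $\varphi$ by Legendre duality, setting $\varphi(\eta):=\sum_i\theta^i(\eta)\,\eta_i-\psi(\theta(\eta))$. A direct differentiation using $\eta_i=\partial\psi/\partial\theta^i$ shows the cross terms cancel, giving $\partial\varphi/\partial\eta_i=\theta^i$, and a second differentiation yields $\partial^{2}\varphi/\partial\eta_i\partial\eta_j=\partial\theta^i/\partial\eta_j=g^{ij}(\theta)$, which coincides with $g_{ij}(\eta)$ under the metric transformation. Strict convexity of $\varphi$ follows because its Hessian is the inverse of that of $\psi$, and the Legendre relation $\psi(\theta)+\varphi(\eta)=\sum_i\theta^i\eta_i$ holds by the very definition of $\varphi$.
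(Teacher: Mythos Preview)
The paper does not prove this lemma; it is simply quoted from Amari--Nagaoka, so there is no in-paper argument to compare against. Your sketch is essentially the standard construction found in that reference: use $\nabla$-flatness of the $\theta$-chart together with duality and torsion-freeness of $\nabla^{*}$ to obtain the total symmetry $\partial_i g_{jk}=\partial_k g_{ij}$, integrate twice via Poincar\'e to produce $\psi$, and then recover $\varphi$ by Legendre transform. The computations you outline are correct.

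One point deserves a cleaner treatment. As stated in the paper, the lemma reads as though $\theta$ and $\eta$ are \emph{given} affine charts, but the relations $g_{ij}(\theta)=g^{ij}(\eta)$ and $\eta_i=\partial\psi/\partial\theta^i$ can only hold when $\theta$ and $\eta$ are \emph{dual} affine coordinates, i.e.\ $\langle \partial/\partial\theta^i,\partial/\partial\eta_j\rangle=\delta^i_j$. Your ``affine renormalization absorbed into $\psi$'' step is where this is hidden: if $\eta=A\tilde\eta+b$ for a general invertible $A$, then $\sum_j A_{ij}\,\partial\psi/\partial\theta^j+b_i$ is a gradient only when $A\,g$ is symmetric, which is not automatic. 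The honest formulation (and the one in Amari--Nagaoka) is that, given $\theta$, the lemma \emph{constructs} the $\nabla^{*}$-affine $\eta$ dual to it; equivalently, one should read the hypothesis as ``$\theta$ and $\eta$ are dual affine coordinates,'' in which case your verification that $\tilde\eta$ is $\nabla^{*}$-affine shows $\eta=\tilde\eta$ up to an additive constant, and the renormalization is trivial. A second minor point: Poincar\'e's lemma needs a contractible domain, so your argument is local (or assumes the affine chart is convex), which is the usual convention here.
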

Using the potential functions $\psi$ and $\varphi$, we define the $\nabla$-divergence.
\begin{definition}
Let $\mathcal{M}$ be a dually flat manifold associated with the dualistic structure
$(g,\nabla,\nabla^{*})$. For two points $p,q\in \mathcal{M}$, we define the \emph{$\nabla$-divergence} $D(p\|q)$ as

\begin{align*}
D(p\|q)
&\;:=\; \psi(\theta^p)
\,+\, \varphi(\eta^q)
\,-\, \sum_{i=1}^{n} \theta^p_i\, \eta^q_i \\
&\;=\; \psi(\theta^p)
\,-\, \psi(\theta^q)
\,-\, \left\langle \nabla_{\theta} \psi(\theta^q),\, \theta^p - \theta^q \right\rangle_E,
\end{align*}
where $\{(\theta_{i}),(\eta_{i})\}$ is dual affine coordinates on $\mathcal{M}$,  and
$\langle u, v\rangle_{E}$ denotes the standard Euclidean inner product, i.e.,
\(
\langle u, v\rangle_{E} \;=\; \sum_{i=1}^n u_i\, v_i .
\)
\end{definition}

For example, in the exponential family, 
$\psi(\theta)$ is defined as Equation~\eqref{psi}, and the dual potential function \( \varphi(\eta) \) can be defined via the Legendre transform~\cite{amari2016information}:
\begin{equation}
\varphi(\eta) := \sup_{\theta \in \Theta} \left( \sum_{i=1}^n \theta_i \eta_i - \psi(\theta) \right).
\end{equation}
The associated $\nabla$-divergence is the well-known Kullback--Leibler (KL) divergence.

%In information geometry, the \emph{projection problem} refers to finding the closest point on a submanifold to a given point in the ambient manifold.  
When the projection problem is restricted to an \emph{auto-parallel} submanifold, the resulting projection enjoys desirable properties such as \emph{uniqueness} and \emph{orthogonality} \cite{amari2016information}, 
which ensures a well-defined geometric correspondence between the point and the submanifold. In the subsequent analysis, we restrict the projection problem to auto-parallel submanifolds.

A formal definition of the auto-parallel submanifold is provided below:
Let \( \mathcal{S} \subset \mathcal{M} \) be a submanifold equipped with a local coordinate system \( (u_a)_{1 \leq a \leq m} \).  
For any vector fields \( \frac{\partial}{\partial u_a}, \frac{\partial}{\partial u_b} \in \mathcal{X}(\mathcal{S}) \),  
where \( \mathcal{X}(\mathcal{S}) \) denotes the set of all smooth vector fields on \( \mathcal{S} \),  
and for any point \( p \in \mathcal{S} \), the covariant derivative satisfies

\[
 \left. {\nabla}_{\frac{\partial}{\partial u_a}} \frac{\partial}{\partial u_b} \right|_{u^p} \in T_{p} \mathcal{S},
\]
then \( \mathcal{S} \) is called an \emph{auto-parallel submanifold} of \( \mathcal{M} \) with respect to the connection \({\nabla} \) defined on \( \mathcal{M} \). Intuitively, the affine connection-induced covariant derivative of any tangent vector on the submanifold remains in its tangent space.

In the context of information geometry, an auto-parallel submanifold with respect to the \( \mathrm{e} \)-connection is called an \emph{\( \mathrm{e} \)-flat submanifold} (typical for exponential families), whereas one with respect to the \( \mathrm{m} \)-connection is called an \emph{\( \mathrm{m} \)-flat submanifold} (typical for mixture families). 

We now provide a lemma of auto-parallel submanifolds on dually flat manifolds.

\begin{lemma}[\cite{AmariNagaoka2000}]
\label{autoparallel}
Let \( (\mathcal{M}, g, \nabla, \nabla^*) \) be a dually flat manifold, and let \( (\theta_i)_{1 \leq i \leq n} \) be an affine coordinate system associated with the connection \( \nabla \).  
A necessary and sufficient condition for an \( m \)-dimensional submanifold \( \mathcal{S} \subset \mathcal{M} \) to be auto-parallel with respect to \( \nabla \) is the existence of a constant matrix \( \mathbf{A} \in \mathbb{R}^{n \times m} \) of rank \( m \), a constant vector \( \mathbf{b} \in \mathbb{R}^n \), and a local coordinate system \( (u_a)_{1 \leq a \leq m} \) on \( \mathcal{S} \) that is affine with respect to \( \nabla \), such that the embedding of \( \mathcal{S} \) into \( \mathcal{M} \) is locally given by
\[
\begin{bmatrix}
\theta_1 \\
\theta_2 \\
\vdots \\
\theta_n
\end{bmatrix}
=
\mathbf{A}
\begin{bmatrix}
u_1 \\
u_2 \\
\vdots \\
u_m
\end{bmatrix}
+ \mathbf{b}.
\]
\end{lemma}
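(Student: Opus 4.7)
The plan is to work entirely in the fixed $\nabla$-affine coordinate chart $(\theta_i)$, in which the Christoffel symbols $\Gamma_{ij}^{k}(\theta)$ vanish identically, and to reduce the auto-parallel condition to a purely analytic statement about how the coordinates $(u_a)$ of $\mathcal{S}$ embed into $(\theta_i)$. The key computational identity is that, for any local embedding $\theta_j = \theta_j(u_1,\ldots,u_m)$, the coordinate basis of $T\mathcal{S}$ can be written $\partial/\partial u_a = \sum_{i=1}^n (\partial \theta_i/\partial u_a)\, \partial/\partial \theta_i$, and a direct computation using $\Gamma_{ij}^{k}(\theta)\equiv 0$ yields
\begin{equation*}
\nabla_{\partial/\partial u_a} \frac{\partial}{\partial u_b}
\;=\; \sum_{j=1}^{n} \frac{\partial^{2}\theta_j}{\partial u_a\, \partial u_b}\, \frac{\partial}{\partial \theta_j}.
\end{equation*}
Both directions of the lemma follow by analyzing when this expression lies in $T\mathcal{S}$, respectively equals zero.

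For the sufficiency direction ($\Leftarrow$), I would assume $\theta = \mathbf{A}u + \mathbf{b}$ with $\mathbf{A}$ constant of rank $m$. Then $\partial^{2}\theta_j/\partial u_a\partial u_b = 0$ for every $j,a,b$, so the identity above gives $\nabla_{\partial/\partial u_a}\partial/\partial u_b = 0$, which trivially lies in $T_p\mathcal{S}$; hence $\mathcal{S}$ is $\nabla$-auto-parallel. Moreover, because the induced connection coefficients vanish in $(u_a)$, this coordinate system is itself affine with respect to $\nabla$ on $\mathcal{S}$.

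For the necessity direction ($\Rightarrow$), I would assume $\mathcal{S}$ is $\nabla$-auto-parallel. Then the induced connection $\nabla|_{\mathcal{S}}$, defined by $X,Y \mapsto \nabla_X Y$ for tangent $X,Y$, is well-posed and torsion-free, and its curvature tensor equals the restriction to $T\mathcal{S}$ of the ambient tensor $R$. Since $(\mathcal{M},g,\nabla,\nabla^{*})$ is dually flat, $R=0$ by Definition~\ref{def:dually-flat}, so $\nabla|_{\mathcal{S}}$ is flat and torsion-free. Classical theory of flat connections then yields, possibly after shrinking the chart, local $\nabla|_{\mathcal{S}}$-affine coordinates $(u_a)_{1\le a \le m}$ on $\mathcal{S}$ in which $\nabla_{\partial/\partial u_a}\partial/\partial u_b = 0$. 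Substituting into the computational identity forces $\partial^{2}\theta_j/\partial u_a\partial u_b \equiv 0$ for every $j,a,b$, and a two-fold integration yields $\theta_j = \sum_{a=1}^m A_{ja}u_a + b_j$, that is, $\theta = \mathbf{A}u + \mathbf{b}$ with constant $\mathbf{A},\mathbf{b}$. The rank-$m$ property of $\mathbf{A}$ follows from linear independence of $\{\partial/\partial u_a\}$ at each point of $\mathcal{S}$, which forces the columns of $\mathbf{A}$ to be linearly independent.

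The step I expect to be most subtle is the descent of flatness from the ambient $\nabla$ to the induced connection on $\mathcal{S}$. A naive expansion of $\nabla_X\nabla_Y Z$ for $X,Y,Z \in \mathfrak{X}(\mathcal{S})$ could in principle produce contributions from normal components, but the $\nabla$-auto-parallel hypothesis precisely guarantees that $\nabla_Y Z$ remains tangent, so those contributions cannot arise; the curvature of $\nabla|_{\mathcal{S}}$ then coincides with the genuine restriction of $R$ to $T\mathcal{S}\otimes T\mathcal{S}\otimes T\mathcal{S}$. Once this observation is made, the local existence of affine coordinates on a flat torsion-free open set is classical (via parallel transport of a frame at one base point), and the remainder of the argument reduces to elementary integration in coordinates.
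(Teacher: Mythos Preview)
The paper does not supply its own proof of this lemma; it is stated with a citation to Amari--Nagaoka and used as a black box. Your argument is correct and is essentially the standard proof one finds in that reference: reduce to the computational identity $\nabla_{\partial/\partial u_a}\partial/\partial u_b=\sum_j(\partial^2\theta_j/\partial u_a\partial u_b)\,\partial/\partial\theta_j$ in ambient $\nabla$-affine coordinates, use auto-parallelism plus ambient flatness to obtain flat torsion-free induced connection on $\mathcal{S}$ (hence local affine coordinates $(u_a)$), and then read off that the second partials of the embedding vanish. Your handling of the subtle point---that auto-parallelism ensures $\nabla_Y Z$ stays tangent so the induced curvature genuinely coincides with the restriction of the ambient $R$---is exactly right.
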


The same property holds for \( \eta \)-coordinates under \( \nabla^* \).
Please note that \( (\mathcal{S}, g, \nabla, \nabla^*) \) also forms a dually flat submanifold of $\mathcal{M}$.

We show that the dual Riemannian Newton method is the same as the natural gradient descent method in the $\nabla$- and $\nabla^{*}$-projection problems under the affine coordinates.
Let $\mathcal{S}$ be an auto-parallel submanifold.
The \( \nabla^{*} \)-projection is defined as
 \[
p^\star \;=\; \arg\min_{p \in \mathcal{S}} D(p\|q),
\]
where $q$ is a fixed point outside $\mathcal{S}$. 
Similarly, the \( \nabla \)-projection is defined as 
\[
q^\star \;=\; \arg\min_{q \in \mathcal{S}} D(p\|q).
\]
\begin{theorem}
Let $\mathcal{S} \subset \mathcal{M}$ be an auto-parallel submanifold (also a dually flat manifold with affine connections \((\nabla,\nabla^{*})\)) and let $q \in \mathcal{M}\setminus\mathcal{S}$. In $\nabla$-affine coordinates $(u_a)_{1 \le a \le m}$ on $\mathcal{S}$,
the procedure for solving the \( \nabla^{*} \)-projection problem given as
\[
p^\star \;=\; \arg\min_{p \in \mathcal{S}} D(p\|q)
\]
by the dual Riemannian Newton method is idential to that by the natural gradient method.
\label{natural_equal_newton}
\end{theorem}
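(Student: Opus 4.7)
The plan is to reduce the objective $f(p)=D(p\|q)$ restricted to $\mathcal{S}$ to a function on $\mathbb{R}^m$ via the $\nabla$-affine chart $(u_a)_{1 \le a \le m}$, compute its Euclidean gradient and Hessian explicitly, recognize the Hessian as the Fisher information matrix on $\mathcal{S}$, and then invoke Theorem~\ref{euc_hession} to conclude that the dual Riemannian Newton step in these coordinates coincides with the natural gradient step.

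First, I would set up coordinates. By Lemma~\ref{autoparallel}, the embedding of $\mathcal{S}$ into $\mathcal{M}$ is given by $\theta^p = \mathbf{A} u^p + \mathbf{b}$ for some constant matrix $\mathbf{A}\in\mathbb{R}^{n\times m}$ of rank $m$ and constant vector $\mathbf{b}\in\mathbb{R}^n$, with $(u_a)$ being $\nabla$-affine on $\mathcal{S}$. Denote $\tilde{f}(u):=f(p(u))$. Using the explicit form of the $\nabla$-divergence and Lemma~\ref{dual_potiential} (which gives $\partial\psi/\partial\theta_i=\eta_i$ and $\partial^2\psi/\partial\theta_i\partial\theta_j=g_{ij}(\theta)$), I would compute
\begin{align*}
\frac{\partial \tilde{f}}{\partial u_a}(u^p)
&= \sum_{i=1}^{n} A_{ia}\,\bigl(\eta_i^p - \eta_i^q\bigr),\\
\frac{\partial^{2}\tilde{f}}{\partial u_a \partial u_b}(u^p)
&= \sum_{i,j=1}^{n} A_{ia} A_{jb}\, g_{ij}(\theta^p)
= \bigl(\mathbf{A}^{\top} \mathbf{G}(\theta^p)\,\mathbf{A}\bigr)_{ab}.
\end{align*}
On the other hand, the Fisher information on $\mathcal{S}$ in the $u$-chart is precisely the pullback $\mathbf{G}_{\mathcal{S}}(u^p)=\mathbf{A}^{\top}\mathbf{G}(\theta^p)\mathbf{A}$. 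Hence the Euclidean Hessian of $\tilde{f}$ in $u$-coordinates equals the Fisher information matrix on $\mathcal{S}$; this identification is the crux of the argument.

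Next, since $(\mathcal{S},g,\nabla,\nabla^{*})$ is itself a dually flat submanifold and $(u_a)$ is a $\nabla$-affine coordinate system on $\mathcal{S}$, Theorem~\ref{euc_hession} applies on $\mathcal{S}$: the matrix $\mathbf{G}_{\mathcal{S}}(u^p)\,\mathbf{H}^{*\top}(u^p)$ coincides with the Euclidean Hessian $\boldsymbol{\nabla}^{2}\tilde{f}(u^p)$. Substituting into the Newton equation in (\ref{eq:newton-eq-coord-compact-gleft}), the Newton direction reduces to
\[
\boldsymbol{\beta}(u^p) = -\bigl[\boldsymbol{\nabla}^{2}\tilde{f}(u^p)\bigr]^{-1}\boldsymbol{\nabla}\tilde{f}(u^p) = -\mathbf{G}_{\mathcal{S}}^{-1}(u^p)\,\boldsymbol{\nabla}\tilde{f}(u^p),
\]
which is exactly the natural-gradient direction on $\mathcal{S}$ with unit step size (Algorithm~\ref{alg:Computation Natural gradient descent Method} with $s=1$). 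Finally, because $(u_a)$ is $\nabla$-affine, all Christoffel symbols $\Gamma^{a}_{bc}(u)$ vanish, so the second-order retraction update in Line 3 of Algorithm~\ref{alg:Computation dual riemannian newton} collapses to $u^p_a \gets u^p_a + \beta_a(u^p)$, matching the natural-gradient update verbatim.

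The main obstacle is not any single calculation but the careful bookkeeping that lets Theorem~\ref{euc_hession} be applied \emph{on the submanifold} $\mathcal{S}$ rather than on $\mathcal{M}$: one must verify that $\mathcal{S}$ inherits the dually flat structure from $\mathcal{M}$, that the $\nabla$-affinity of $(u_a)$ on $\mathcal{S}$ follows from Lemma~\ref{autoparallel}, and that the induced Fisher metric is the pullback $\mathbf{A}^{\top}\mathbf{G}\mathbf{A}$. Once these geometric identifications are in place, the identity $\boldsymbol{\nabla}^{2}\tilde{f}=\mathbf{G}_{\mathcal{S}}$ is the algebraic miracle that forces the Newton direction to collapse onto the natural-gradient direction. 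An analogous argument in $\nabla^{*}$-affine $(\eta)$-coordinates would handle the $\nabla$-projection case $q^\star=\arg\min_{q\in\mathcal{S}}D(p\|q)$ by symmetry.
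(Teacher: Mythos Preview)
Your proposal is correct and follows essentially the same approach as the paper: both invoke Lemma~\ref{autoparallel} to linearize $\theta(u)$, apply Theorem~\ref{euc_hession} on the dually flat submanifold $\mathcal{S}$ to identify $\mathbf{G}_{\mathcal{S}}(u^p)\mathbf{H}^{*\top}(u^p)$ with the Euclidean Hessian of $\tilde{f}$, compute that Hessian via Lemma~\ref{dual_potiential} as the induced Fisher metric, and then use $\Gamma^{a}_{bc}(u)=0$ to collapse the retraction. Your write-up is slightly more explicit about the pullback $\mathbf{G}_{\mathcal{S}}=\mathbf{A}^{\top}\mathbf{G}(\theta)\mathbf{A}$ and about the unit step size $s=1$, but the structure and key identifications are identical.
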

\begin{proof}
Let us first define the function
\[
f(u^p) :=\; \psi\big(\theta(u^p)\big) 
\,+\, \varphi\big(\eta^q\big) 
\,-\, \sum_{i=1}^n \theta_i(u^p)\, \eta^q_i.
\]    
According to Lemma~\ref{autoparallel}, the coordinate transformation from $u$ to $\theta$ is given by
\[
\begin{bmatrix}
\theta_1 \\
\theta_2 \\
\vdots \\
\theta_n
\end{bmatrix}
=
\mathbf{A}
\begin{bmatrix}
u_1 \\
u_2 \\
\vdots \\
u_m
\end{bmatrix}
+ \mathbf{b},
\]
where \( \mathbf{A} \in \mathbb{R}^{n \times m} \) and \( \mathbf{b} \in \mathbb{R}^n \) are constant matrix and vector, respectively. Therefore 
$$
\frac{\partial^2}{\partial u_a \partial u_b}\theta_i(u)=0 \quad 1 \leq a,b\leq m, \ 1\leq i \leq n. 
$$
Since \( (u_a)_{1 \leq a \leq m} \) are affine coordinates with respect to the connection \( \nabla \) on the submanifold \( \mathcal{S} \), we have $\Gamma^{c}_{ab}(u)=0$ for all $1 \leq a, b,c \leq m$, and it follows from Theorem~\ref{euc_hession} and Lemma~\ref{dual_potiential} that
\begin{equation}
\big(\mathbf{G}(u^p)\, \mathbf{H}^{*\top}(u^p)\big)_{ab} 
= \frac{\partial^2 f(u)}{\partial u_a \partial u_b} \biggr|_{u^p}
= \frac{\partial^2}{\partial u_a \partial u_b} \psi\big(\theta(u)\big) \biggr|_{u^p}
= \big(\mathbf{G}(u^p)\big)_{ab}.
\end{equation}
Moreover, according to $\Gamma^{c}_{ab}(u)=0$ for any $1 \leq a, b,c \leq m$, \textbf{Line 3} of Algorithm~\ref{alg:Computation dual riemannian newton} becomes 
$$
u_a^{p_{k}} \gets u_a^{p} + \beta_a(u^{p}), \quad a = 1, \dots, m,
$$
which completes the proof.
\end{proof}
Similarly, for the \( \nabla \)-projection problem
\[
q^\star \;=\; \arg\min_{q \in \mathcal{S}} D(p\|q),
\]
the dual Riemannian Newton method reduces to the natural Gradient method under affine coordinates by applying the same proof technique.

Interestingly, although the proposed dual Riemannian Newton method reduces to the natural gradient method for simple projection problems under affine coordinates,
we show that it differs from both the natural gradient method and the mirror descent if we add a regularization term to the objective function.
Consider the modified function:
\[
\bar{f}(u^p) :=\; \psi\big(\theta(u^p)\big) 
\,+\, \varphi\big(\eta^q\big) 
\,-\, \sum_{i=1}^n \theta_i(u^p)\, \eta^q_i \,+\, \sum_{a=1}^m (u_a^p)^2,
\]
where we add a regularization term $\sum_{a=1}^m (u_a^p)^2$ to the objective function in $\nabla^{*}$-projection problem.
Observe that 
\begin{equation}
\big(\mathbf{G}(u^p)\, \mathbf{H}^{*\top}(u^p)\big)_{ab} 
= \frac{\partial^2 \bar{f}(u)}{\partial u_a \partial u_b} \biggr|_{u^p}
= \big(\mathbf{G}(u^p)\big)_{ab} + 2\, \mathbf{I},
\label{add_2I}
\end{equation}
where \( \mathbf{I} \) denotes the identity matrix. 
This implies that the dual Riemannian Newton method differs from the natural gradient method. Specifically, in the natural gradient method, \textbf{Line 2} of
Algorithm~\ref{alg:coordinate-natural-gradient-descent} (with learning rate $s=1$) is
$\boldsymbol{\beta}(u^{p})
  \;=\;
  -\,\mathbf{G}(u^{p})^{-1}\,\boldsymbol{\nabla} f(u^{p}).$
In contrast, in the dual Riemannian Newton method, \textbf{Line 2} of
Algorithm~\ref{alg:Computation dual riemannian newton} becomes
$\boldsymbol{\beta}(u^{p})
  \;=\;
  -\,\bigl(\mathbf{G}(u^{p})+2\mathbf{I}\bigr)^{-1}\,\boldsymbol{\nabla} f(u^{p})$; that is, $\mathbf{G}(u^{p})^{-1}$ is replaced with $\bigl(\mathbf{G}(u^{p})+2\mathbf{I}\bigr)^{-1}$ according to Equation \eqref{add_2I}.

Moreover, in affine coordinates, the natural gradient descent method is
equivalent to mirror descent,
so the dual Riemannian Newton method also differs from mirror descent.
Mirror descent carried out in the
$\theta$-coordinates coincides exactly with natural gradient descent in the 
$\eta$-coordinates. Concretely, the mirror–descent step in optimization problem  $\min_{p \in \mathcal{M}} f(p)$
is
\[
\theta^{p_{k+1}}
= \arg\min_{\theta}\Big\{
\langle \boldsymbol{\nabla} f(\theta^{p_k}),\, \theta \rangle_{E}
+ \tfrac{1}{s_k}\, D(\theta \,\|\, \theta^{p_k})
\Big\},
\]
where
\[
\boldsymbol{\nabla} f(\theta)
= \Big( \frac{\partial (f \circ \varphi_{\theta}^{-1})}{\partial \theta_1},\ldots,
        \frac{\partial (f \circ \varphi_{\theta}^{-1})}{\partial \theta_n} \Big)^{\!\top}
\]
is the Euclidean gradient in an affine chart $\theta=\varphi_{\theta}(\cdot)$. Moreover, $D(\theta \,\|\, \theta^{p_k})$ is defined as
\[
D(\theta\|\theta^{p_k})
:= \psi(\theta)\;+\;\varphi\!\big(\eta(\theta^{p_k})\big)
\;-\;\sum_{i=1}^n \theta_i\,\eta_i(\theta^{p_k}).
\] 
This is equivalent to the natural gradient update in the dual
$\eta$-coordinates~\cite{RaskuttiMukherjee2015}:
\[
\eta^{p_{k+1}}
= \eta^{p_k}
- s_k\, \mathbf{G}^{-1}\!\big(\eta^{p_k}\big)\,
\boldsymbol{\nabla} f\!\big(\eta^{p_k}\big),
\]
where $\mathbf{G}$ denotes the Fisher information metric (in $\eta$-coordinates).

\section{Convergence analysis}
\label{chap5}
In this section we show that in contrast to the natural gradient method, which typically exhibits only linear convergence~\cite{Boumal2023}, our proposal of the dual Riemannian Newton method achieves local quadratic convergence, which is the first second-order convergent algorithm within the framework of statistical manifold.

We begin by briefly introducing the properties of the \emph{consistent matrix norm} \( \| \cdot \| \) on \( \mathbb{R}^{n \times n} \), which satisfies \( \| \mathbf{I} \| = 1 \) and the submultiplicative property \( \|AB\| \leq \|A\|\,\|B\| \).  
A matrix norm \( \|\cdot\| \) is said to be consistent with a vector norm \( \|\cdot\|_v \) if
\[
\|A x\|_v \;\le\; \|A\|\,\|x\|_v
\quad \text{for all } A \in \mathbb{R}^{n \times n},\ x \in \mathbb{R}^n.
\]

% \begin{lemma}[\cite{Absil2008}]
% \label{matrix norm}
For all three matrices $\mathbf{E}$, $\mathbf{A}$, and $\mathbf{B}$, we have the following two properties~\cite{Absil2008}:
\begin{itemize}
  \item[(i)] If \( \| \mathbf{E} \| < 1 \), then \( (\mathbf{I} - \mathbf{E})^{-1} \) exists and it satisfies
  \[
  \| (\mathbf{I} - \mathbf{E})^{-1} \| \leq \frac{1}{1 - \| \mathbf{E} \|}.
  \]

  \item[(ii)] If \( \mathbf{A} \) is nonsingular and
  \( \| \mathbf{A}^{-1}(\mathbf{B} - \mathbf{A}) \| < 1 \), then \( \mathbf{B} \) is also nonsingular and it satisfies
  \[
  \| \mathbf{B}^{-1} \| \leq \frac{ \| \mathbf{A}^{-1} \| }{ 1 - \| \mathbf{A}^{-1}(\mathbf{B} - \mathbf{A}) \| }.
  \]
\end{itemize}
% \end{lemma}
Using the above properties of the consistent matrix norm, we theoretically prove that the dual Riemannian Newton method achieves local quadratic convergence.
\begin{theorem}
Let \( (\varphi, \xi) \) be a coordinate chart on a manifold \( \mathcal{M} \) of dimension $n$.  
Assume that there exists a point \( p^\star \in \mathcal{M} \) such that
\[
\operatorname{grad} f(p^\star) = 0
\quad \text{and} \quad
\operatorname{Hess}^* f(p^\star) \succ 0,
\]
where $\operatorname{Hess}^* f(p^\star) \succ 0$ means
\[
\left\langle \operatorname{Hess}^* f(p^\star)[X_{p^\star}],\, X_{p^\star} \right\rangle > 0 
\quad \text{for any} \quad X_{p^\star} \in T_{p^\star} \mathcal{M}.
\]
Then there exists a neighborhood \( \mathcal{U} \subset \mathcal{M} \) of \( p^\star \) such that,  
for any initial point \( p_0 \in \mathcal{U} \),  
Algorithm~\ref{alg:Computation dual riemannian newton} generates an infinite sequence \( ( p_k ) \) that converges quadratically to \( p^\star \).

\label{convergence}
\end{theorem}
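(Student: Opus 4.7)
The plan is to recast a single iteration of Algorithm~\ref{alg:Computation dual riemannian newton} in the local chart $(\varphi,\xi)$ around $p^\star$ as a perturbed Newton iteration in $\mathbb{R}^n$ for the coordinate vector field $F(\xi):=\mathbf{G}^{-1}(\xi)\,\boldsymbol{\nabla} f(\xi)$, which represents $\operatorname{grad} f$ in coordinates. The hypothesis $\operatorname{grad} f(p^\star)=0$ gives $F(\xi^{p^\star})=0$. Using the formula for $\mathbf{H}^{*}_{ij}(\xi^p)$ in~\eqref{hession_matrix} and $a_k(\xi^{p^\star})=0$, I would identify $\mathbf{H}^{*\top}(\xi^{p^\star})=DF(\xi^{p^\star})$; the assumption $\operatorname{Hess}^{*}f(p^\star)\succ 0$ then makes this Jacobian nonsingular. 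Line~2 of Algorithm~\ref{alg:Computation dual riemannian newton} therefore genuinely solves a Newton system for the root $\xi^{p^\star}$ of $F$, and the whole iteration is of the form ``Newton direction $+$ second-order Christoffel correction''.

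The next step is local well-posedness. Smoothness of $g$, $\nabla^{*}$ and $f$ makes $\xi\mapsto\mathbf{H}^{*\top}(\xi)$ continuous, so property~(ii) recalled just before the statement yields a neighborhood $\mathcal{U}$ of $p^\star$ on which $\mathbf{H}^{*\top}(\xi)$ stays invertible with $\|\mathbf{H}^{*\top}(\xi)^{-1}\|\le M_{1}$, the Christoffel symbols $\Gamma^{i}_{jk}(\xi)$ of $\nabla$ are bounded by some $M_{2}$, and $DF$ is Lipschitz with constant $L$. On this $\mathcal{U}$ the Newton direction $\boldsymbol{\beta}(\xi^{p_k})$ is therefore uniquely defined for every iterate $p_k\in\mathcal{U}$.

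For the quadratic estimate I would apply the first-order Taylor expansion
\[
0=F(\xi^{p^\star})=F(\xi^{p_k})+DF(\xi^{p_k})\,(\xi^{p^\star}-\xi^{p_k})+R_k,\qquad \|R_k\|_v\le \tfrac{L}{2}\,\|\xi^{p_k}-\xi^{p^\star}\|_v^{2},
\]
so that the Newton direction satisfies
\[
\boldsymbol{\beta}(\xi^{p_k})=-DF(\xi^{p_k})^{-1}F(\xi^{p_k})=-(\xi^{p_k}-\xi^{p^\star})+DF(\xi^{p_k})^{-1}R_k.
\]
Consequently the affine part of Line~3 obeys $\|\xi^{p_k}+\boldsymbol{\beta}(\xi^{p_k})-\xi^{p^\star}\|_v\le \tfrac{M_1 L}{2}\,\|\xi^{p_k}-\xi^{p^\star}\|_v^{2}$, while the Christoffel correction $-\tfrac{1}{2}\sum_{j,l}\Gamma^{i}_{jl}(\xi^{p_k})\beta_j\beta_l$ is $O(\|\boldsymbol{\beta}\|_v^{2})=O(\|\xi^{p_k}-\xi^{p^\star}\|_v^{2})$ by the bound on $\boldsymbol{\beta}$ just obtained, and therefore does not spoil the rate. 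Combining both contributions yields a constant $C>0$ with $\|\xi^{p_{k+1}}-\xi^{p^\star}\|_v\le C\,\|\xi^{p_k}-\xi^{p^\star}\|_v^{2}$; shrinking $\mathcal{U}$ so that $C\,\operatorname{diam}(\mathcal{U})<1$ makes the sequence self-trapping in $\mathcal{U}$ and forces quadratic convergence for every $p_0\in\mathcal{U}$.

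The main technical obstacle, in my view, is twofold: first, rigorously identifying $\mathbf{H}^{*\top}(\xi^{p^\star})$ with $DF(\xi^{p^\star})$, which relies on the vanishing of $a_k$ at the critical point and is precisely where the dual connection $\nabla^{*}$ enters; and second, verifying that the Christoffel-symbol correction in the second-order retraction, which is exactly what distinguishes this update from a Euclidean Newton step, is genuinely a higher-order perturbation rather than an $O(\|\boldsymbol{\beta}\|_v)$ term that could destroy the rate. Everything else is a chart-level transcription of the textbook Newton-convergence argument, but these two steps are where the dual-connection and retraction structures enter and must be handled with care.
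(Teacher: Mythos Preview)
Your overall strategy matches the paper's: work in the chart, view $F(\xi)=\mathbf{G}^{-1}(\xi)\boldsymbol{\nabla}f(\xi)$ as the vector field whose zero is $\xi^{p^\star}$, bound the inverse of the coefficient matrix via property~(ii), and treat the Christoffel correction in Line~3 as a quadratic-in-$\boldsymbol{\beta}$ perturbation. The retraction piece and the self-trapping argument are fine.

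There is, however, a genuine gap in the quadratic estimate. You write $\boldsymbol{\beta}(\xi^{p_k})=-DF(\xi^{p_k})^{-1}F(\xi^{p_k})$, but Line~2 of Algorithm~\ref{alg:Computation dual riemannian newton} actually solves $\mathbf{H}^{*\top}(\xi^{p_k})\boldsymbol{\beta}=-F(\xi^{p_k})$, and by~\eqref{hession_matrix} one has $\mathbf{H}^{*\top}(\xi^{p_k})=DF(\xi^{p_k})+\hat{\Gamma}^{*}_{\xi^{p_k},F}$, where $(\hat{\Gamma}^{*}_{\xi,F})_{ij}=\sum_s F_s(\xi)\,\Gamma^{*i}_{js}(\xi)$. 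This extra term vanishes only at $\xi^{p^\star}$ (because $F(\xi^{p^\star})=0$), so your identification $\mathbf{H}^{*\top}=DF$ holds at the critical point but \emph{not} at the iterates. As written, your Taylor step does not bound the quantity the algorithm actually produces. The paper closes the gap by inserting one more estimate: since $\hat{\Gamma}^{*}_{\xi,F}$ is smooth in $\xi$ and vanishes at $\xi^{p^\star}$, it is Lipschitz with $\|\hat{\Gamma}^{*}_{\xi^{p_k},F}\|\le \gamma_\Gamma^{*}\,\|\xi^{p_k}-\xi^{p^\star}\|$, and hence $\|\hat{\Gamma}^{*}_{\xi^{p_k},F}(\xi^{p_k}-\xi^{p^\star})\|$ contributes another quadratic term. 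Concretely, one decomposes
\[
F(\xi^{p_k})-\mathbf{H}^{*\top}(\xi^{p_k})(\xi^{p_k}-\xi^{p^\star})
=\bigl[F(\xi^{p_k})-F(\xi^{p^\star})-DF(\xi^{p_k})(\xi^{p_k}-\xi^{p^\star})\bigr]
-\hat{\Gamma}^{*}_{\xi^{p_k},F}(\xi^{p_k}-\xi^{p^\star}),
\]
and bounds both pieces by $O(\|\xi^{p_k}-\xi^{p^\star}\|^{2})$. You flagged the first of your two ``obstacles'' correctly, but the resolution is not merely the identification at $\xi^{p^\star}$; it is this additional $\hat{\Gamma}^{*}$ bound at the iterates, which you still need to supply.
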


\begin{proof}
Fix an iteration number $k$.
First, we introduce the notation
\[
\xi^{p_k} := \varphi(p_k), 
\qquad 
\varphi(R_{p_k}(X_{p_k})) := \hat{R}_{\xi^{p_k}}(\boldsymbol{\beta}(\xi^{p_k})),
\]
where the tangent vector \( X_{p_k} \) is represented in local coordinates as
\[
X_{p_k} 
= \sum_{s=1}^{n}\beta_s(\xi^{p_k})\,\frac{\partial}{\partial \xi_s}\bigg|_{\xi^{p_k}},
\]
and
\(
\boldsymbol{\beta}(\xi^{p_k}) 
= (\beta_1(\xi^{p_k}), \dots, \beta_n(\xi^{p_k}))^{\top}.
\)

Next, we denote the Riemannian gradient vector in local coordinates by
\[
\mathbf{a}(\xi^{p_k}) := \mathbf{G}^{-1}(\xi^{p_k}) \, \boldsymbol{\nabla} f(\xi^{p_k}),
\]
which satisfies \( \mathbf{a}(\xi^{p^{\star}}) = 0 \) at the optimal point \( p^\star \).  
According to Equation~\eqref{hession_matrix}, the components of the dual-affine Hessian are given by
\[
\mathbf{H}^{*}_{ij}(\xi^{p_k}) 
= \left.\frac{\partial a_j(\xi)}{\partial \xi_i} \right|_{\xi^{p_k}}
+ \sum_{s=1}^n a_s(\xi^{p_k}) \, \Gamma^{*}_{i s}{}^j(\xi^{p_k}),\quad 1\le i,j\le n.
\]

Based on this expression, we define the Jacobian matrix
\[
\bigl(\operatorname{D}\mathbf{a}(\xi^{p_k})\bigr)_{ij} 
:= \left. \frac{\partial a_i(\xi)}{\partial \xi_j} \right|_{\xi^{p_k}},
\]
and the connection-adjusted matrix
\[
\hat{\Gamma}^{*}_{\xi^{p_k},\mathbf{a}} 
:= \mathbf{H}^{*\top}(\xi^{p_k}) - \operatorname{D}\mathbf{a}(\xi^{p_k}),
\]
whose components are
\[
\bigl(\hat{\Gamma}^{*}_{\xi^{p_k},\mathbf{a}}\bigr)_{ij}
= \sum_{s=1}^n a_s(\xi^{p_k}) \, \Gamma^{*}_{j s}{}^i(\xi^{p_k}).
\]

Therefore, the \textbf{Line 3} in Algorithm~\ref{alg:Computation dual riemannian newton} is written as:
\begin{equation}
    \xi^{p_{k+1}} = \hat{R}_{\xi^{p_k}} \left( -[\mathbf{H}^{*\top}(\xi^{p_k})]^{-1} \mathbf{a}(\xi^{p_k}) \right).
\end{equation}

The proof proceeds through the following steps:

\noindent\textbf{Step 1: Establish an upper bound for $\bigl\|(\mathbf{H}^{*\top}(\xi^{p_k}))^{-1}\bigr\|$.}

Let \( \kappa := \| (\mathbf{H}^{*\top}(\xi^{p^{\star}}))^{-1} \| \). Since \( \mathbf{a}(\xi) \) is a smooth vector field, it follows that \( \mathbf{H}^{*\top}(\xi) \) is smooth, too; therefore, there exist \( r^{*}_H > 0 \) and \( \gamma^{*}_H > 0 \) such that
\begin{align}
\| \mathbf{H}^{*\top}(\xi^p) - \mathbf{H}^{*\top}(\xi^q) \| \leq \gamma^{*}_H\| \xi^p - \xi^q \|
\label{eq: gammaH}
\end{align}
for all \( \xi^p, \xi^q \in B_{r^{*}_H}(\xi^{p^{\star}}) := \{ \xi^p \in \mathbb{R}^d : \| \xi^p - \xi^{p^{\star}} \| < r^{*}_H \} \). Let
\[
\epsilon^{*} = \min \left\{ r^{*}_H, \frac{1}{2 \kappa \gamma^{*}_H} \right\}.
\]
Assuming that \( \xi^{p_k} \in B_{\epsilon^{*}}(\xi^{p^{\star}}) \), it follows that
\begin{align*}
\| (\mathbf{H}^{*\top}(\xi^{p^{\star}}))^{-1} ( \mathbf{H}^{*\top}(\xi^{p_k}) - \mathbf{H}^{*\top}(\xi^{p^{\star}}) ) \| 
&\leq \| (\mathbf{H}^{*\top}(\xi^{p^{\star}}))^{-1} \| \cdot \| \mathbf{H}^{*\top}(\xi^{p_k}) - \mathbf{H}^{\top}(\xi^{p^{\star}}) \| \\
&\leq \kappa \, \gamma_H^{*} \| \xi^{p_k} - \xi^{p^{\star}} \| 
\leq \kappa \, \gamma_H^{*} \epsilon^{*} \leq \frac{1}{2}.
\end{align*}
It follows from the properties of the consistent matrix norm that \( \mathbf{H}^{*\top}(\xi^{p_k}) \) is nonsingular and 
\[
\| (\mathbf{H}^{*\top}(\xi^{p_k}))^{-1} \| 
\leq \frac{ \| (\mathbf{H}^{*\top}(\xi^{p^{\star}}))^{-1} \| }
{ 1 - \| (\mathbf{H}^{*\top}(\xi^{p^{\star}}))^{-1} ( \mathbf{H}^{*\top}(\xi^{p_k}) - \mathbf{H}^{*\top}(\xi^{p^{\star}}) ) \| }
\leq 2 \kappa.
\]
It also follows that for all  \( \xi^{p_k} \in B_{\epsilon^{*}}(\xi^{p^{\star}}) \), the vector \( \boldsymbol{\beta}({\xi^{p_k}}) :=  -[\mathbf{H}^{*\top}(\xi^{p_k})]^{-1} \mathbf{a}(\xi^{p_k}) \) is well defined. 

\noindent\textbf{Step 2: Establish an upper bound for the difference of second-order retraction $\hat{R}_{\xi^{p_k}}(\boldsymbol{\beta}({\xi^{p_k}}))$ and first-order retraction $\xi^{p_{k}}+ \boldsymbol{\beta}(\xi^{p_{k}})$.}

Since \( R \) is a retraction (thus a smooth mapping) and \( \mathbf{a}(\xi^{p^{\star}})=0 \) at \( \xi^{p^{\star}} \), it follows that there exists \( r_R \) and \( \gamma_R > 0 \) such that
\[
\| \hat{R}_{\xi^{p_k}}(\boldsymbol{\beta}({\xi^{p_k}})) - (\xi^{p_{k}}+ \boldsymbol{\beta}(\xi^{p_{k}}) \| \leq \gamma_R \| \xi^{p_k} - \xi^{p^{\star}}
 \|^2
\]
for all \( \xi^{p_k} \in B_{\epsilon^{*}}(\xi^{p^{\star}}) \).

\noindent\textbf{Step 3: Establish an upper bound for
 \( \hat{\Gamma}^{*}_{\xi, \mathbf{a}} := \mathbf{H}^{*\top}(\xi) - \operatorname{D}\mathbf{a}(\xi) \).}

 Note that \( \hat{\Gamma}^{*}_{\xi, \mathbf{a}}\) is a linear operator. Again, by a smoothness argument, it follows that there exist constants \( r_\Gamma^{*} \) and \( \gamma_\Gamma^{*} \) such that
\begin{align}
\| \hat{\Gamma}^{*}_{\xi^{p}, \mathbf{a}} - \hat{\Gamma}^{*}_{\xi^{q}, \mathbf{a}} \| \leq \gamma_\Gamma^{*} \| \xi^{p} - \xi^{q} \|
\label{eq: gammaG}
\end{align}
for all \( \xi^{p}, \xi^{q} \in B_{r_\Gamma
^{*}}(\xi^{p^{\star}}) \). In particular, since \( \mathbf{a}(\xi^{p^{\star}}) = 0 \), we have \( \hat{\Gamma}^{*}_{\xi^{p^{\star}}, \mathbf{a}} = 0 \), hence
\[
\| \hat{\Gamma}^{*}_{\xi^p, \mathbf{a}} \| \leq \gamma_\Gamma^{*} \| \xi^p - \xi^{p^{\star}} \|
\]
holds for all \( \xi^p \in B_{r_\Gamma^{*}}(\xi^{p^{\star}}) \).

\noindent\textbf{Step 4: Establish an upper bound for
$\| \operatorname{D}\mathbf{a}(\xi^p) - \operatorname{D}\mathbf{a}(\xi^q) \|$
for all \( \xi^p, \xi^q \in B_{\min(\epsilon^{*}, r^{*}_\Gamma)}(\xi^{p^\star}) \).}

%In other words, we need to calculate a Lipschitz constant for \( \operatorname{D}\mathbf{a}(\xi) \), and 
From (\ref{eq: gammaH}) and (\ref{eq: gammaG}), we have,
for all \( \xi^p, \xi^q \in B_{\min(\epsilon^{*}, r^{*}_\Gamma)}(\xi^{p^\star}) \), 
\begin{align*}
\| \operatorname{D}\mathbf{a}(\xi^p) - \operatorname{D}\mathbf{a}(\xi^q) \| 
&\leq \| \mathbf{H}^{*\top}(\xi^p) - \mathbf{H}^{*\top}(\xi^q) \| 
+ \| \hat{\Gamma}^{*}_{\xi^p, \mathbf{a}} - \hat{\Gamma}^{*}_{\xi^q, \mathbf{a}} \|
\leq \gamma_H^{*} \| \xi^p - \xi^q \| + \gamma_\Gamma^{*} \| \xi^p - \xi^q \|\\
&\leq (\gamma_H^{*} + \gamma_\Gamma^{*}) \| \xi^p - \xi^q \|.
\end{align*}

\noindent\textbf{Final step.} 
Observe that we have
\begin{align*}
\xi^{p_{k+1}} - \xi^{p^\star} 
&= \hat{R}_{\xi^{p_k}} \left( -\left( \mathbf{H}^{*\top}(\xi^{p_k}) \right)^{-1} \mathbf{a}(\xi^{p_k}) \right) - \xi^{p^\star}
\\
&=
\xi^{p_{k}}-\left( \mathbf{H}^{*\top}(\xi^{p_k}) \right)^{-1} \mathbf{a}(\xi^{p_k})- \xi^{p^\star}\\
&\quad
+\hat{R}_{\xi^{p_k}}\left( -\left( \mathbf{H}^{*\top}(\xi^{p_k}) \right)^{-1} \mathbf{a}(\xi^{p_k}) \right) - \xi^{p_{k}}+\left( \mathbf{H}^{*\top}(\xi^{p_k}) \right)^{-1} \mathbf{a}(\xi^{p_k}).
\end{align*}
This, together with the bounds developed above, yields
\begin{align*}
\| \xi^{p_{k+1}} - \xi^{p^\star} \| 
&\leq \left\| \xi^{p_k} - \left( \mathbf{H}^{*\top}(\xi^{p_k}) \right)^{-1} \mathbf{a}(\xi^{p_k}) - \xi^{p^\star} \right\| 
+ \gamma_R \| \xi^{p_k} - \xi^{p^\star} \|^2 \\
&\leq \left\| \left( \mathbf{H}^{*\top}(\xi^{p_k}) \right)^{-1} 
\left( \mathbf{a}(\xi^{p_k}) - \mathbf{a}(\xi^{p^\star}) - \mathbf{H}^{*\top}(\xi^{p_k})(\xi^{p_k} - \xi^{p^\star}) \right) \right\| 
+ \gamma_R \| \xi^{p_k} - \xi^{p^\star} \|^2 \\
&\leq \left\| \left( \mathbf{H}^{*\top}(\xi^{p_k}) \right)^{-1} \right\| 
\cdot \left\| \mathbf{a}(\xi^{p_k}) - \mathbf{a}(\xi^{p^\star}) - \operatorname{D}\mathbf{a}(\xi^{p_k})(\xi^{p_k} - \xi^{p^\star}) \right\| \\
&\quad + \left\| \left( \mathbf{H}^{*\top}(\xi^{p_k}) \right)^{-1} \right\| 
\cdot \left\| \hat{\Gamma}^{*}_{\xi^{p_k}, \mathbf{a}} (\xi^{p_k} - \xi^{p^\star}) \right\| 
+ \gamma_R \| \xi^{p_k} - \xi^{p^\star} \|^2 \\
&\leq 2\kappa \cdot \tfrac{1}{2}(\gamma_H^{*} + \gamma_\Gamma
^{*}) \| \xi^{p_k} - \xi^{p^\star} \|^2 
+ 2\kappa \gamma_\Gamma^{*} \| \xi^{p_k} - \xi^{p^\star} \|^2 
+ \gamma_R \| \xi^{p_k} - \xi^{p^\star} \|^2 \\
&= \left( \kappa(\gamma_H^{*} + 3\gamma_\Gamma^{*}) + \gamma_R \right) \| \xi^{p_k} - \xi^{p^\star} \|^2,
\end{align*}
which completes the proof.
\end{proof}
Please note that, in the proof, \( \gamma_R \) is a constant associated with the retraction \( \hat{R}_{\xi} \) under the connection \( \nabla \), while \( \gamma_H^{*} \) and \( \gamma_\Gamma^{*} \) are constants associated with \( \mathbf{H}^{*\top} \) and \( \hat{\Gamma}_{\xi, \mathbf{a}}^{*} \) under the dual connection \( \nabla^{*} \), respectively. This distinction also shows the underlying duality between \( \nabla \) and \( \nabla^{*} \) in the context of information geometry. 

\section{Experiments}
\label{chap6}
% \subsection{Experiments}
To verify the local quadratic convergence of the dual Riemannian Newton method, 
 we conduct numerical evaluation under three concrete optimization problems: (1) Minimization of the KL divergence with $\ell_2$ regularization for a log-linear model, (2) $\alpha$-divergence minimization for Gaussian models, and (3) MLE for Beta mixtures.

Since the dual Riemannian Newton method requires a pair of dual affine connections, 
we adopt the $\alpha$-connections defined on the probability distribution family \( p(x; \xi) \), which are introduced in Example~\ref{alpha_geometry} as the canonical affine connections 
in the framework of information geometry.

Moreover, we compare the dual Riemannian Newton method with the natural gradient method and Adam~\cite{Kingma2014}, which is widely used for optimization of neural networks, to examine how efficient the local quadratic convergence is compared with other approaches.

\subsection{Experiment1: Log-linear Models with Higher-Order Interactions}
We first examine the performance of optimization methods in the scenario of \emph{interaction decomposition}, which is a central topic in multivariate discrete modeling: it hierarchically separates the main effects, pairwise effects, and higher-order interactions, clarifying which orders of interaction are necessary for reliable modeling and interpretation \cite{nakahara2006comparison}. Specifically, \emph{log-linear models}~\cite{Agresti2013} provide a natural representation of this decomposition within the exponential family framework, endowed with the geometric structure of dual coordinates \((\theta,\eta)\) and dual flatness under \((\nabla^{(e)},\nabla^{(m)})\). 
In practice, pairwise modeling, known as Boltzmann machines or Ising models, is widely used for learning and inference \cite{Ackley1985}, and their tensor formulations are known to be connected to structured higher–order interactions as well as scaling/balancing perspectives~\cite{sugiyama2017tensor}. 
In this context, we explicitly describe the log–linear modeling framework that realizes interaction decomposition for empirical distributions, and then formulate the optimization problems and algorithmic comparisons that follow.

More precisely, we use an $n$th-order tensor 
$
\hat{\mathcal{P}} \in \mathbb{R}_{> 0}^{2 \times 2 \times \cdots \times 2}
$
to represent an empirical distribution; that is, we assume that $\hat{\mathcal{P}}$ satisfies $\hat{\mathcal{P}} \in \mathcal{S}_{\Omega}$ for the set $\mathcal{S}_{\Omega}$ of all strictly positive distributions on $\{0,1\}^n$.
Its log-probability can be expressed as a log-linear model with interaction terms indexed by $\Omega$, hence $\mathcal{S}_{\Omega}$ is explicitly described as
\[
\mathcal{S}_{\Omega}
:= 
\biggl\{
\mathcal{P} \in \mathbb{R}_{>0}^{2 \times \cdots \times 2}
\;\Big|\;
\log \mathcal{P}(x) = 
\sum_{A \in \Omega} \theta^{A} \prod_{i \in A} x^i - \psi(\theta)
\biggr\},
\]
where $\Omega=\bigcup_{k=1}^{n} \{\,i_1 i_2 \cdots i_k \mid 1 \leq i_1 < i_2 < \cdots < i_k \leq n\,\}$.
Thus $\log \mathcal{P}(x)$ can be alternatively written as
\begin{align}\label{eq:loglinear}
\log \mathcal{P}(x) 
= \sum_i \theta^i x^i 
  + \sum_{i<j} \theta^{ij} x^i x^j 
  + \sum_{i<j<k} \theta^{ijk} x^i x^j x^k 
  + \cdots 
  + \theta^{1\ldots n} x^1 x^2 \cdots x^n 
  - \psi,
\end{align}
where $x =(x^1, \ldots, x^n)\in \{0,1\}^n$ is an $n$-dimensional binary vector, 
$\theta = (\theta^1, \ldots, \theta^{1\ldots n})$ is a natural parameter vector, 
and $\psi$ is the log-partition function (normalizing constant). 
The corresponding expectation parameters 
$\eta = (\eta^1, \ldots, \eta^{1\ldots n})$ 
represent the expected values of variable combinations:
\[
\begin{aligned}
\eta^i &= \mathbb{E}[x^i] = \Pr(x^i = 1), \\
\eta^{ij} &= \mathbb{E}[x^i x^j] = \Pr(x^i = x^j = 1), \dots, \\
\eta^{1\ldots n} &= \mathbb{E}[x^1 \cdots x^n] = \Pr(x^1 = \cdots = x^n = 1).
\end{aligned}
\]
The log-linear model belongs to the exponential family, 
which forms a dually flat manifold equipped with a pair of affine connections 
$(\nabla^{(e)}, \nabla^{(m)})$. The manifold is not flat if \( \alpha \neq \pm1 \). The fully connected Boltzmann machine $\mathcal{S}_{\mathcal{B}}$~\cite{Ackley1985} is a special case of the log-linear model given in Equation~\ref{eq:boltzmann_family}, which considers up to second-order interactions of the model.
Formally, $\Omega$ is replaced with $\mathcal{B}=\{\, i \mid 1 \leq i \leq n \,\}
\;\cup\;
\{\, ij \mid 1 \leq i < j \leq n \,\}$
and the resulting log probability is given as 
\begin{equation}
\mathcal{S}_{\mathcal{B}}
:= 
\biggl\{
\mathcal{P} \in \mathbb{R}_{>0}^{2 \times \cdots \times 2}
\;\Big|\;
\log \mathcal{P}(\mathbf{x}) =
\sum_{i} \theta^{i} x^{i}
+ \sum_{i<j} \theta^{ij} x^{i} x^{j}
- \psi(\theta)
\biggr\},
\label{eq:boltzmann_family}
\end{equation}
where $\theta^i$ are called biases, $\theta^{ij}$ interaction weights, 
and $\psi(\theta)$ the log-partition function. 
This also forms a dually flat submanifold equipped 
with the affine connections $(\nabla^{(e)}, \nabla^{(m)})$~\cite{Amari2001}.
The goal of optimization is to estimate the parameters $\theta$ such that 
the model distribution $\mathcal{P} \in \mathcal{S}_{\mathcal{B}}$ approximates the empirical distribution $\hat{\mathcal{P}} \in \mathcal{S}_{\Omega}$, which is typically assumed to be $\hat{\mathcal{P}} \notin \mathcal{S}_{\mathcal{B}}$. 

It is often formulated as the minimization of the KL divergence; that is, the optimization problem is given as
\begin{align}\label{eq:klop}
    \bar{\mathcal{P}} &= \mathop{\mathrm{argmin}}_{\mathcal{P} \in \mathcal{S}_{\mathcal{B}}} D_{KL}(\hat{\mathcal{P}}, \mathcal{P}).
\end{align}
By applying the \emph{Legendre transformation} to $\psi(\theta)$, we have $\varphi(\eta)$ given as
$$
\varphi(\eta)=\max_{\bar{\theta}}\left(\bar{\theta} \eta - \psi(\bar{\theta})\right),
\qquad \bar{\theta} \eta = \sum_{x \in \mathcal{B}} \bar{\theta}^{x} \eta^x.
$$
Using both $\psi(\theta)$ and $\varphi(\eta)$, we can write the KL-divergence as the $\nabla$-divergence:
$$D_{KL}(\hat{\mathcal{P}}, \mathcal{P})=D[\mathcal{P}, \hat{\mathcal{P}}]=\psi\left(\theta_{\mathcal{P}}\right)+\varphi\left(\eta_{\hat{\mathcal{P}}}\right)- \sum_i\theta^{i}_{\mathcal{P}} \eta^{i}_{\hat{\mathcal{P}}}-\sum_{i<j} \theta^{ij}_{\mathcal{P}}\eta^{ij}_{\hat{\mathcal{P}}}.$$
Please note that $\eta_{\hat{\mathcal{P}}} = (\eta^1_{\hat{\mathcal{P}}}, \ldots, \eta^{1\ldots n}_{\hat{\mathcal{P}}})$ is a constant vector determined by $\hat{\mathcal{P}}$ and not treated as a variable, and $\theta_{\mathcal{P}} = (\theta^1, \ldots, \theta^{(n-1)n})$ denotes the corresponding natural parameter vector in the Boltzmann machine.

As established in Theorem~\ref{natural_equal_newton}, 
the natural gradient method is equivalent to the dual Riemannian Newton method 
in the optimization problem~\eqref{eq:klop}. 
However, when a regularization term is added to the objective:
\begin{align*}
    \bar{\mathcal{P}} 
    &= \mathop{\mathrm{argmin}}_{\mathcal{P} \in \mathcal{S}_{\mathcal{B}}} 
    \Bigg(
    D_{\mathrm{KL}}\bigl(\hat{\mathcal{P}}, \mathcal{P}\bigr) 
    + \lambda_1 \sum_i \bigl(\theta^{i}_{\mathcal{P}}\bigr)^2 
    + \lambda_2 \sum_{i<j} \bigl(\theta^{ij}_{\mathcal{P}}\bigr)^2
    \Bigg),
\end{align*}
with tunable hyperparameters \( \lambda_1 \) and \( \lambda_2 \) controlling 
the strength of regularization, the equivalence no longer holds. 
In this case, the natural gradient and dual Riemannian Newton methods 
follow different update dynamics due to the presence of the regularization term.

In this setting, the dual Riemannian Newton method 
with different $\alpha$ values (corresponding to different retractions $R^{(\alpha)}$) is expected to show faster convergence than the natural gradient method in both the $\theta$- and $\eta$-coordinate systems (the latter being equivalent to the mirror descent) with line search satisfying the Wolfe conditions~\cite{Boumal2023}, as well as the Euclidean optimizer \emph{Adam}, in the vicinity of a local optimum.

To construct the target distribution \(\hat{\mathcal{P}}\), 
we first consider the exponential family \(\mathcal{S}_{\Omega}\) 
and assign random interaction parameters \(\theta^A\) for each interaction subset \( A \) with order \( r = |A| \). 
Specifically, the parameter \(\theta^A\) is sampled uniformly from the interval
\[
\theta^A \sim \mathcal{U}\biggl[-\frac{\texttt{base\_scale}}{r},\, \frac{\texttt{base\_scale}}{r}\biggr],
\]
where the base scaling factor is set to \(\texttt{base\_scale} = 1.0\). 
This construction yields larger parameter magnitudes for lower-order interactions 
and smaller ones for higher-order interactions, ensuring numerical stability. 
The distribution \(\hat{\mathcal{P}}\) is then obtained by evaluating 
and normalizing the unnormalized log-probabilities over all \(2^n\) binary states, 
resulting in an exact ground-truth distribution without sampling noise. 
This synthetic distribution is used to verify the efficiency of the proposed dual Riemannian Newton method.

We conducted numerical experiments under two configurations:  
(i) $(\lambda_1, \lambda_2, n) = (0.3, 0.8, 8)$ with initial $\theta$ values sampled from $\mathrm{Uniform}(-0.1, 0.0)$, and  
(ii) $(\lambda_1, \lambda_2, n) = (0.5, 0.5, 4)$ with initial $\theta$ values sampled from $\mathrm{Uniform}(-0.25, 0.2)$.  

The corresponding results are illustrated in Figure~\ref{fig:regular_term}. Moreover, Table~\ref{tab:runtime_n4_n8} reports the wall-clock runtime and the average per-iteration time for all methods under the identical initialization, measured on an Apple M3 Pro with 36GB of memory. In both cases, our proposal of the dual Riemannian Newton method is significantly faster than other optimization methods: natural gradient, mirror descent, and Adam. Moreover, the case of $\alpha = 0$ (that is, the Levi--Civita connection) does not achieve the fastest convergence in either experiment, further emphasizing the advantage of the proposed dual Riemannian Newton method through the flexibility of choosing an appropriate $\alpha$-connection, which is not limited to $\alpha=0$. As Theorem~\ref{euc_hession} shows, the case $\alpha=1$ takes the least time, since it coincides with Newton’s method in Euclidean space. Moreover, based on the quadratic convergence result established in Theorem~\ref{convergence}, we have:
$$
\| \xi^{p_{k+1}} - \xi^{p^\star} \| 
\leq \left( \kappa(\gamma_H^{*} + 3\gamma_\Gamma^{*}) + \gamma_R \right) 
\| \xi^{p_k} - \xi^{p^\star} \|^2.
$$
When the optimization is performed under the $\alpha$-connection, this inequality becomes:
$$
\| \xi^{p_{k+1}} - \xi^{p^\star} \| 
\leq \left( \kappa(\gamma_H^{(-\alpha)} + 3\gamma_\Gamma^{(-\alpha)}) + \gamma_R^{(\alpha)} \right) 
\| \xi^{p_k} - \xi^{p^\star} \|^2.
$$
This result implies that while the proposed dual Riemannian Newton method guarantees quadratic convergence for any choice of $\alpha$-connection, the convergence rate depends on the affine connection and is governed by the coefficient
\[
\left( \kappa(\gamma_H^{(-\alpha)} + 3\gamma_\Gamma^{(-\alpha)}) + \gamma_R^{(\alpha)} \right).
\]
This theoretical observation aligns well with our experimental findings, highlighting the sensitivity of convergence speed to the choice of affine connection.
\begin{figure}[t]
  \centering
  \includegraphics[width=0.45\textwidth]{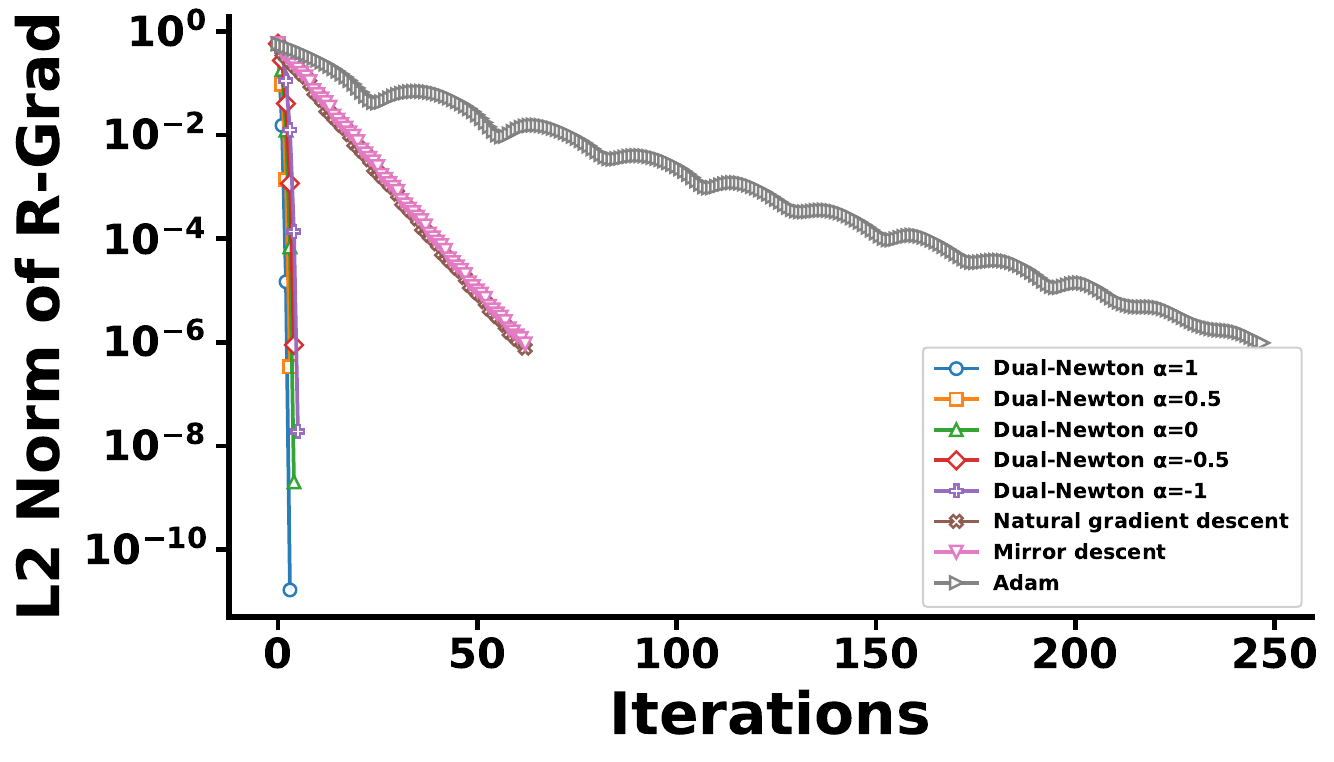}
  \includegraphics[width=0.45\textwidth]{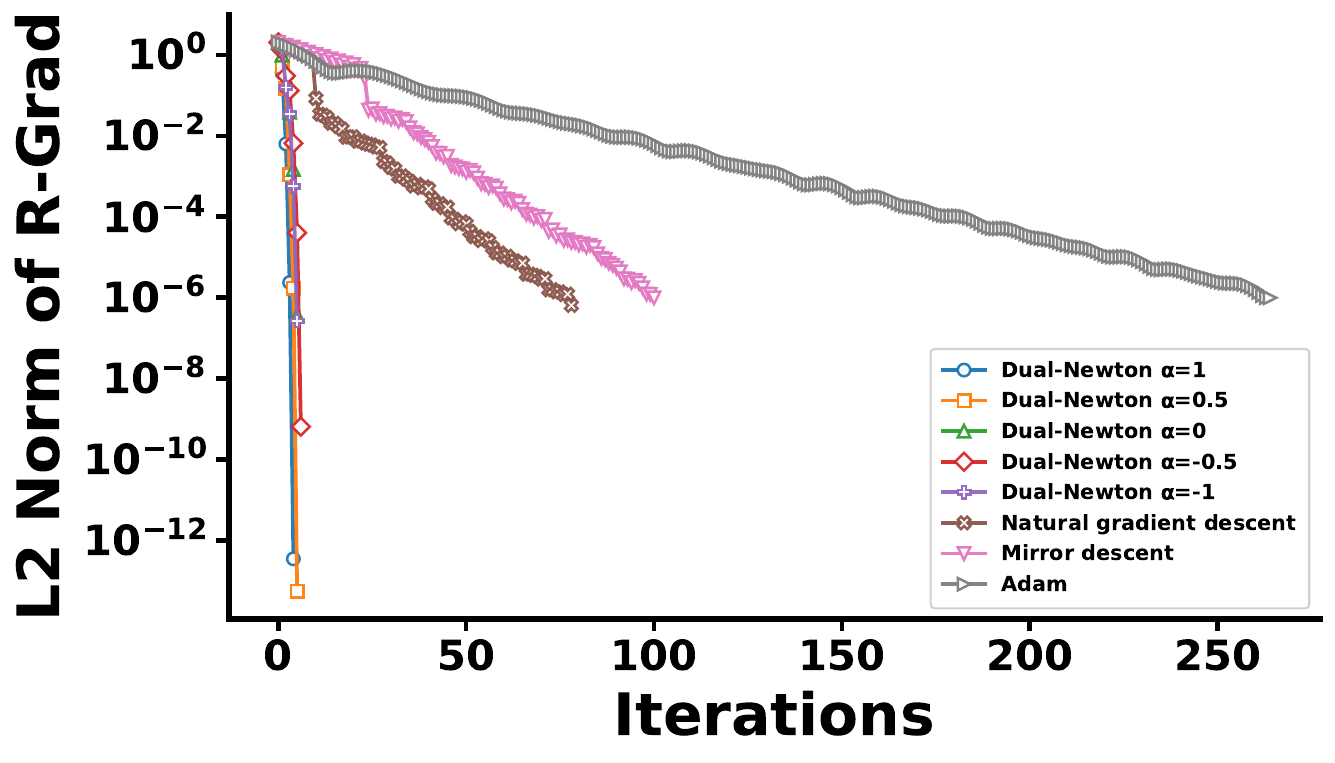}
  \includegraphics[width=0.45\textwidth]{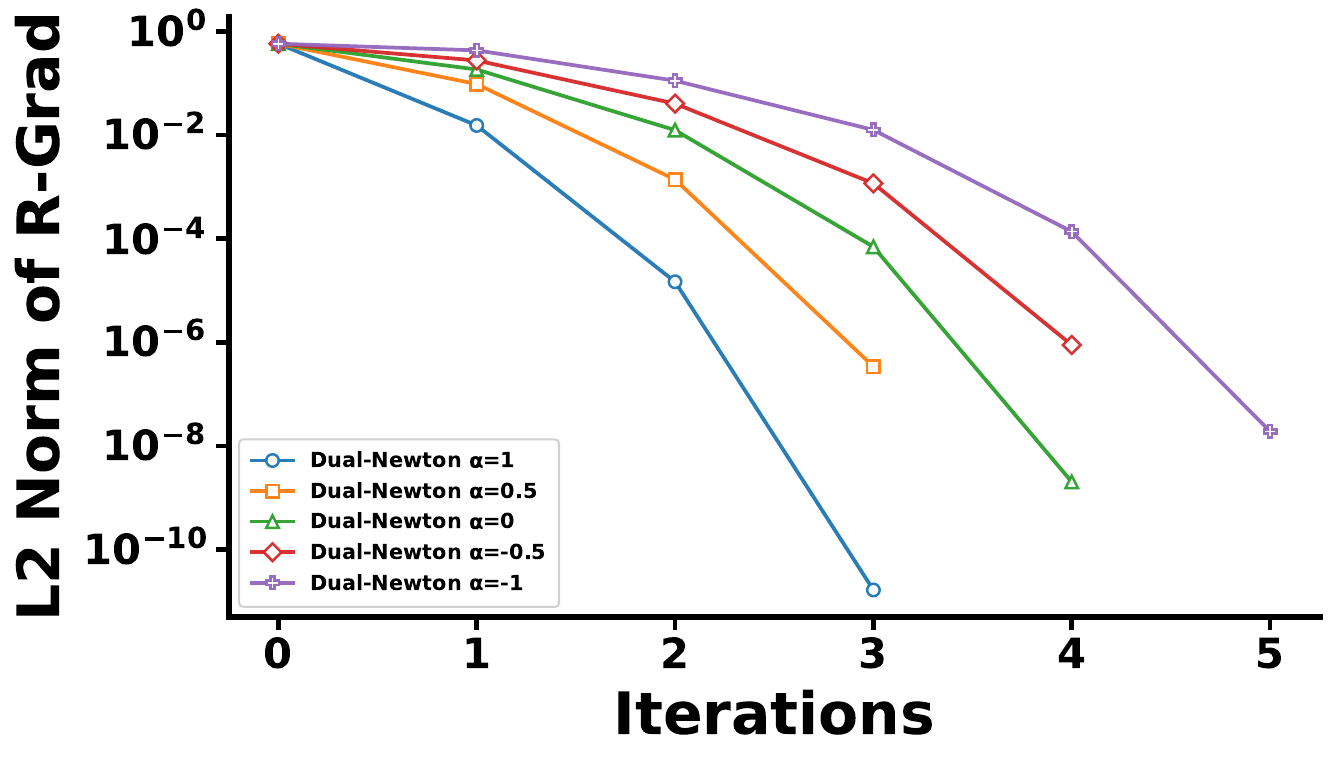}
  \includegraphics[width=0.45\textwidth]{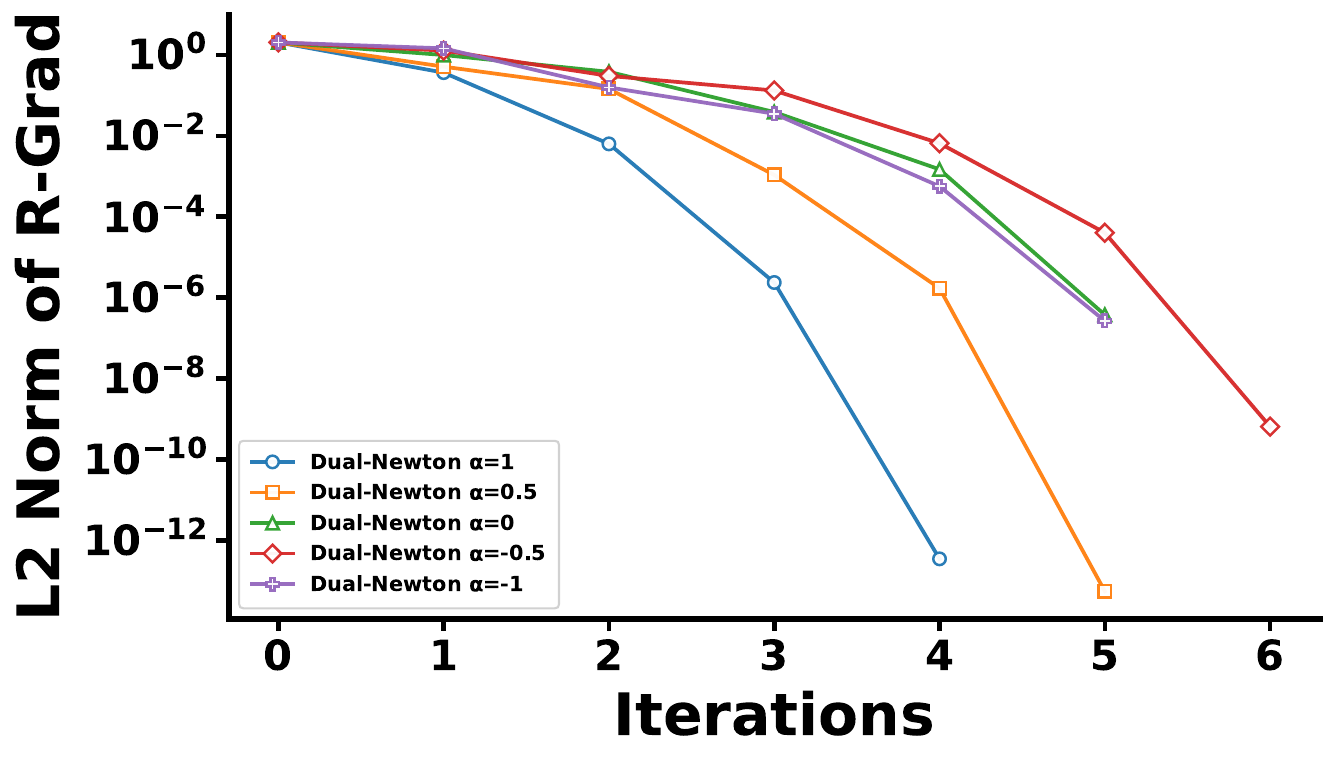}
  \caption{Number of iterations for each method. (Left): \((\lambda_1, \lambda_2, N) = (0.5, 0.5, 4)\). (Right): \((\lambda_1, \lambda_2, N) = (0.3, 0.8, 8)\). The bottom row enlarges the first five or six steps of the dual Riemannian Newton method for clarity.}
  \label{fig:regular_term}
\end{figure}

% \begin{figure}[htbp]
%   \centering
%   \includegraphics[width=0.45\textwidth]{gradient_norms_DN_only_N=4.pdf}
%   \includegraphics[width=0.45\textwidth]{gradient_norms_DN_only_N=8.pdf}
%   \caption{Number of iterations for the dual Riemannian Newton method segment from the above panel. Left: $(\lambda_1,\lambda_2,N)=(0.5,0.5,4)$. Right: $(\lambda_1,\lambda_2,N)=(0.3,0.8,8)$.}

%   \label{fig:regular_term_newtown}
% \end{figure}

\begin{table}[t]
\centering
\setlength{\tabcolsep}{7pt}
\caption{Total runtime, number of iterations, and average per-iteration time for each method (stopped when the Riemannian gradient is less than $ 10^{-6}$) in Experiment 1. The lowest total runtime is underlined.}
\label{tab:runtime_n4_n8}
\begin{subtable}{0.48\linewidth}
\centering
\caption{$N=4$}
\begin{tabular}{lccc}
\toprule
Method & Iters & Total & Avg. \\
\midrule
Dual-Newton $\alpha=1$    & 4   & \underline{0.001} & 0.000333 \\
Dual-Newton $\alpha=0.5$  & 4   & \underline{0.001} & 0.000178 \\
Dual-Newton $\alpha=0$    & 5   & \underline{0.001} & 0.000172 \\
Dual-Newton $\alpha=-0.5$ & 5   & \underline{0.001} & 0.000174 \\
Dual-Newton $\alpha=-1$   & 6   & \underline{0.001} & 0.000174 \\
Natural Gradient          & 63  & 0.014 & 0.000225 \\
Mirror Descent            & 63  & 0.039 & 0.000608 \\
Adam                      & 248 & 0.008 & 0.000030 \\
\bottomrule
\end{tabular}
\end{subtable}\hfill
\begin{subtable}{0.48\linewidth}
\centering
\caption{$N=8$}
\begin{tabular}{lccc}
\toprule
Method & Iters & Total & Avg. \\
\midrule
Dual-Newton $\alpha=1$    & 5   & \underline{0.006} & 0.001177 \\
Dual-Newton $\alpha=0.5$  & 6   & 0.079 & 0.013139 \\
Dual-Newton $\alpha=0$    & 6   & 0.080 & 0.013410 \\
Dual-Newton $\alpha=-0.5$ & 7   & 0.123 & 0.017535 \\
Dual-Newton $\alpha=-1$   & 6   & 0.101 & 0.016900 \\
Natural Gradient          & 79  & 0.247 & 0.003132 \\
Mirror Descent            & 101 & 0.364 & 0.003590 \\
Adam                      & 265 & 0.232 & 0.000876 \\
\bottomrule
\end{tabular}
\end{subtable}
\end{table}

\subsection{Experiment 2: $\alpha$-divergence Minimization}
We next consider the projection problem based on the $\alpha$-divergence, 
\[
f = D_{\bar{\alpha}}(p\|q)
= \frac{4}{1-\bar{\alpha}^2}\left(
  1 - \int_{\mathcal{X}}
      p(x)^{\frac{1+\bar{\alpha}}{2}}
      q(x)^{\frac{1-\bar{\alpha}}{2}} \,\mathrm{d}x
\right).
\]
The use of $\alpha$-divergence other than the Kullback--Leibler divergence is advantageous because it allows flexible control over the projection behavior~\cite{HernandezLobato2016,Daudel2023,Ghalamkari2024} and the robustness~\cite{George2021} against the model misspecification.
In this experiment, $\bar{\alpha}$ is fixed at 3, and distributions $p$ and $q$ are given as
\[
p(x)
= \mathcal{N}\!\left(
x \,\middle|\,
\begin{bmatrix}\mu_1\\ \mu_2\end{bmatrix},
\begin{bmatrix}\sigma_1^{2} & 0\\[2pt] 0 & \sigma_2^{2}\end{bmatrix}
\right),
\qquad
q(x)
= \mathcal{N}\!\left(
x \,\middle|\,
\begin{bmatrix}\mu\\ \mu\end{bmatrix},
\begin{bmatrix}\sigma^{2} & 0\\[2pt] 0 & \sigma^{2}\end{bmatrix}
\right).
\]
We set $\mu_1=2$, $\mu_2=1.5$, $\sigma_1=1.3$, and $\sigma_2=0.7$ and estimate the parameters $\mu, \sigma$ for $q$ by minimizing the $\alpha$-divergence.
Here we introduce the $\alpha$-connection in $q(x)$.
The Fisher information matrix for this model is given by:
\[
\mathbf{G}(\sigma) = 
\begin{pmatrix}
\frac{2}{\sigma^2} & 0 \\
0 & \frac{4}{\sigma^2}
\end{pmatrix}.
\]
The corresponding Christoffel symbols of the second kind associated with the $\alpha$-connection are:
\[
\Gamma^{k(\alpha)}_{ij}(\sigma) =
\begin{cases}
\begin{pmatrix}
0 & -\dfrac{1+\alpha}{\sigma} \\
-\dfrac{1+\alpha}{\sigma} & 0
\end{pmatrix}, & \text{if } k = 1, \\[1.2em]
\begin{pmatrix}
\dfrac{1-\alpha}{2\sigma} & 0 \\
0 & -\dfrac{1+2\alpha}{\sigma}
\end{pmatrix}, & \text{if } k = 2.
\end{cases}
\]
It is straightforward to verify that the manifold is not flat when \( \alpha \neq \pm1 \). Moreover, the $\alpha$-divergence admits the following closed-form expression~\cite{Ghosh2008}:
\[
\frac{4}{1-\bar{\alpha}^2} \left\{
1 -
\left(\sigma_1^2 \sigma_2^2\right)^{\frac{1-\bar{\alpha}}{4}}
\left(\sigma^4\right)^{\frac{1+\bar{\alpha}}{4}}
\left[
\left( \frac{1+\bar{\alpha}}{2} \sigma^2 + \frac{1-\bar{\alpha}}{2} \sigma_1^2 \right)
\left( \frac{1+\bar{\alpha}}{2} \sigma^2 + \frac{1-\bar{\alpha}}{2} \sigma_2^2 \right)
\right]^{-\frac12}
\right.
\]
\[
\left.
\quad \cdot \exp\left[
-\frac{1-\bar{\alpha}^2}{8}
\left(
\frac{(\mu_1 - \mu)^2}{
\frac{1+\bar{\alpha}}{2} \sigma^2 + \frac{1-\bar{\alpha}}{2} \sigma_1^2 }
+
\frac{(\mu_2 - \mu)^2}{
\frac{1+\bar{\alpha}}{2} \sigma^2 + \frac{1-\bar{\alpha}}{2} \sigma_2^2 }
\right)
\right]
\right\}.
\]

In detail, in this experiment, we first set the divergence parameter to $\bar{\alpha} = 3$ in the $\alpha$-divergence, and choose $\alpha = -0.4$, $-0.2$, $0.0$, $0.2$, or $0.4$ to define the corresponding $\alpha$-connections. The performance of the proposed dual Riemannian Newton method is then compared with that of the natural gradient and Adam methods. All methods are implemented in the \((\mu, \sigma)\) coordinate system, and the natural gradient method adopts a line search procedure satisfying the Wolfe conditions.

The corresponding results are illustrated in Figure~\ref{fig:alpha_div}. We also illustrate optimization paths from distinct retractions \(R^{(\alpha)}\) in Figure~\ref{fig:path}. Moreover, Table~\ref{tab:runtime_gaussion} shows that the proposed dual Riemannian Newton method achieved convergence within 10--13 iterations, whereas the natural--gradient and Adam methods required 89 and 314 iterations, respectively. 
Despite similar per-iteration costs, the total runtime of the dual Riemannian Newton method is more than 10 times shorter than the other methods. 
As illustrated in Figure~\ref{fig:path}, its trajectories head almost straight toward the optimum, confirming that the proposed method efficiently captures the manifold curvature and converges more directly than first-order methods.

%Moreover, Table~\ref{tab:runtime_gaussion} reports the wall–clock runtime and the average per–iteration time for all methods under a common setup (identical initialization and data). As illustrated in Figure~\ref{fig:path}, the trajectories of the dual Riemannian Newton method proceed almost straight toward the optimum, indicating that the proposed method effectively captures the manifold curvature and achieves more direct convergence than first-order approaches.

\begin{figure}[t]
  \centering
  \includegraphics[width=0.45\textwidth]{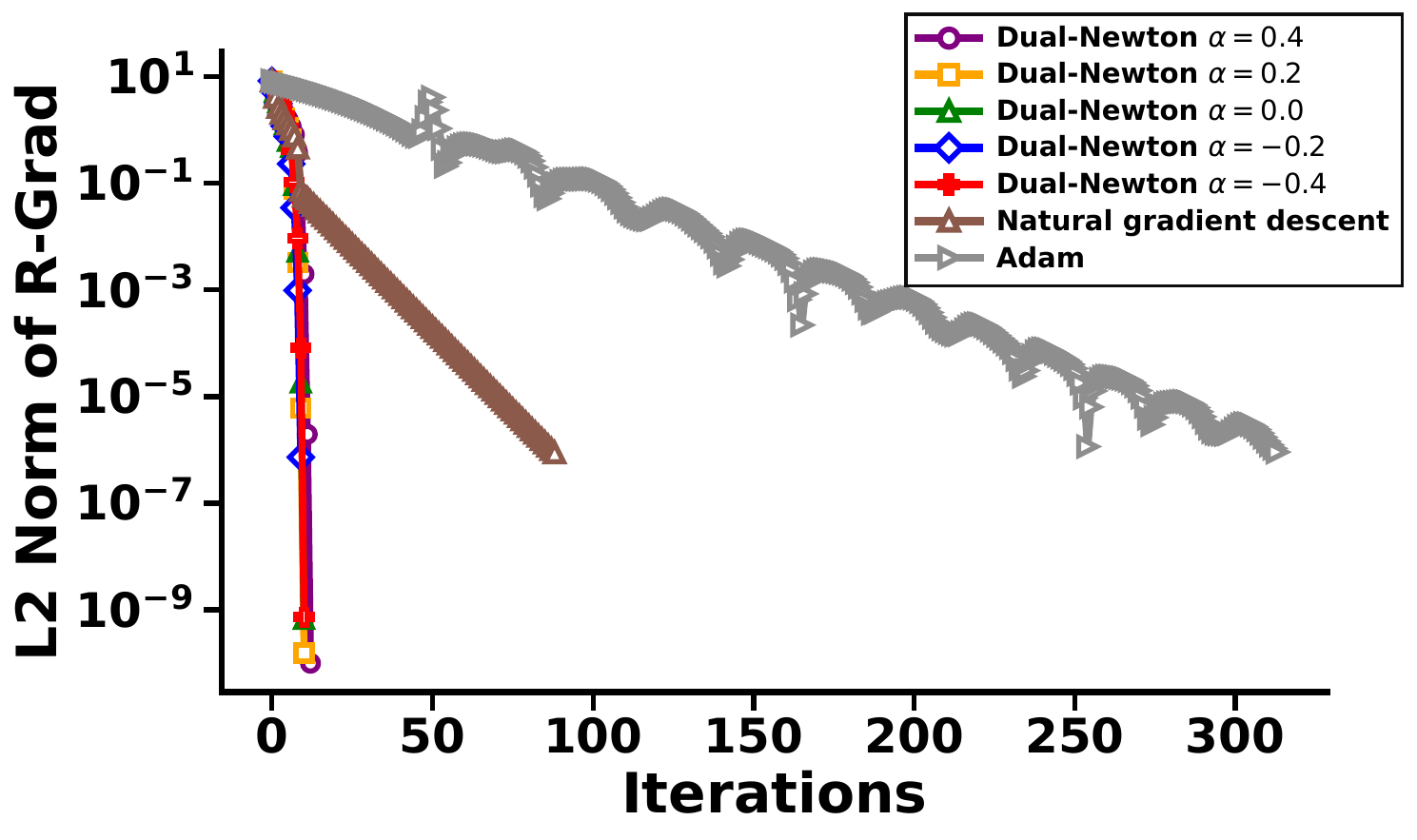}
  \includegraphics[width=0.45\textwidth]{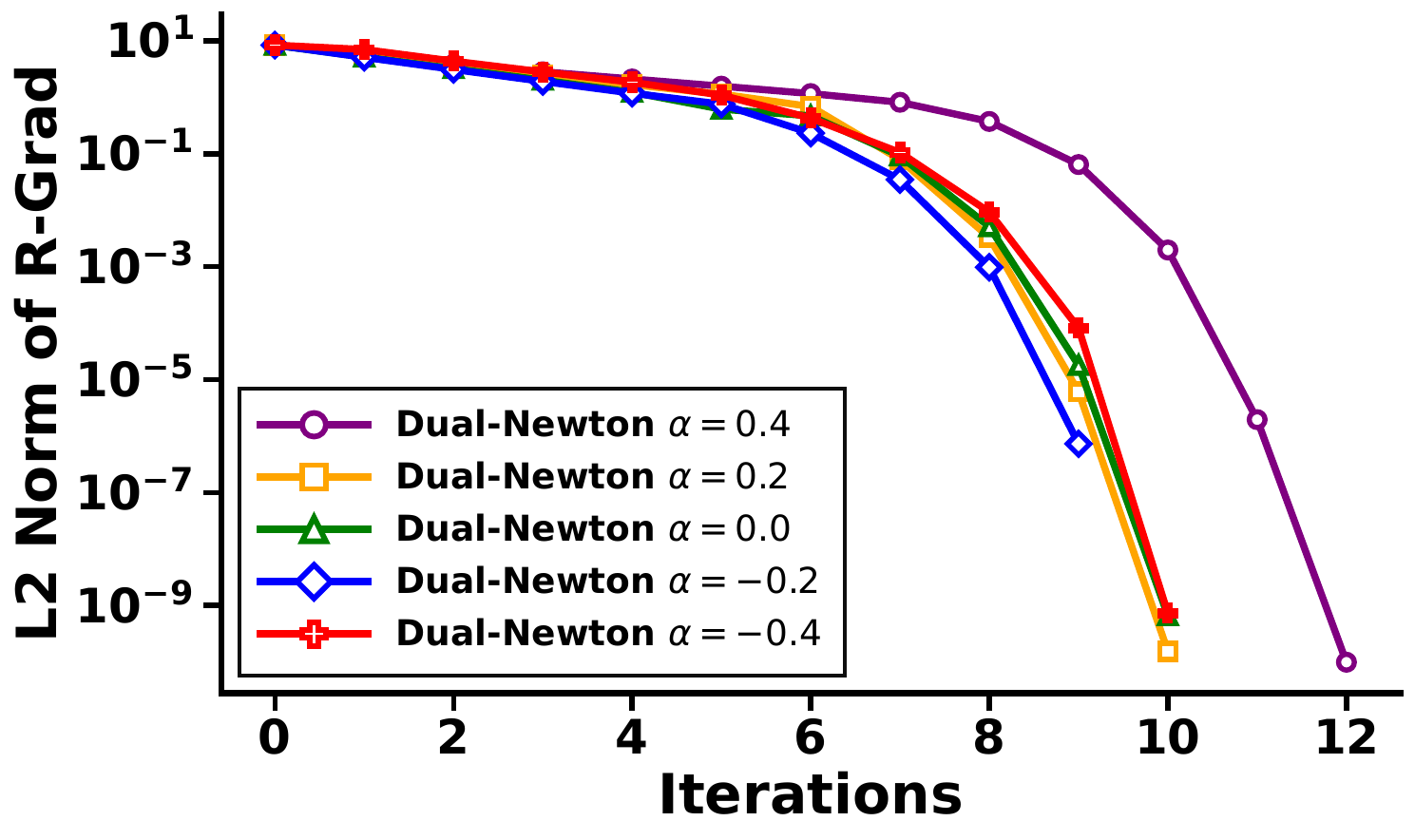}
  \caption{(Left): The evolution of the Riemannian gradient norm versus iteration for each method. (Right): A magnified view of the first 12 steps of the dual Riemannian Newton method from the left panel.}
  \label{fig:alpha_div}
\end{figure}

%\begin{figure}[t] 
%  \centering
%  \includegraphics[width=0.65\linewidth]{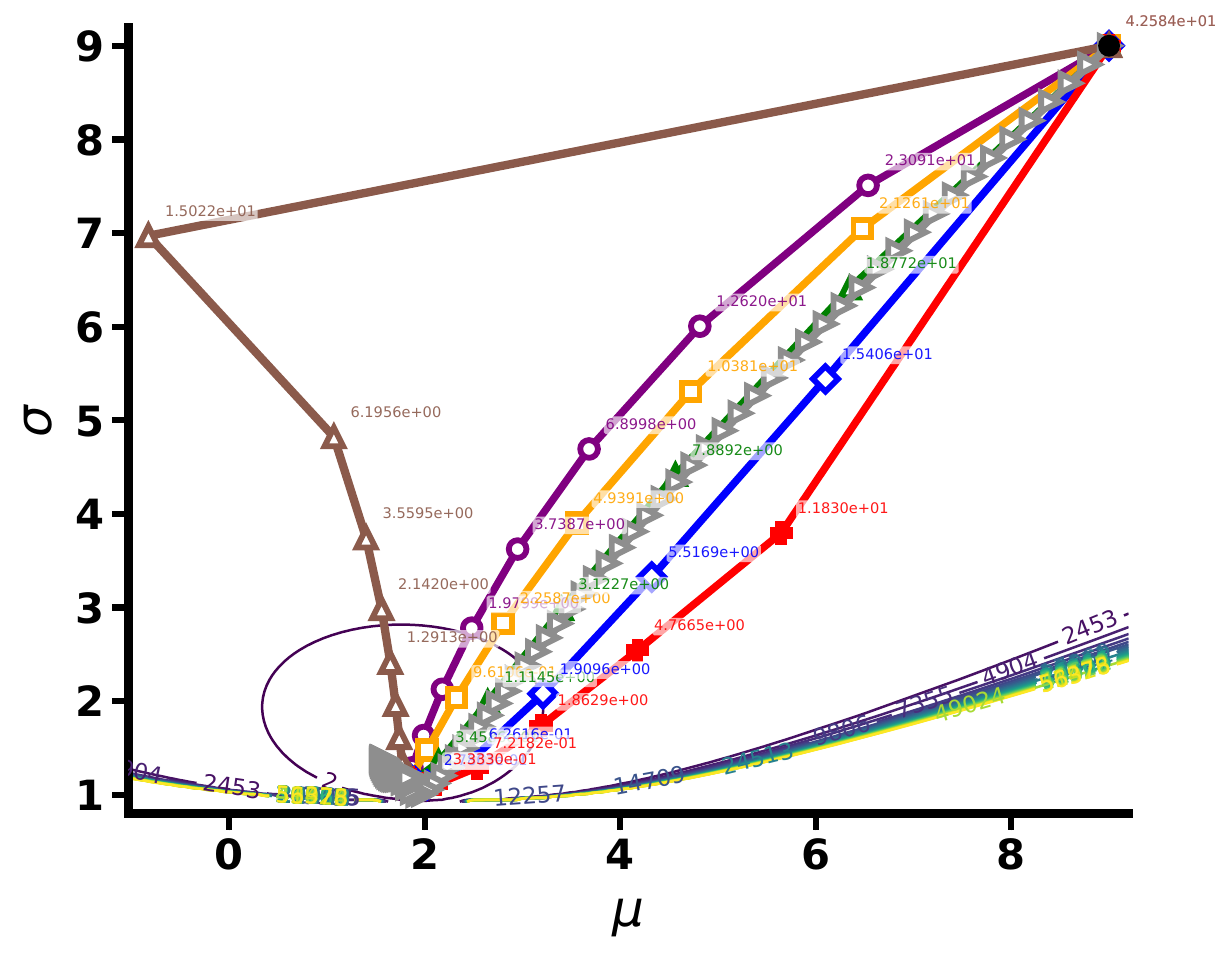} 
% \caption{Optimization paths for each method. The numbers above each step denote the value of the \(\alpha\)-divergence at that iteration. For clarity, only the first five iterations are shown.}
%  \label{fig:path}
%\end{figure}

\begin{figure}[t]
  \centering

  \includegraphics[width=0.65\linewidth]{paths_with_ng_legend_bottom.pdf}

  \vspace{4pt} 
  \includegraphics[width=0.90\linewidth]{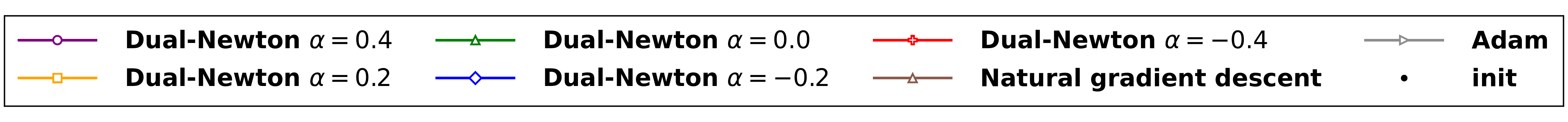}

  \caption{Optimization paths for each method. The numbers over steps denote the
  $\alpha$-divergence at that iteration (first five iterations shown). The bottom panel
  summarizes the legend for all methods.}
  \label{fig:path}
\end{figure}

\begin{table}[t]
\centering
\caption{Total runtime, iteration count, and average per-iteration time for each method (stopped when the Riemannian gradient is less than $ 10^{-6}$) in Experiment 2. The lowest total runtime is underlined.}
\begin{tabular}{lccc}
\toprule
Method & Iterations & Total Time (s) & Avg.\ per Iteration (s) \\
\midrule
Dual-Newton $\alpha=0.4$  & 13  & 0.0007 & 0.000053 \\
Dual-Newton $\alpha=0.2$  & 11  & 0.0006 & 0.000053 \\
Dual-Newton $\alpha=0$    & 11  & 0.0006 & 0.000053 \\
Dual-Newton $\alpha=-0.2$ & 10  & \underline{0.0005} & 0.000054 \\
Dual-Newton $\alpha=-0.4$ & 11  & 0.0008 & 0.000072 \\
Natural Gradient          & 89  & 0.0079 & 0.000089 \\
Adam                      & 314 & 0.0184 & 0.000059 \\
\bottomrule
\end{tabular}
\label{tab:runtime_gaussion}
\end{table}

\subsection{Experiment 3: Beta Mixture Model}
Finally, to demonstrate the efficiency of the proposed dual Riemannian Newton method for  \emph{arbitrary} probability distributions beyond the exponential family, 
we apply it in this experiment to the maximum likelihood estimation of a fixed-weight \emph{Beta mixture model}. 
The optimization problem is formulated as follows:
\begin{equation}
    \min_{\{\alpha_k,\, \beta_k\}_{k=1}^K} \!\mathcal{L}(\alpha, \beta),
    \quad
    \mathcal{L}(\alpha, \beta)
    = 
    - \sum_{i=1}^{N} 
    \log \!\left[
        \sum_{k=1}^{K} 
        w_k \,
        \mathrm{Beta}\!\left(
            x_{i,1} \mid \alpha_k, \beta_k
        \right)
        \mathrm{Beta}\!\left(
            x_{i,2} \mid \alpha_k, \beta_k
        \right)
    \right],
    \label{eq:mle_beta_mixture}
\end{equation}
where $K$ denotes the number of mixture components, 
$\alpha_k$ and $\beta_k$ are the shape parameters of the $k$-th Beta component, 
and $w_k$ are fixed mixture weights satisfying $\sum_{k=1}^{K} w_k = 1$.
The objective $\mathcal{L}(\alpha, \beta)$ represents the negative log-likelihood of the observed data under the Beta mixture model.
The dataset of $N=5000$ samples is generated from a known mixture with parameters 
$w = [0.35,\, 0.4,\, 0.25]$, 
$\alpha_{\mathrm{true}} = [2.0,\; 3.0,\; 5.0]$, and 
$\beta_{\mathrm{true}} = [5.0,\; 2.0,\; 3.5]$.
All methods are initialized from the same starting point. 
We compare the proposed dual Riemannian Newton method 
(with $\alpha \in \{0,\; 0.25,\; 0.5,\; 0.75,\; 1\}$) 
against the natural gradient method (with strong Wolfe line search) and a popularly used Euclidean optimizer, Adam. 

The convergence speed is evaluated using the $\ell_2$ norm of the Riemannian gradient at each iteration, 
as shown in Fig.~\ref{fig:beta_mix}. Moreover, Table~\ref{tab:runtime_beta} reports the wall-clock runtime and the average per-iteration time for all methods under a common setup (identical initialization and data). It can be observed that the dual Riemannian Newton method converges locally within eight iterations, 
achieving a significantly faster reduction of the Riemannian gradient norm than both the natural gradient method and Adam. 

\begin{figure}[t]
  \centering
  \includegraphics[width=0.45\textwidth]{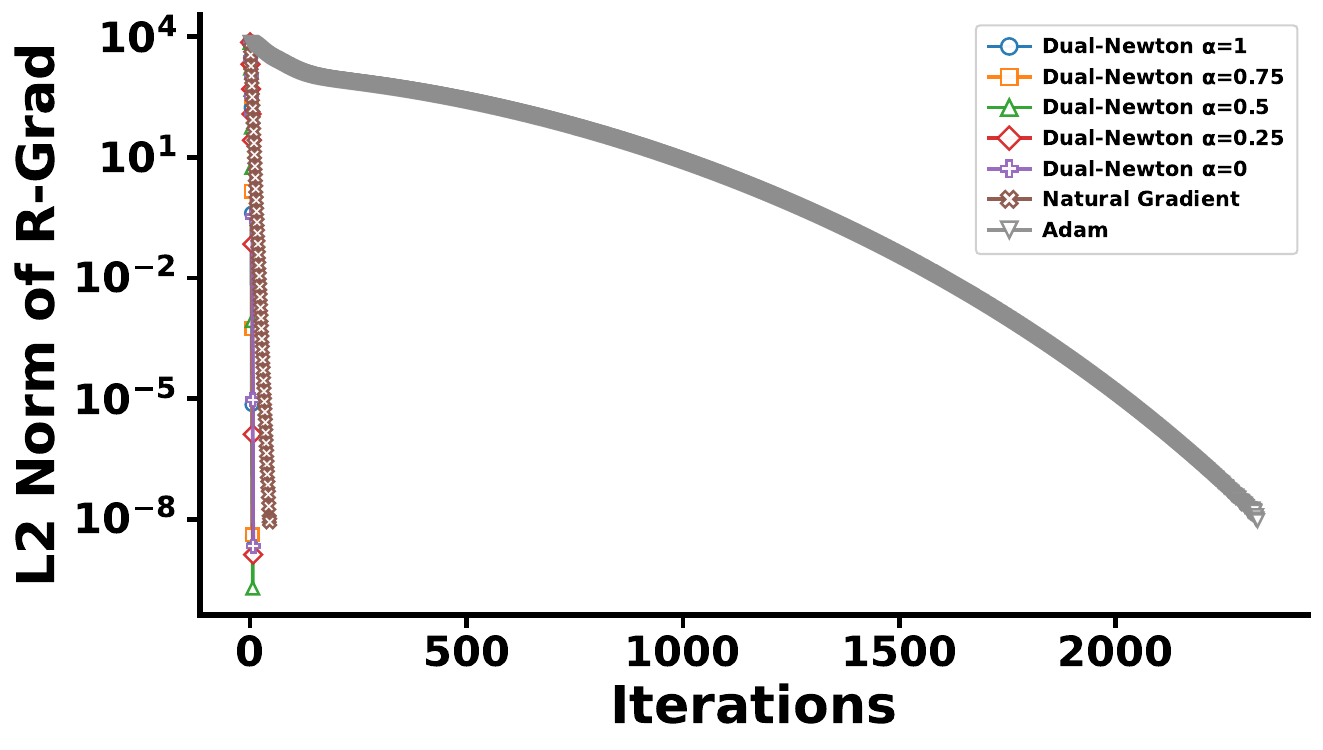}
  \includegraphics[width=0.45\textwidth]{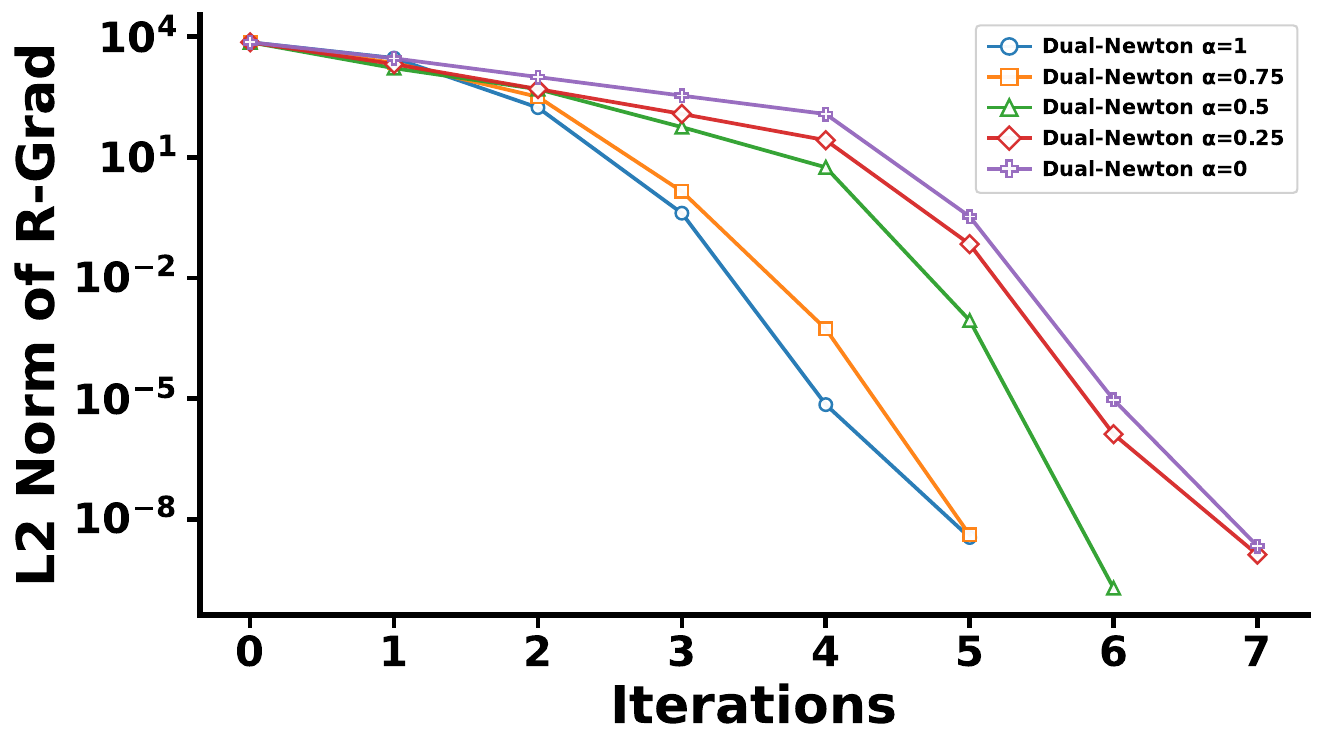}
  \caption{(Left): The evolution of the Riemannian gradient norm versus iteration for each method. (Right): A magnified view for first seven steps of the dual Riemannian Newton method from the left panel.}
  \label{fig:beta_mix}
\end{figure}

\begin{table}[t]
\centering
\caption{Total runtime, iteration count, and average per-iteration time for each method (stopped when the Riemannian gradient is less than $ 10^{-8}$) in Experiment 3. The lowest total runtime is underlined.}
\begin{tabular}{lccc}
\toprule
Method & Iterations & Total Time (s) & Avg.\ per Iteration (s) \\
\midrule
Dual-Newton $\alpha=1$     & 5   & 0.170 & 0.0284 \\
Dual-Newton $\alpha=0.75$  & 5   & \underline{0.146} & 0.0243 \\
Dual-Newton $\alpha=0.5$   & 6   & 0.165 & 0.0236 \\
Dual-Newton $\alpha=0.25$  & 7   & 0.184 & 0.0230 \\
Dual-Newton $\alpha=0$     & 7   & 0.185 & 0.0231 \\
Natural Gradient                         & 45  & 0.410 & 0.0091 \\
Adam                        & 2288 & 6.891 & 0.0030 \\
\bottomrule
\end{tabular}
\label{tab:runtime_beta}
\end{table}

%\clearpage

\section{Discussion}
In this section, we clarify how the choice of dual affine connections influences the descent property of the Newton direction through its connection to geodesic strict convexity. 
To this end, we take $\alpha$-connections and discuss the choice of $\alpha$.
The key fact is that for a given objective function, there may exist a certain value of \(\alpha\) with which the function exhibits \emph{geodesic strict convexity}, which guarantees that each Newton step becomes a descent step and leads to stable convergence. In contrast, for other values of \(\alpha\), this convexity property may not hold, and the method may lose its descent guarantee. 
To make this relationship precise, we first introduce the definition of descent direction and geodesic strict convexity, and then analyze the positive definiteness of the matrix \(\mathbf{G}(\xi)\mathbf{H}^{(-\alpha)}(\xi)^{\top}\).

First, we introduce the definition of the descent direction.
\begin{definition}
Let $(\mathcal{M},g)$ be a Riemannian manifold and $f: \mathcal{M}\to\mathbb R$ be differentiable.
A nonzero tangent vector $X_p\in T_p\mathcal M$ is a descent direction for $f$ at $p\in \mathcal{M}$ if
\[
\big\langle \operatorname{grad} f(p),\, X_p \big\rangle \; < \; 0.
\]
Equivalently, $X_p$ is a non-ascent direction if
$\big\langle \operatorname{grad} f(p),\, X_p \big\rangle \le 0$.
\end{definition}
Given a suitably chosen \( \alpha \), the Newton direction can be guaranteed to serve as a descent direction for a particular objective function \( f \) at each iteration. Nevertheless, such a choice of \( \alpha \) may not exist for any smooth functions \( f \). We choose the retraction \( R^{(\alpha)} \) associated with the connection \( \nabla^{(\alpha)} \), and define the Newton direction as
$X_p = -\operatorname{Hess}^{(-\alpha)} f(p)^{-1}(\operatorname{grad} f(p)),$
which is the solution to the Hessian equation~\eqref{duality show}. Subsequently, under a local coordinate chart \( (\varphi, \xi) \), the inner product of  $\operatorname{grad}f(p)$ and \( X_p \) can be written as
\[
\left\langle \operatorname{grad} f(p),\, X_p \right\rangle 
= -\boldsymbol{\nabla} f^{\top} \left( \mathbf{G}(\xi) \, \mathbf{H}^{(-\alpha)}(\xi)^{\top} \right)^{-1} \boldsymbol{\nabla} f,
\]
where \( \boldsymbol{\nabla} f = \left( \frac{\partial (f \circ \varphi^{-1})}{\partial \xi_1}, \ldots, \frac{\partial (f \circ \varphi^{-1})}{\partial \xi_n} \right)^{\top} \) is the gradient vector in local coordinates.
If the matrix \( \mathbf{G}(\xi) \, \mathbf{H}^{(-\alpha)}(\xi)^{\top} \) is symmetric positive definite (SPD) at every point \( \xi \), then the Newton method is guaranteed to produce a descent direction. Note that
$\mathbf{H}^{(-\alpha)}(\xi) \, \mathbf{G}(\xi) 
= \mathbf{G}(\xi) \, \mathbf{H}^{(-\alpha)}(\xi)^{\top},$
which implies that the product is symmetric. Furthermore, we will show that choosing the $\alpha$-value for the matrix  
\( \mathbf{G}(\xi) \, \mathbf{H}^{(-\alpha)}(\xi)^{\top} \) to be SPD is closely related to the notion of \emph{geodesic strict convexity}, analogous to the classical Riemannian geometry setting equipped with the Levi-Civita connection. We clarify this relationship through the following definition.
\begin{definition}[$\alpha$-geodesically strictly]
Let $(\mathcal{M}, g, \nabla^{(\alpha)}, \nabla^{(-\alpha)})$ be a smooth manifold and $f : \mathcal{M} \to \mathbb{R}$ be a twice-differentiable function. The $\alpha$-geodesically strictly convex can be defined in the following two ways:
\begin{enumerate}
    \item The function $f$ is $\alpha$-geodesically strictly convex in a neighborhood of $p \in \mathcal{M}$; that is, for every  geodesic with $\nabla^{(\alpha)}$, $\gamma : [-\varepsilon, \varepsilon] \to \mathcal{M}$ with $\gamma^{(\alpha)}(0) = p$, $\dot{\gamma}^{(\alpha)}(0) = X_p$, we have
    \[
    \frac{d^2}{dt^2} f(\gamma^{(\alpha)}(t))\bigg|_{t=0} > 0.
    \]
    \item The Hessian with $\nabla^{(-\alpha)}$ of $f$ at $p$ is positive definite, i.e., for every nonzero tangent vector $X_p \in T_p\mathcal{M}$,
\[
\langle \mathrm{Hess}^{(-\alpha)} f(p)[X_p],\; X_p \rangle > 0.
\]
Equivalently, in a local coordinate chart $(\varphi, \xi)$ with $\xi^p = \varphi(p)$, 
the tangent vector can be expressed as
\[
X_p = \sum_{k=1}^{n} \beta_k(\xi^p) \frac{\partial}{\partial \xi_k}\bigg|_{\xi^p},
\]
and the above condition can be written in matrix form as
\[
\boldsymbol{\beta}(\xi^p)^\top \bigl[\mathbf{G}(\xi^p)\mathbf{H}^{(-\alpha)}(\xi^p)^{\top}\bigr] \boldsymbol{\beta}(\xi^p) > 0,
\]
where $\boldsymbol{\beta}(\xi^p) = (\beta_1(\xi^p), \ldots, \beta_n(\xi^p))^\top$ is the coordinate representation of $X_p$.

\end{enumerate}
\end{definition}

Therefore, if we choose a value of \( \alpha \) such that the function \( f \) is \emph{geodesically strictly convex}, we can ensure that the Newton direction is a \emph{descent direction} at every iteration. 

\begin{figure}[t] 
  \centering
  \includegraphics[width=0.65\linewidth]{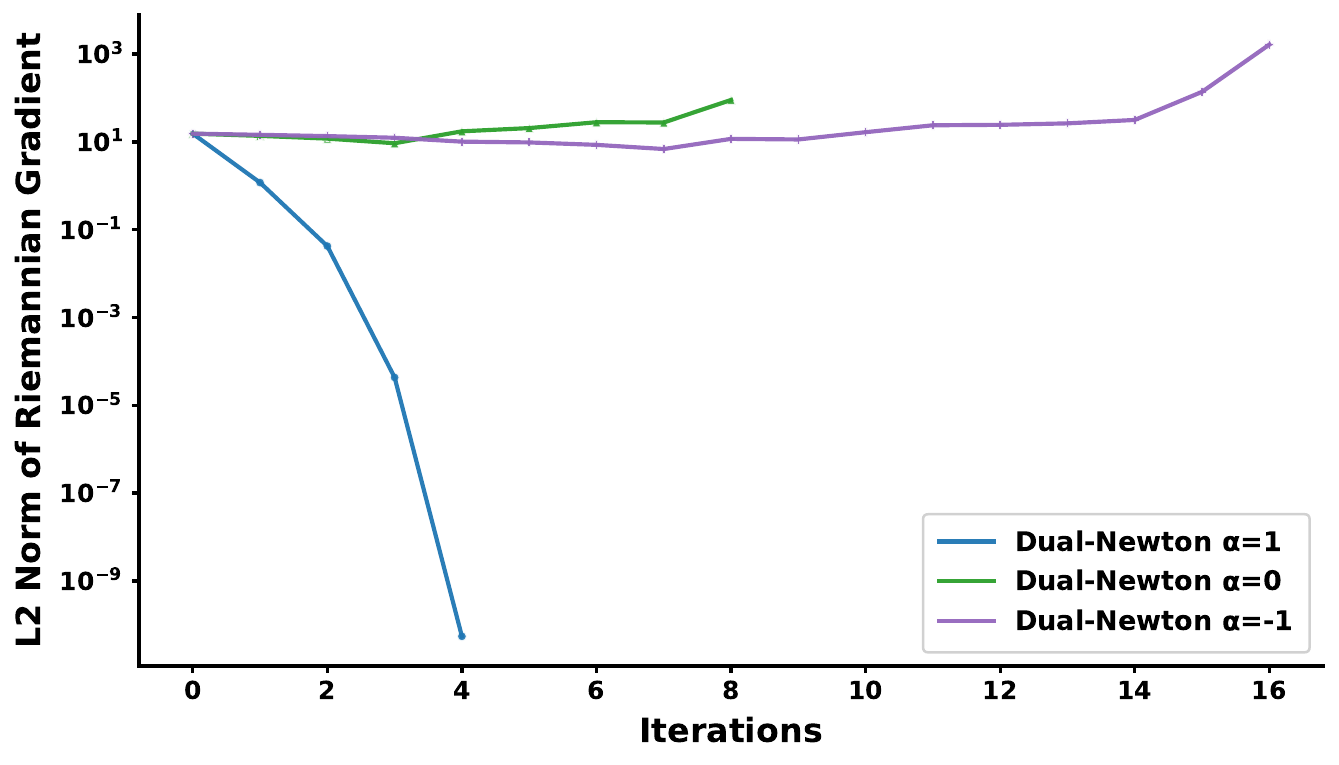} 
  \caption{The evolution of the Riemannian gradient norm versus iteration for each $\alpha$ connection. It demonstrates that every step is a descent step when $\alpha=1$.}
  \label{fig:non_convergence}
\end{figure}

For example, consider the objective function used in our first experiment:  
\[
f = D_{KL}(\hat{\mathcal{P}}, \mathcal{P}) 
+ \lambda_1 \sum_i (\theta^{i}_{\mathcal{P}})^2 
+ \lambda_2 \sum_{i<j} (\theta^{ij}_{\mathcal{P}})^2,
\]
where \( \lambda_1 > 0 \) and \( \lambda_2 > 0 \).  
According to Theorem~\ref{euc_hession}, if we choose \( \alpha = 1 \), and work with the geodesic \( \gamma^{(1)} \) associated with the connection \( \nabla^{(1)} \), then the following equality holds:
\[
({\mathbf{G}}(\theta) \, \mathbf{H}^{(-1)}(\theta)^{\top})_{ij}
= \frac{\partial^2 (f \circ \varphi_\theta^{-1})}{\partial \theta_i \partial \theta_j},
\]
which implies that
\[
\mathbf{G}(\theta) \, \mathbf{H}^{(-1)}(\theta)^{\top} \succ 0.
\]
Thus, the function \( f \) is 1-geodesically strictly convex.  
However, for other values \( \alpha \neq 1 \), the function \( f \) may no longer be geodesically strictly convex, meaning that the matrix \( \mathbf{G}(\theta) \, \mathbf{H}^{(-\alpha)}(\theta)^{\top} \) is not guaranteed to be SPD at all points in the manifold. For example, when the initial point is far from the optimum, a line-search strategy (e.g., enforcing the strong Wolfe conditions) can still ensure that Newton’s method with \( \alpha = 1 \) produces a descent step at every iteration. In contrast, for \( \alpha = 0 \) and \( \alpha = -1 \), the Newton directions may not be descent directions, since \( \mathbf{G}(\theta) \, \mathbf{H}^{(-\alpha)}(\theta)^{\top} \) may become non-positive definite or even non-invertible (e.g., for $(\lambda_1, \lambda_2, n) = (0.3, 0.8, 3)$) as shown in Figure~\ref{fig:non_convergence}.

\section{Conclusion}
We have introduced the \emph{dual Riemannian Newton method} for optimization on statistical manifolds endowed with dual affine connections \((\nabla,\nabla^{*})\). The method explicitly leverages dual geometry: under a \(\nabla\)-based retraction, the associated Newton equation is solved with respect to \(\nabla^{*}\).
On dually flat manifolds with affine coordinates, our proposed dual Riemannian Newton method reduces exactly to the Euclidean Newton step (a Hessian solve using the second derivatives of the objective function). For canonical \(\nabla\)-projection problems in information geometry, updates of the dual Riemannian Newton method coincide with those of the natural gradient method, clarifying when first-order and second-order geometry agree. We have established local quadratic convergence and observed consistent empirical improvements over first-order baselines in our numerical experiments. 
\section{Appendix}
\label{appendix}

In this appendix, we present the preliminaries of information geometry that were not included in the main text.

\begin{definition}[Affine Connection]\label{def:affine-connection}
Let \( \mathcal{M} \) be a smooth manifold and \( \mathfrak{X}(\mathcal{M}) \) be the set of all smooth vector fields on \( \mathcal{M} \).
An \emph{affine connection} on \( \mathcal{M} \) is a map
\[
\nabla: \mathfrak{X}(\mathcal{M}) \times \mathfrak{X}(\mathcal{M}) \to \mathfrak{X}(\mathcal{M}),
\quad (X, Y) \mapsto \nabla_X Y
\]
that satisfies the following properties for all \( X, Y, Z \in \mathfrak{X}(\mathcal{M}) \),  
\( f_1, f_2 \in C^\infty(\mathcal{M}) \), and \( a, b \in \mathbb{R} \):
\begin{enumerate}[label=\textnormal{(\arabic*)}, itemsep=0.4ex, topsep=0.6ex, leftmargin=*, align=left]
  \item $\nabla_{f_1 X + f_2 Y}\, Z = f_1\,\nabla_X Z + f_2\,\nabla_Y Z$,
  \item $\nabla_X ( a Y + b Z ) = a\,\nabla_X Y + b\,\nabla_X Z$,
  \item $\nabla_X (f_1\,Y) = (X f_1)\,Y + f_1\,\nabla_X Y$.
\end{enumerate}
\end{definition}

\begin{definition}[Differential of a Smooth Map Between Manifolds]\label{def:manifold-map-differential}
Let \( F: \mathcal{M} \to \mathcal{N} \) be a smooth map between smooth manifolds.
The \emph{differential} of \( F \) at \( p \in \mathcal{M} \), denoted by \( DF(p) \), is the linear map
\[
DF(p): T_p \mathcal{M} \to T_{F(p)} \mathcal{N},
\]
defined by
\[
DF(p)[X_p](f) = \xi(f \circ F)
\quad \text{for all } X_p \in T_p \mathcal{M},\; f \in \mathcal{C}^\infty(\mathcal{N}),
\]
where the right-hand side denotes the directional derivative of the composite function \( f \circ F \) at \( p \) in the direction \( X_p \), and $T_p \mathcal{M}$ is the tangent space of $\mathcal{M}$ at $p$.
\end{definition}

\begin{definition}[Covariant Derivative]\label{def:covariant-derivative}
Let $\mathcal{M}$ be a smooth manifold with an affine connection $\nabla$, and let $\gamma: I \to \mathcal{M}$ be a smooth curve (with $I \subset \mathbb{R}$ an interval).  
An operator
\[
\frac{D}{dt} : \mathfrak{X}(\gamma) \to \mathfrak{X}(\gamma)
\]
is called the \emph{covariant derivative along $\gamma$} if, for any $Y,Z \in \mathfrak{X}(\gamma)$, $a,b \in \mathbb{R}$, and $f \in C^\infty(I)$, it satisfies all the following conditions:
\begin{enumerate}[itemsep=0.8ex, topsep=0.6ex, parsep=0pt]
  \item[\textnormal{(1)}]
  $\displaystyle \frac{D}{dt}(aY + bZ)=a\frac{D}{dt}Y + b\frac{D}{dt}Z$,
  \item[\textnormal{(2)}]
  $\displaystyle \frac{D}{dt}(f\,Z)=\frac{df}{dt}\,Z + f\frac{D}{dt}Z$,
  \item[\textnormal{(3)}]
  $\displaystyle \frac{D}{dt}(V\circ\gamma)(t)=\nabla_{\dot\gamma(t)}V
  \quad\text{for all } V \in \mathfrak{X}(\mathcal{M}).$
\end{enumerate}
\end{definition}

%We present a concrete example of a Riemannian manifold \( (\mathcal{M}, g) \) endowed with a pair of dual affine connections \( (\nabla, \nabla^*) \). This structure arises naturally in information geometry and provides a geometric foundation for statistical inference and optimization.

%\appendices
%\section{Proof of the First Zonklar Equation}
%Appendix one text goes here.

% you can choose not to have a title for an appendix
% if you want by leaving the argument blank
%\section{}
%Appendix two text goes here.

% use section* for acknowledgment

\section*{Acknowledgment}
This work was supported by JST, CREST Grant Number JPMJCR22D3, including AIP challenge program, Japan,
and by Japan Society for the Promotion of Science (JSPS) Grant Number 23K11024.

% Can use something like this to put references on a page
% by themselves when using endfloat and the captionsoff option.
\ifCLASSOPTIONcaptionsoff
  \newpage
\fi

% trigger a \newpage just before the given reference
% number - used to balance the columns on the last page
% adjust value as needed - may need to be readjusted if
% the document is modified later
%\IEEEtriggeratref{8}
% The "triggered" command can be changed if desired:
%\IEEEtriggercmd{\enlargethispage{-5in}}

% references section

% can use a bibliography generated by BibTeX as a .bbl file
% BibTeX documentation can be easily obtained at:
% http://mirror.ctan.org/biblio/bibtex/contrib/doc/
% The IEEEtran BibTeX style support page is at:
% http://www.michaelshell.org/tex/ieeetran/bibtex/
%\bibliographystyle{IEEEtran}
% argument is your BibTeX string definitions and bibliography database(s)
%\bibliography{IEEEabrv,../bib/paper}
%
% <OR> manually copy in the resultant .bbl file
% set second argument of \begin to the number of references
% (used to reserve space for the reference number labels box)

\bibliographystyle{IEEEtran}
\bibliography{references}

\begin{comment}

\end{comment}

% biography section
% 
% If you have an EPS/PDF photo (graphicx package needed) extra braces are
% needed around the contents of the optional argument to biography to prevent
% the LaTeX parser from getting confused when it sees the complicated
% \includegraphics command within an optional argument. (You could create
% your own custom macro containing the \includegraphics command to make things
% simpler here.)
%\begin{IEEEbiography}[{\includegraphics[width=1in,height=1.25in,clip,keepaspectratio]{mshell}}]{Michael Shell}
% or if you just want to reserve a space for a photo:

% You can push biographies down or up by placing
% a \vfill before or after them. The appropriate
% use of \vfill depends on what kind of text is
% on the last page and whether or not the columns
% are being equalized.

%\vfill

% Can be used to pull up biographies so that the bottom of the last one
% is flush with the other column.
%\enlargethispage{-5in}

% that's all folks
\end{document}